\documentclass[11pt]{article}
\usepackage{rotating,amsmath,amssymb,amsfonts,amsthm,longtable}
\usepackage{epsfig,mathrsfs,natbib,color,graphics,bm,dsfont,subfigure}
\usepackage{multirow}
\def\boxit#1{\vbox{\hrule\hbox{\vrule\kern6pt \vbox{\kern6pt#1\kern5pt}
\kern6pt\vrule}\hrule}}

\setlength{\oddsidemargin}{-0.3in}
\setlength{\evensidemargin}{0.0in}
\setlength{\topmargin}{-0.4in}  % 0.3 -0.1in
\setlength{\textheight}{10.0in}  % 9.1 9.3in
\setlength{\textwidth}{7.05in}
\setlength{\parindent}{0.25in}
\setlength{\headsep}{0in}

\newcommand{\bx}{{\boldsymbol x}}
\newcommand{\bz}{{\boldsymbol z}}
\newcommand{\bw}{{\boldsymbol w}}

\newcommand{\bC}{{\boldsymbol C}}

\newcommand{\bR}{{\boldsymbol R}}
\newcommand{\bs}{{\boldsymbol s}}
\newcommand{\bS}{{\boldsymbol S}}
\newcommand{\bX}{{\boldsymbol X}}
\newcommand{\bY}{{\boldsymbol Y}}

\newcommand{\tbx}{\tilde{\boldsymbol{x}}}
\newcommand{\bXmis}{{\boldsymbol X}^{{\rm mis}}}
\newcommand{\bXobs}{{\boldsymbol X}^{{\rm obs}}}
\newcommand{\tbXmis}{{\tilde{\boldsymbol X}}^{{\rm  mis}}}
\newcommand{\tbxmis}{{\tilde{\boldsymbol x}}^{{\rm  mis}}}

\newcommand{\txmis}{{\tilde{x}}^{{\rm  mis}}}
\newcommand{\Xmis}{X^{{\rm  mis}}}
\newcommand{\Xobs}{X^{{\rm  obs}}}
\newcommand{\bxmis}{{\boldsymbol x}^{{\rm  mis}}}
\newcommand{\bxobs}{{\boldsymbol x}^{{\rm  obs}}}

\newcommand{\xobs}{x^{{\rm  obs}}}
\newcommand{\mN}{\mathbb{N}}
\newcommand{\mR}{\mathbb{R}}

\newcommand{\mX}{{\cal X}}

\newcommand{\tM}{M}

\newcommand{\bbeta}{{\boldsymbol \beta}}
\newcommand{\btheta}{{\boldsymbol \theta}}

\newcommand{\bepsilon}{{\boldsymbol \epsilon}}
\newcommand{\bmu}{{\boldsymbol \mu}}

\newcommand{\bSigma}{{\boldsymbol \Sigma}}
\newcommand{\blambda}{{\boldsymbol \lambda}}
\newcommand{\bgamma}{{\boldsymbol \gamma}}

\newtheorem{theorem}{Theorem}[]
\newtheorem{lemma}{Lemma}[]

\newtheorem{corollary}{Corollary}[]

\pagestyle{plain}
%%%%%%%%%%%% Color setting  %%%%%%%%%%%%%%%%%%%%%%%%%%%%%

%%%%%%%%%%%% Journal definitions %%%%%%%%%%%%%%%%%%%%%%%%%
\def\ANNALS{{\it Annals of Statistics}}

\def\JRSSB{{\it Journal of the Royal Statistical Society, Series B}}

\def\JASA{{\it Journal of the American Statistical Association}}

\def\JASA{{\it Journal of the American Statistical Association}}
\def\ANNALS{{\it Annals of Statistics}}

\def\JRSSB{{\it Journal of the Royal Statistical Society, Series B}}

\def\ANNALS{{\it Annals of Statistics}}

\def\JRSSB{{\it Journal of the Royal Statistical Society, Series B}}

\def\JASA{{\it Journal of the American Statistical Association}}

\def\PNAS{{\it Proceedings of the National Academy of Sciences USA}}

\def\JMLR{{\it Journal of Machine Learning Research}}
%%%%%%%%%%%%%%%%%%%%%%%%%%%%%%%%%%%%%%%%%%%%%%%%%%%%%%%%%%%%%%%%%%%%
\begin{document}
\bibliographystyle{asa}

\title{An Imputation-Consistency Algorithm for High-Dimensional Missing Data Problems and Beyond}

\author{} 

 \author{Faming Liang, Bochao Jia, Jingnan Xue, Qizhai Li, Ye Luo\thanks{
  F. Liang is with Department of Statistics, Purdue University, West Lafayette, IN 47907,
  email: fmliang@purdue.edu; B. Jia is with Department of Biostatistics, 
  University of Florida, Gainesville, FL 32611;
  J. Xue is with Department of Statistics, 
  Texas A\&M University, College Station, TX 77843. 
  Q. Li is with Academy of Mathematics and Systems Science,
  Chinese Academy of Sciences, Beijing 100864, China.  
  Y. Luo is with Department of Economics, University of Florida, Gainesville, FL 32611.
  }
  }

\maketitle

\begin{abstract}

 Missing data are frequently encountered in high-dimensional problems, but they are usually
 difficult to deal with using standard algorithms, such as the expectation-maximization (EM) algorithm and its variants. 
 To tackle this difficulty, some problem-specific algorithms  have been developed 
 in the literature, but there still lacks a general algorithm. This work is to fill the gap: 
 we propose a general algorithm for high-dimensional missing data problems. The proposed 
 algorithm works by iterating between an imputation step and a consistency step. 
 At the imputation step, the missing data are imputed conditional on the observed data 
 and the current estimate of parameters; and at the consistency step, a consistent estimate 
 is found for the minimizer of a Kullback-Leibler divergence defined on the pseudo-complete data. 
 For high dimensional problems, the consistent estimate can 
 be found under sparsity constraints.
 The consistency of the averaged estimate for the true parameter can be established 
 under quite general conditions. The proposed algorithm is illustrated using 
 high-dimensional Gaussian graphical models, high-dimensional variable selection, 
 and a random coefficient model. 
% The proposed algorithm has strong implications for big 
% data computing: Based on it, we propose a general strategy to improve Bayesian computation 
% for big data problems. The proposed algorithm also facilitates data integration 
% from multiple sources, which plays an important role in big data analysis. 
 
\vspace{1mm}

\underline{Keywords:}
 EM Algorithm; Gaussian Graphical Model; Gibbs Sampler; Random Coefficient Model; Variable Selection.  

\end{abstract}

% \newpage 

{\centering \section{Introduction}}

 Missing data are frequently encountered in both low and high-dimensional data, 
 where low and high refer to that the number of variables is smaller or larger than the sample size, respectively.  
 For example, the microarray data is usually considered as high-dimensional, where the number of genes 
 can be much larger than the number of samples. 
 Missing values can appear in microarray data due to various factors such as scratches on slides,
 spotting problems, experimental errors, etc. In some microarray experiments,
 missing values can occur for more than 90\% of the genes (Ouyang et al., 2004). 
 Simply deleting the samples or genes for which missing values occur can lead to  
 a significant loss of information of the data. How to deal with missing data 
 has been a long-standing problem in statistics.

 For low-dimensional problems, the missing data can be dealt with using  
 the EM algorithm (Dempster et al., 1977) or its variants. 
 Let $\bXobs=(\Xobs_1,\Xobs_2,\ldots,\Xobs_n)$ denote the 
 observed incomplete data, where $n$ denotes the sample size. Let $\bXmis=(\Xmis_1, \Xmis_2,\ldots, \Xmis_n)$ 
 denote the missing data, and let $\bX=(\bXobs,\bXmis)$ denote the complete data. Let $\btheta$ denote 
 the vector of unknown parameters, and let $f(\bX|\btheta)$ denote the likelihood function 
 of the complete data. Then the maximum likelihood estimate (MLE) of $\btheta$ can be determined by 
 maximizing the marginal likelihood of the observed data, 
 \[
  f(\bXobs|\btheta)=\int f(\bXobs,\bxmis|\btheta) h(\bxmis|\btheta, \bXobs) d\bxmis, 
 \]
 where $h(\bxmis|\btheta,\bXobs)$ denotes the predictive density of the missing data.  
 The EM algorithm seeks to maximize the marginal likelihood function  by iterating between  
 the following two steps:
 \begin{itemize} 
 \item {\bf E-step}: {\it Calculate the expected value of the log-likelihood function with respect to 
  the predictive distribution of the missing data given the current estimate $\btheta^{(t)}$, i.e., }
  \[
  Q(\btheta|\btheta^{(t)})=\int \log f(\bXobs,\bxmis|\btheta) h(\bxmis|\btheta^{(t)},\bXobs) d\bxmis.
  \]
  \item {\bf M-step}: {\it Find a value of $\btheta$ that maximizes the quantity  $Q(\btheta|\btheta^{(t)})$, i.e., set }
   \[
   \btheta^{(t+1)}=\arg\max_{\btheta} Q(\btheta|\btheta^{(t)}).
   \]
 \end{itemize} 
 Dempster et al. (1977) showed that the marginal likelihood value increases with each iteration and,
 under fairly general conditions, it converges to a local or global maximum of the 
 marginal likelihood. A rigorous study for the convergence is given by Wu (1983). 

 Both the $E$ and $M$-steps of the algorithm can be rather complicated or even intractable.  
 Meng and Rubin (1993) found that in many cases, the $M$-step is relatively simple when conditioned on 
 some function of the parameters under estimation. Motivated by this observation, they introduced 
 the expectation-conditional maximization (ECM) algorithm, which is to replace the M-step by a number 
 of computationally simpler conditional maximization steps. Later, the EM algorithm was 
 further speeded up by some other variants, such as the ECME algorithm (Liu and Rubin, 1994, 
 He and Liu, 2012) and the PX-EM algorithm (Liu et al., 1998).  
 When the E-step is analytically intractable, Wei and Tanner (1990) introduced the 
 Monte Carlo EM algorithm, which is to simulate multiple missing values from the 
 predictive distribution $h(\bxmis|\btheta^{(t)},\bXobs)$ at the $(t+1)th$ iteration, 
 and then maximize the approximate conditional expectation of the complete-data log-likelihood 
 \[
  \widehat{Q}(\btheta|\btheta^{(t)}) = \frac{1}{m} \sum_{j=1}^m \log f(\bXobs,\bXmis_j|\btheta),
 \]
  which converges to $ Q(\btheta|\btheta^{(t)})$ as $m\to \infty$, where 
  $\bXmis_1,\ldots, \bXmis_m$ denote the missing values simulated from $h(\bxmis|\btheta^{(t)}$, $\bXobs)$. 
  When the dimension of $\bXmis$ is 
  high, the Monte Carlo approximation can be rather expensive. 
  An alternative algorithm to deal with the intractable E-step is the 
  stochastic EM (SEM) algorithm (Celeux and Diebolt, 1985). In this algorithm, the E-step is replaced 
  by an imputation step, where the missing data are imputed with plausible values conditioned on the observed 
  data and the current parameter estimate. At the M-step, the parameters are estimated by    
  maximizing the likelihood function of the pseudo-complete data. 
  Unlike the deterministic EM algorithm, the imputation-step 
  and M-step of the SEM algorithm generate a Markov chain which converges to a stationary distribution 
  whose mean is close to the MLE and whose variance reflects the information loss due to 
  the missing data (Nielsen, 2000). 

  Although EM and its variants work well for low-dimensional problems,   
  see McLachlan and Krishnan (2008) for an overview, 
  they essentially fail for high-dimensional problems. 
  For the latter, the MLE 
  can be non-unique or inconsistent. To address this issue, 
  some problem-specific algorithms have been proposed, see e.g., misgLasso (St\"adler and B\"uhlmann, 2012),
  misPALasso (St\"adler et al., 2014), and matrix completion algorithms (Cai et al. 2010;
  Mazumder et al., 2010).  
  MisgLasso is specifically designed for estimating Gaussian graphical models
  in presence of missing data. Similar to misgLasso, MisPALasso also deals with
  multivariate Gaussian data in presence of missing data.
  The matrix completion algorithm deals with large incomplete matrices, which is to learn a low-rank 
  approximation for a large-scale matrix with missing entries.  
  However, there still lacks a general algorithm for high-dimensional missing data problems. 

  This work is to fill the gap: we propose a general algorithm for 
  dealing with high-dimensional missing data problems. 
  The proposed algorithm consists of two steps, an imputation step and a consistency step, and 
  is therefore called an imputation-consistency (IC) algorithm.  
  The imputation step is to impute the missing data with plausible values 
  conditioned on the observed data and the current estimate of the parameters.  
  The consistency step is to find a consistent estimate for the minimizer of a 
  Kullback-Leibler divergence defined on the  
  pseudo-complete data. For high dimensional problems, the consistent estimate is suggested
  to be found under sparsity constraints. 
  Like the SEM algorithm, the IC algorithm generates a Markov chain which converges to a 
  stationary distribution. Under mild conditions, we show that the mean of the stationary
  distribution converges to the true value of the parameters in probability as the sample 
  size becomes large. % The IC algorithm is general, which, in principle, can be 
  % applied to any missing data problems, regardless the dimension and distribution of the data.  
  For low-dimensional problems, the SEM algorithm can be viewed as a special case of the IC algorithm.
  The IC algorithm has strong implications for big data computing: Based on it,
  we propose a general strategy to improve Bayesian computation for big data. 
  The IC algorithm also facilitates data integration from multiple sources,
  which plays an important role in big data analysis. 
  A R package accompanying this paper is currently available at 
  {\it http://www.stat.purdue.edu/$\sim$fmliang} and later will be 
  distributed to the public via CRAN upon the acceptance of the paper.

  The remainder of this paper is organized as follows. Section 2 describes the IC 
  algorithm with the theoretical development deferred to the 
  Appendix. Section 3 applies the IC algorithm to  
  high-dimensional Gaussian graphical models.  Section 4 applies the IC algorithm
  to high-dimensional variable selection. Section 5 applies
  the IC algorithm to a random coefficient model and discusses its potential use 
  for big data problems.
  Section 6 concludes the paper with a brief discussion. 
 % In particular, we discuss how 
 %  the IC algorithm facilitates data integration from multiple sources.
 
{\centering \section{The Imputation-Consistency Algorithm}} 
 
\subsection{The IC Algorithm} 
 
 Let $X_{1},\dots,X_{n}$ denote a random sample drawn from the distribution $f(x|\btheta)$ (also denoted by 
 $f_{\btheta}(x)$ depending on convenience),  where $\btheta$ is a vector of parameters.   
 Let $X_i=(\Xobs_i,\Xmis_i)$, $i=1,\ldots, n$, where $\Xobs_i$ is observed and $\Xmis_i$ is missed. 
 Let $\bX=(X_1,\ldots, X_n)$, $\bXobs=(\Xobs_1,\ldots,\Xobs_n)$ and $\bXmis=(\Xmis_1,\ldots, \Xmis_n)$. 
 To indicate the dependence of the dimension of $\btheta$ on
 the sample size $n$, we also write $\btheta$ as $\btheta_n$ and denote by $\btheta_n^{(t)}$ 
 the estimate of $\btheta$ obtained at the $t^{th}$ iteration of the IC algorithm. 
 The IC algorithm works by 
 starting with an initial guess $\btheta_n^{(0)}$ and then iterating between 
 the imputation and consistency steps: 
 
\begin{itemize}
\item {\bf I-step}: {\it Draw $\tbXmis$ from the predictive distribution 
  $h(\bxmis|\bXobs, \btheta_n^{(t)})$ given $\bXobs$ and the current estimate $\btheta_n^{(t)}$.}  

\item {\bf C-step}: {\it Based on the pseudo-complete data $\tilde{\bX}=(\bXobs,\tbXmis)$, find 
   an updated estimate $\btheta_n^{(t+1)}$ which forms a consistent estimate of 
   \begin{equation} \label{Cequation}
   \btheta_*^{(t)}=\arg\max_{\btheta} E_{\btheta_n^{(t)}} \log f_{\btheta}(\tilde{\bx}),
   \end{equation} 
   where $E_{\btheta_n^{(t)}} \log f_{\btheta}(\tilde{\bx})
   = \int \log(f(\bxobs, \tbxmis|\btheta))f(\bxobs|\btheta^*) h(\tbxmis|\bxobs,\btheta_n^{(t)}) d\bxobs d \tbxmis$, 
   $\btheta^*$ denotes the true value of the parameters, and $f(\bxobs|\btheta^*)$ 
   denotes the marginal density function of $\bxobs$. }
 \end{itemize}
 
 To find a consistent estimate of $\btheta_*^{(t)}$, which is the minimizer of the Kullback-Leibler 
 divergence from $f(\tilde{\bx}|\btheta)$ to the joint density $f(\bxobs|\btheta^*) h(\tbxmis|\bxobs,\btheta_n^{(t)})$, 
 sparsity constraints can be imposed on $\btheta$ for high-dimensional problems. 
 In general, we have two ways. The first way is via regularization methods. 
 Corollary \ref{cor0A} in the Appendix shows that the regularization methods 
 can be employed here to find consistent estimates for $\btheta_*^{(t)}$'s  
 with appropriate penalty functions.  
 For regularization methods, we recommend to use the same penalty functions as 
 they would use if there are no missing data.
 The second way is via sure screening-based methods, which are to first reduce the 
 space of $\btheta_*^{(t)}$ to a low-dimensional subspace and then find a 
 consistent estimate of $\btheta_*^{(t)}$ in the low-dimensional subspace using a 
 conventional statistical method, such as maximum likelihood, moment estimation or even regularization. 
 In the Appendix, we point out that the sure screening-based methods can be viewed as 
 a subclass of regularization methods, for which the solutions in the low-dimensional 
 subspace receives a zero penalty and those outside the subspace receives a penalty of $\infty$. 
 Such a binary-type penalty function satisfies the condition (C1) we imposed on  
 regularization methods.  Other than the regularization and sure screening-based methods, 
 we justify in Corollary \ref{cor0} and Remark (R3) the use of general 
 consistent estimation procedures in the IC algorithm, provided that 
 the resulting estimates are accurate enough at each iteration $t$. 
 For low-dimensional problems, the consistent estimator of $\btheta_*^{(t)}$ can be obtained by
 maximizing the pseudo-complete likelihood function. In this sense,
 the SEM algorithm can be viewed as a special case of the IC algorithm.

 It is easy to see that by simulating new independent missing values 
 at each iteration, the sequence of estimates, $\{\btheta_n^{(t)} \}$, 
 forms a time-homogeneous Markov chain. Also, the imputed values  
 at different iterations form a Markov chain. 
 The two Markov chains are interleaved and share many properties, such as 
 irreducibility, aperiodicity and ergodicity. Refer to 
 Nielsen (2000) for more discussions on this issue. 
 In Theorem \ref{them1} and Theorem \ref{them2} of the Appendix, we prove that  
 the Markov chain $\{\btheta_n^{(t)} \}$ has a stationary distribution and, furthermore, 
 the mean of the stationary distribution forms a consistent estimate of $\btheta^*$.
 Like for other Markov chains, a good initial value will accelerate 
 the convergence of the simulation.
 There are many different ways to specify initial values for the IC algorithm.   
 In most examples of this paper, we started the 
 simulation with an I-step, where all missing values are filled by the 
 median of the variable. 
 This method is simple and usually works when the missing rate is not high. 
 However, when we perceive that such a constant filling method does not 
 work well, we may start the simulation with a C-step. 
 In this case, the initial estimate $\btheta_n^{(0)}$ may be obtained 
 based on the complete samples (i.e., those without missing information) only. 
 
 We note that many of the assumptions we made for proving the convergence of the IC algorithm 
 are quite regular. For example, we assumed that $\log f_{\btheta}(\tilde{x})$ is 
 a continuous function of $\btheta$ for each $\tilde{x} \in \mX$ and a measurable function 
 of $\tilde{x}$ for each $\btheta$. Since 
  we aim to address the missing data issue for a wide range of problems
  and it is hard to specify the structure of each problem, we incorporate the assumptions
  about the parameters and problem structures into a metric entropy condition, see 
  condition (A2). As discussed in Remark (R1) of
 the Appendix, this condition allows $p$ 
 to grow with $n$ at a polynomial rate $O(n^\gamma)$ for some 
 constant $0<\gamma<\infty$, and allows the number of nonzero elements 
  in $\btheta$ to grow with $n$ at a rate of $O(n^{\alpha})$ for some 
  $0<\alpha<1/2$. These rates seem a little more restrictive than the 
 exponential rate, i.e., $\log(p)=n^{b}$ for some constant $0<b<1$, 
 seeking for in the literature of high-dimensional regression. However, 
 more or less, they are just some technical conditions.  Moreover, our theory 
 is more general and can be applied to many other problems. 
 Note that the metric entropy condition has often been used in studying the minimax
 rate of estimation under the high-dimensional scenario, see e.g. Raskutti et al. (2011).

 Regarding conditions on missing data, we note that 
 the IC algorithm essentially works with any missing
 data mechanism, as long as the predictive distribution  $h(\bxmis|\bXobs, \btheta_n^{(t)})$
 is available, well behaved, and unchanged with the sample size $n$. Our
 current theory rules out the case that the missing data mechanism
 changes as the sample size increases, e.g., the missing rate increases due to increased wear and tear
 on measurement instruments or fatigue among data subjects measured later in the study. 
 Our condition (A3) constrains the behavior of $h(\bxmis|\bXobs, \btheta_n^{(t)})$ via 
 some moment conditions on the log-likelihood function of the pseudo-complete data. 
 It implies that a high missing rate may hurt the performance of the method.  

\subsection{An Extension of the IC Algorithm}
 
 Like the EM algorithm, the IC algorithm is attractive only when the consistent estimate of $\btheta_*^{(t)}$
 can be easily obtained at each C-step. 
 We found that for many problems, similar to the ECM algorithm (Meng and Rubin, 1993), 
 the consistent estimate of $\btheta_*^{(t)}$ can be easily obtained with a number of conditional consistency steps. 
 That is, we can partition the parameter $\btheta$ into a number of blocks and then find 
 the consistent estimator for each block conditioned on the current estimates of other 
 blocks. Note that 
 for many problems, e.g., the examples studied in Sections 4 and 5, the partitioning 
 of $\btheta$ is natural. 
 
 Suppose that $\btheta=(\btheta^{(1)}, \ldots, \btheta^{(k)})$ has been partitioned into 
 $k$ blocks. The imputation-conditional consistency (ICC) algorithm can be described as 
  follows:
 
 \begin{itemize}
 \item {\bf I-step}. Draw $\tbXmis$ from the conditional distribution
  $h(\bxmis|\bXobs, \btheta_n^{(t,1)}, \ldots, \btheta_n^{(t,k)})$ 
  given $\bXobs$ and the current estimate $\btheta_n^{(t)}=(\btheta_n^{(t,1)},\ldots, \btheta_n^{(n,k)})$.

\item {\bf CC-step}. Based on the pseudo-complete data $\tilde{\bX}=(\bXobs,\tbXmis)$, 
   do the following:
   \begin{itemize}
    \item[(1)] Conditioned on $(\btheta_n^{(t,2)}, \ldots, \btheta_n^{(t,k)})$, find $\btheta_n^{(t+1,1)}$ 
     which forms a consistent estimate of 
         \[
          \btheta_*^{(t,1)}=\arg\max_{\btheta^{(t,1)'}} E_{\btheta_n^{(t,1)},\ldots, \btheta_n^{(t,k)}} 
          \log f(\tilde{\bx}| \btheta_n^{(t,1)'}, \btheta_n^{(t,2)}, \ldots, \btheta_n^{(t,k)} ),
         \]
    where the expectation is taken with respect to the joint distribution function of $\tilde{\bx}=(\bxobs,\bxmis)$
    and the subscript of $E$ gives the current estimate of $\btheta$. 

    \item[(2)] Conditioned on $(\btheta_n^{(t+1,1)}, \btheta_n^{(t,3)}, \ldots, \btheta_n^{(t,k)})$, find $\btheta_n^{(t+1,2)}$
     which forms a consistent estimate of
         \[
          \btheta_*^{(t,2)}=\arg\max_{\btheta^{(t,2)'}} E_{\btheta_n^{(t+1,1)}, \btheta_n^{(t,2)}, \btheta_n^{(t,3)},
           \ldots, \btheta_n^{(t,k)}} 
          \log f(\tilde{\bx}| \btheta_n^{(t+1,1)}, \btheta_n^{(t,2)'}, \btheta_n^{(t,3)}, \ldots, \btheta_n^{(t,k)} ).
         \]

    \item[] $\ldots\ldots$
    
    \item[(k)] Conditioned on $(\btheta_n^{(t+1,1)}, \ldots, \btheta_n^{(t,k-1)})$, find $\btheta_n^{(t+1,k)}$
     which forms a consistent estimate of
         \[
          \btheta_*^{(t,k)}=\arg\max_{\btheta^{(t,k)'}} E_{\btheta_n^{(t+1,1)}, 
           \ldots, \btheta_n^{(t+1,k-1)}, \btheta_n^{(t,k)}}
          \log f(\tilde{\bx}| \btheta_n^{(t+1,1)}, \ldots, \btheta_n^{(t+1,k-1)}, \btheta_n^{(t,k)'} ).
         \]
  \end{itemize}
 \end{itemize}
 
 It is easy to see that the sequence $\{(\btheta_n^{(t,1)}, \ldots, \btheta_n^{(t,k)})\}$  
 forms a Markov chain. 
 The convergence of the Markov chain can be studied under similar conditions as the IC algorithm. 
 In Theorem \ref{them3} and Theorem \ref{them4} (see Appendix), we prove that
 the Markov chain $\{\btheta_n^{(t)} \}$ has a stationary distribution and 
 the mean of the stationary distribution forms a consistent estimate of $\btheta^*$.

{\centering \section{Learning High-Dimensional Gaussian Graphical Models in Presence of Missing Data}}
 
 Gaussian graphical models (GGMs) have often been used in learning gene 
 regulatory networks from microarray data, see e.g., 
 Dobra et al. (2004) and Friedman et al. (2008). 
 As mentioned in the Introduction, 
 missing values can appear in microarray data due to many factors. 
 To deal with missing values in microarray data, many imputation methods, 
 such as % $k$-nearest neighbor imputation (Troyanskaya et al., 2001),
 single value decomposition (SVD) imputation (Troyanskaya et al., 2001),
 least-square imputation (Bo et al., 2004), and Bayesian principal component 
 analysis (BPCA) imputation (Oba et al., 2003), have been proposed. 
 Since these methods impute the missing values 
 independent of the models under consideration, they are often ineffective.
 Moreover, the statistical inference based on the ``one-time'' imputed 
 data is potentially biased, because the uncertainty of the missing values cannot 
 be properly accounted for.  
 In this section, we apply the IC algorithm to handle missing values  
 for microarray data. The IC algorithm iteratively impute missing 
 values based on the updated parameter estimate. Therefore, it overcomes 
 the weakness of the ``one-time'' imputation methods, and improves 
 accuracy of statistical inference. 

 Let $\bX=(\bx_1, \ldots, \bx_n)^T$ denote a microarray dataset of $n$ samples and
 $p$ genes, where $\bx_i$  
 is assumed to follow a multivariate Gaussian distribution $N_p(\bmu, \bSigma)$.  
 According to the theory of GGMs, estimation of the GGM is  
 equivalent to identify non-zero elements of the concentration matrix 
 (i.e., the inverse of the covariance matrix $\bSigma$) or to identify 
  non-zero partial correlation coefficients for different pairs of genes.  
 During the recent years, a couple of methods have been proposed to estimate 
 high-dimensional GGMs, e.g., graphical Lasso 
 (Yuan and Lin, 2007; Friedman et al., 2008), node-wise regression (Meinshausen and B\"uhlmann, 2006), 
 and $\psi$-learning (Liang et al., 2015). 
 % Graphical Lasso is to estimate the concentration matrix using the regularization method with a $L_1$-penalty.
 % Node-wise regression is developed based on the relationship between 
 % the partial correlation coefficients and regression coefficients, and it is to 
 % select the non-zero regression coefficients for each gene regressed with all other genes. 
 % The $\psi$-learning algorithm is developed based on an equivalent measure of the partial correlation 
 % coefficient, which is evaluated  with a reduced conditional set of genes and 
 % thus feasible for high-dimensional problems. 
 However, none of the methods can be directly applied in presence of missing data. 

\subsection{The IC Algorithm}

 To apply the IC algorithm to learn GGMs in presence of missing data, 
 we choose the $\psi$-learning algorithm as the consistent estimation procedure used in the C-step.  
 For GGMs, $\btheta$ corresponds to the concentration matrix, which 
 can be uniquely determined from the network structure using the algorithm given in Hastie et al. (2009, p.634). 
 Under mild conditions, Liang et al. (2015) showed that the $\psi$-learning algorithm provides a consistent 
 estimator for Gaussian graphical networks. 
 Refer to the Supplementary Material for a brief review of the 
 algorithm. As mentioned in the Appendix, the $\psi$-learning algorithm  
 belongs to the class of sure-screening-based methods, which are to first
 reduce the dimension of the solution space via correlation screening and then conduct GGM estimation
 via covariance selection (Dempster, 1972) which, by nature, produces a
 maximum likelihood estimate. 
 As mentioned in Section 2.1, such a sure-screening-based method can be used in IC simulations.
 Other than the $\psi$-learning algorithm, 
 node-wise regression and graphical Lasso can also be used, which
 both belong to the class of regularization methods and are 
 consistent in Gaussian graphical network estimation.  

 The Gaussian graphical network specifies the dependence between different 
 genes, according to which the missing values can be imputed. 
 For convenience, we let $A=(a_{jk})$ denote the adjacency matrix of a Gaussian graphical network, where 
 $a_{jk}=1$ if an edge exists between node $j$ and node $k$ and 0 otherwise. 
 For microarray data, a node corresponds to a gene. 
 Let $x_{ij}$ denote a missing entry, and let $\omega(j)=\{k: a_{jk}=1 \}$ denote the 
 neighborhood of node $j$. According to the faithfulness property of GGMs, 
 conditional on the neighboring genes in $\omega(j)$, gene $j$ is independent of all 
 other genes. Therefore, $x_{ij}$ can be imputed conditional on the 
 expression values of the neighboring genes. Mathematically, we have 
\begin{equation}\label{10}
\left(\begin{array}{c} 
x_{ij}\\
\bx_{i\omega}\\
\end{array}\right) \sim N\left(\left(\begin{array}{c} 
 \mu_j \\
 \bmu_{\omega} \\
\end{array}\right),\left(\begin{array}{cc} 
\sigma_j^2 & \bSigma_{j \omega}\\
\bSigma_{j\omega}^T & \bSigma_{\omega\omega}\\
\end{array}\right)\right),
\end{equation}
where $\bx_{i\omega}=\{x_{ik}: k\in \omega(j)\}$, and $\mu_j$, $\bmu_{\omega}$, 
 $\sigma_j^2$, $\bSigma_{j \omega}$ and $\bSigma_{\omega\omega}$ denote the 
 corresponding mean and variance components. 
 The mean and variance of $x_{ij}$ conditional 
 on $\bx_{i\omega}$ is thus given by 
 \begin{equation} \label{condeq1}
% \begin{split} 
  \mu_{ij|\omega}  = \mu_j+\bSigma_{j\omega} \bSigma_{\omega\omega}^{-1} (\bx_{i\omega}-\bmu_{\omega}), \quad
  \sigma_{ij|\omega} = \sigma_j^2- \bSigma_{j\omega} \bSigma_{\omega\omega}^{-1} \bSigma_{j\omega}^T. 
% \end{split}
 \end{equation}
 As shown in Liang et al. (2015), for each gene, the neighborhood size can be 
 upper bounded by $\lceil n/log(n)\rceil$, where $\lceil z\rceil$ 
 denotes the smallest integer not smaller than $z$. Hence, in practice, 
 $\sigma_j^2$, $\bSigma_{j\omega}$ and $\bSigma_{\omega\omega}$ can be directly estimated from 
 the data. Let $\bs_j^2$, $\bS_{j\omega}$, $\bS_{\omega\omega}$, $\bar{x}_j$ and $\bar{\bx}_{\omega}$ 
 denote the respective sample estimates of $\sigma_j^2$, $\bSigma_{j\omega}$,  $\bSigma_{\omega\omega}$,
 $\mu_j$ and $\bmu_{\omega}$. Then, at each iteration, $x_{ij}$ can be imputed by sampling from the 
 distribution
 \begin{equation} \label{imputeq1}
  X_{ij|\omega} \sim N(\bar{x}_j+\bS_{j\omega} \bS_{\omega\omega}^{-1} (\bx_{i\omega}-\bar{\bx}_{\omega}), 
     \bs_j^2- \bS_{j\omega} \bS_{\omega\omega}^{-1} \bS_{j\omega}^T).
 \end{equation} 
 In this way, exact evaluation of the concentration matrix can be skipped. 
 In summary, we have the following algorithm for learning GGMs in presence of missing data: 
 
 {\it
 \begin{itemize} 
 \item (Initialization) Fill each missing entry by the median of 
        the corresponding variable, and then iterates between the C- and I-steps.
 
 \item (C-step) Apply the $\psi$-learning algorithm to learn the structure of the 
       Gaussian graphical network.

 \item (I-step) Impute missing values according to (\ref{imputeq1}) based on the 
        network learned in the C-step. 
 \end{itemize}
 }
 
 This algorithm outputs a series of Gaussian graphical networks. To integrate/average these 
 networks into a single network, we adopt the $\psi$-score averaging 
 approach suggested by Liang et al. (2015). 
 Let $(\psi_{ij}^{(t)})$ denote the 
 $\psi$-scores at iteration $t$, where are obtained 
 from $\psi$-partial correlation coefficients 
 via Fisher's transformation. 
 Let $\bar{\psi}_{ij}=\sum_{t=1}^T \psi_{ij}^{(t)}/T$, $i,j=1,2,\ldots,p$ and $i\ne j$, 
  denote the averaged $\psi$-score for gene $i$ and gene $j$. 
 Then the averaged network can be obtained by applying a multiple hypothesis approach  
 to threshold the averaged $\psi$-scores; if an averaged $\psi$-score is greater than the threshold value, 
 we set the corresponding element of the adjacency matrix to 1 and 0 otherwise.    
 The multiple hypothesis test can be done using the method of Liang and Zhang (2008), 
 which can be viewed as a generalized empirical Bayesian method (Efron, 2004). 
 The significance level of the multiple hypothesis test can be specified in terms of 
 Storey's $q$-value (Storey, 2002). In this paper, we set it to $0.05$.

\subsection{A Simulated Example}

 We consider an autoregressive process of order two with the concentration matrix given by 
 \begin{equation}\label{plugin}
        C_{i,j}=\left\{\begin{array}{ll}
                     0.5,&\textrm{if $\left| j-i \right|=1, i=2,...,(p-1),$}\\
                     0.25,&\textrm{if $\left| j-i \right|=2, i=3,...,(p-2),$}\\
                     1,&\textrm{if $i=j, i=1,...,p,$}\\
                    0,&\textrm{otherwise.}
                \end{array}\right.
\end{equation}
 This example has been used by multiple authors, e.g., Yuan and Lin (2007), Mazumder and Hastie (2012), 
 and Liang et al. (2015) to illustrate different GGM methods. In this paper, we generated multiple datasets 
 with $n=200$ and different values of $p$=100, 200, 300 and 400. 
 For each combination of $(n,p)$, we generated 10 datasets independently; 
 and for each dataset, we randomly deleted 10\% of  
 the observations as missing values. To evaluate the performance of the IC algorithm, the precision-recall curves 
 were drawn by varying the threshold value of $\psi$-scores, where
 the precision is the fraction of true edges 
 among the retrieved edges, and the recall is the fraction of true edges that have been retrieved 
 over the total amount of true edges. 

\begin{figure}[htbp]
\begin{center}
\begin{tabular}{c}
%  (a) $p=100$ & (b) $p=400$ \\
% \epsfig{figure=response/n200p100.eps,height=3.0in,width=2.0in,angle=270} & 
 \epsfig{figure=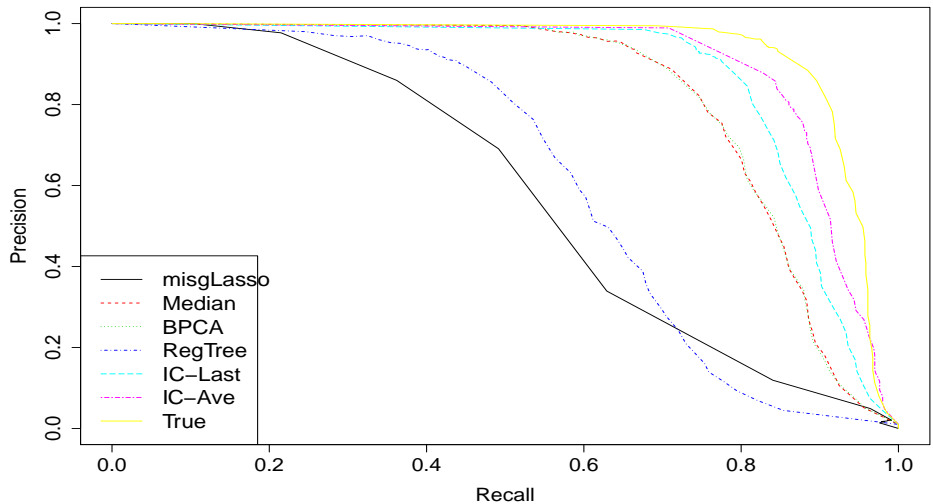,height=5.0in,width=3.0in,angle=270}
\end{tabular}
\caption{
 Precision-recall curves resulted from different imputation methods for one simulated dataset with $p=400$:
``True'' refers to the curve obtained with complete data;
 ``misgLasso'' refers to the curve produced by the misgLasso algorithm;
 ``IC-Ave'' and ``IC-Last'' refer to
 the curves obtained with the $\psi$-scores generated in the last IC iteration and
 averaged over last 20 IC iterations, respectively; and
 ``Median'', ``BPCA'' and ``RegTree'' refer to the curves obtained
 with missing values imputed by the median filling, BPCA and regression tree methods, respectively. }
\label{PRcurveEx1} %% label for entire figure
\end{center}
\end{figure}

 For each dataset, the IC algorithm was run for 50 iterations. 
 Figure \ref{PRcurveEx1} shows the resulting precision-recall curves for one 
 dataset with $p=400$. 
 For comparison, Figure  \ref{PRcurveEx1} also includes the precision-recall curves produced by
 misgLasso and those produced by the $\psi$-learning algorithm with missing values 
 imputed by the median filling, BPCA(Oba et al., 2003), and regression tree 
 (Buuren and Groothuis-Oudshoorn, 2011) methods. 
 The regression tree method has been implemented 
 in the R package {\it MICE} and was applied to this example under its default setting. 
 %As mentioned previously, BPCA is a method specially designed for imputing 
 %missing values of the microarray data. Under the Bayesian framework, this method estimates 
 %the probabilistic principal components of the data using 
 %an EM-like iterative algorithm, and then imputes the missing values based on the principal component regression.  
 %For both the ``Median'' and BPCA methods, the GGM will be learned using the 
 %$\psi$-learning algorithm after the missing values are imputed. 
 The misgLasso algorithm is a combination of the gLasso and EM algorithms, which is to 
 integrate out the missing data as in the EM algorithm (see e.g., St\"adler and B\"uhlmann, 2012) 
 and then learn the GGM using the gLasso algorithm.   
 The misgLasso algorithm has been implemented in the R package {\it spaceExt} (He, 2011).
 Refer to Figure 1 of the Supplementary Material for
 the curves with other values of $p$.

\begin{table}[htbp]
\tabcolsep=3pt\fontsize{7}{9}
\begin{center}
\caption{Average areas (over 10 datasets) under the Precision-Recall curves resulted from
  different imputation methods, where the number in the parentheses
  denotes the standard deviation of the average. }
\label{AUC}
\vspace{2mm}
\begin{tabular}{ccccccccc} \hline
 p& misgLasso & Median & BPCA & RegTree &  IC-Last & IC-Ave & True\\ \hline
   100 & 0.678(0.006)& 0.882(0.007)& 0.874(0.006)& 0.817(0.005)& 0.877(0.007)& 0.904(0.006)& 0.949(0.006) \\\hline
 200 & 0.633(0.005)& 0.856(0.004)& 0.855(0.004)& 0.442(0.004)& 0.887(0.004)& 0.902(0.003)& 0.941(0.002) \\\hline
300 &0.599(0.004)& 0.830(0.003)& 0.833(0.003)& 0.574(0.003)& 0.869(0.003)& 0.901(0.002)& 0.936(0.002) \\\hline
400 & 0.580(0.003)& 0.824(0.003)& 0.824(0.003)& 0.620(0.003)& 0.868(0.003)& 0.900(0.002)& 0.932(0.001) \\\hline
\end{tabular}
\end{center}
\end{table}

 Table \ref{AUC} compares the averaged areas (over 10 datasets) under the precision-recall curves 
 produced by different methods.
 The comparison indicates that IC-Ave outperforms all others  
 for this example. It is interesting to note that 
 although  IC-Last is also based on one-time imputation, it is much better than  
 the median filling, regression tree and BPCA methods. This suggests that for microarray data, 
 the model-based imputation method is potentially more accurate than other one-time imputation methods.  
 The misgLasso algorithm does not work well for this example. This inferiority 
 is not due to the EM algorithm, but due to the gLasso algorithm which 
 does not work well for the example. This is consistent with Liang et al. (2015), where it 
 is shown that the $\psi$-learning algorithm works much better than gLasso for the 
 complete data version of this example.

 \subsection{Yeast Cell Expression Data}

  Gasch et al. (2000) explored genomic expression patterns in the yeast {\it Saccharomyces cerevisiae} 
  responding to diverse environmental changes. 
  % They used DNA microarrays to measure the changes in transcript 
  % levels over time for 6152 known or predicted yeast genes, as cells responded to temperature shocks, 
  % hydrogen peroxide, nitrogen source depletion, etc. 
  % The whole dataset consists of 173 samples collected under 
  % different environmental settings, and is available at 
  The whole dataset has a missing rate of 3.01\% and is available at 
  {\tt http://genome-www.stanford.edu/yeast-stress/}. 
  % In our analysis, we considered only 1000 genes for which the expression levels have the 
  % largest variation over the samples. The missing rate for the selected sub-dataset is also 3.01\%.   
  Our numerical results for a subset of 1000 genes, 
  reported in the Supplementary Material, indicate that the IC algorithm 
  works reasonably well for this example with a few hub genes successfully identified, 
  which are expected to play an important role for yeast cells in response to environmental changes. 
  
% \vspace{-0.25in}
 
\section{High-Dimensional Variable Selection in Presence of Missing Data}

 This problem is also motivated by microarray 
 data analysis, but the goal has been shifted to selection of genes relevant to a
 particular phenotype. 
 To be more general, we let
 $\bY=(Y_1,\ldots,Y_n)^T$ denote the response vector for $n$ observations, and let 
 $\bX=(X_1,\ldots, X_n)^T$ denote the matrix of covariates, where each $X_i$ is 
 a $p$-dimensional vector and $p$
 can be much larger than $n$ (a.k.a. small-$n$-large-$p$).  
 The response variable and covariates are linked through the regression, 
 \begin{equation} \label{Lineareq}
  \bY=(\bm{1}_n, \bX) \bbeta+\bepsilon,
 \end{equation}
  where $\bbeta=(\beta_0,\beta_1,\ldots,\beta_p)^T$ denotes the vector of regression coefficients, 
  and $\bepsilon \sim N(0, \sigma_{\epsilon}^2 I_n)$ denotes the vector of random errors.  
 
 Variable selection for the model (\ref{Lineareq}) with complete data 
 has been extensively studied in the recent literature. Methods have been developed from both frequentist and Bayesian perspectives,
 see e.g., Tibshirani (1996), Johnson and Rossell (2012), and Song and Liang (2015a). 
 For incomplete data, Garcia et al. (2010) proposed to conduct variable selection by 
 maximizing the penalized likelihood function of the incomplete data. However,  
 when $p$ is large and the covariates $X_i$'s are generally correlated, the 
 incomplete data likelihood function can be intractable, rendering failure of 
 their method. Zhao and Long (2013) showed through numerical studies that 
 for the high dimensional data the standard multiple imputation approach performs poorly,
 while the imputation method based on Bayesian Lasso often works better. 
 However, since Bayesian Lasso tends to over-shrink the non-zero regression coefficients, 
 its consistency in variable selection is hard to be justified when $p$ is 
 much greater than $n$ (Castillo et al., 2015). 
 Quite recently, Long and Johnson (2015) proposed to combine 
 Bayesian Lasso imputation and stability selection (Meinshausen and B\"uhlmann, 2010). 
 Again, the consistency of this method is hard to be justified due to the 
 inconsistency of Bayesian Lasso.   

 \subsection{The ICC Algorithm}

 In what follows, we consider a general setting of the model (\ref{Lineareq}), where the covariates 
 follow a multivariate Gaussian distribution  $\bX \sim N(\bmu, \bSigma)$. 
 Under this setting, the parameter vector $\btheta$ consists of 
 three natural blocks  $\bbeta$, $\sigma_{\epsilon}^2$ and the concentration matrix $\bC=\bSigma^{-1}$. 
 Since $n$ has been assumed to be smaller than $p$, 
 we further assume the sparsity for both the regression coefficients $\bbeta$ and 
 the concentration matrix $\bC$. 
 
 To apply the ICC algorithm to this problem, we choose the SIS-MCP algorithm as 
 the consistent estimator of $\bbeta$. That is, the variables are first subject to a sure independence 
 screening procedure, and then the survived variables are selected using the MCP method (Zhang, 2010). 
 This algorithm has been 
 implemented in the R-package {\it SIS}. 
 Given an estimates of $\bbeta$, $\sigma_{\epsilon}^2$ can be  
 estimated by $\hat{\sigma}_{\epsilon}^2=\sum_{i=1}^n \hat{\epsilon}_i^2/(n-|\hat{\bbeta}|-1)$,  
 where $\hat{\epsilon}_i$ denotes the residual of sample $i$, and 
 $|\hat{\bbeta}|$ denotes the number of nonzero elements included in the estimate $\hat{\bbeta}$. 
 Given the consistency of $\hat{\bbeta}$, the consistency of 
 $\hat{\sigma}_{\epsilon}^2$ is easy to be justified.   
 To estimate the concentration matrix $\bC$, we 
 choose the $\psi$-learning algorithm. As mentioned previously, the $\psi$-learning 
 algorithm provides a consistent estimate for the  
 Gaussian graphical network, based on which a consistent estimate of the concentration matrix 
 can be uniquely determined by the algorithm given in Hastie et al. (2009, p.634). 
 Note that SIS-MCP does not make use of the dependency among the covariates. 
 Given the structure of the ICC algorithm, some other 
 variable selection algorithms which have made use of 
 the dependency among the covariates, e.g., Yu and Liu (2016), 
 can also be applied here.  

 Next, we consider the imputation step. 
 Suppose that the value of $x_{hk}$ is missed in $\bX$. Section 4 of the Supplementary Material 
 presents the conditional distributions of $X_{hk}$ given $\bY$ and the rest elements 
 of $\bX$ under different scenarios. 
 Based on the conditional distributions, $x_{hk}$ can be easily imputed by 
 sampling from the respective samplized conditional distributions.  
 Here the samplized conditional distribution refers to the distribution with its 
 population parameters replaced by 
 their respective estimates calculated from samples. For example, $\beta_i$'s 
 are replaced by their SIS-MCP estimates, $\sigma_{\epsilon}^2$ is replaced 
 by $\hat{\sigma}_{\epsilon}^2$, etc. 
 In summary, the ICC algorithm works as follows:
 
 {\it
 \begin{itemize}
 \item  (Initialization) Fill each missing entry of $\bX$ by the median of 
        the corresponding variable, and then iterates between the CC- and I-steps.

 \item (CC-step) (i) Apply the SIS-MCP algorithm to estimate the regression coefficients  $\bbeta$;
       (ii) estimate $\sigma_{\epsilon}^2$ conditional on the estimate of $\bbeta$; 
       and (iii) apply the $\psi$-learning algorithm to learn the structure of the
       Gaussian graphical network.

 \item (I-step) Impute missing values according to the conditional distributions 
       (given in the Supplemental Material) based on the regression model 
        and network structure learned in the CC-step.
 \end{itemize}
 }

\subsection{A Simulated Example}
 
 The datasets were simulated from the model (\ref{Lineareq}) with $n=100$ and 
 $p$=200 and 500. The covariates $\bX$ were generated under two settings:
 (i) the covariates are mutually independent, where $\bx_i \sim N(0,2 I_n)$ for $i=1,\ldots,n$; 
 and (ii) the covariates are generated according to the concentration matrix (\ref{plugin}). 
 For both settings, we set $(\beta_0,\beta_1,\ldots,\beta_5)=(1,1,2,-1.5, -2.5, 5)$ and 
 $\beta_6=\cdots=\beta_p=0$, and random error $\bepsilon \sim N(0, I_n)$.   
 For each pair of $(n,p)$, we simulated 10 datasets independently. For each dataset,
 we considered two missing rates,  
 randomly deleting 5\% and 10\% entries of $\bX$ as missing values.  
 The performance of different methods was measured using three criteria: 
 \[ 
 \mbox{err}_{\bbeta}^2=\|\hat{\bbeta}-\bbeta\|^2, \quad 
 \mbox{fsr}=\frac{|\bs \backslash \bs^*|}{|\bs|}, \quad 
  \mbox{nsr}=\frac{|\bs^* \backslash \bs|}{|\bs^*|},
 \]
 where $\|\cdot\|$ denotes the Euclidean norm, $\hat{\bbeta}$ denotes the estimate of $\bbeta$, 
 $\bs^*$ denotes the set of true covariates, 
 and $\bs$ denotes the set of selected covariates.  

 The ICC algorithm was first applied to this example with the results summarized 
 in Table \ref{RegressionTab1} and \ref{RegressionTab2}. 
 For each dataset, the algorithm was run for 30 iterations. 
 For variable selection, we kept only the variables appeared 5 or more times 
 in the last 10 iterations. For estimation of $\bbeta$, we averaged the estimates of 
 $\bbeta$ obtained in the last 10 iterations. 
 For comparison, we also tried 
 the one-time imputation methods, including median filling and BPCA. As explained previously,
 the median filling method is to fill each missing value by the median of the corresponding variable, 
 and BPCA is to impute the missing values based on the principal component regression.  
 Then the variables are selected using the SIS-MCP method. 
 
\begin{table}[htbp]
\begin{center}
\caption{Comparison of the ICC algorithm with the median filling and BPCA methods for high-dimensional 
 variable selection with independent covariates. ''True'' denotes the results obtained by the 
 MCP method from the complete data.  The values in the table are obtained by averaging over 10
 independent datasets with the standard deviation reported in the parentheses.}
\label{RegressionTab1}
\vspace{2mm}
\begin{tabular}{ccccccc} \hline
 $p$ &Missing Rate & & BPCA & Median & ICC & True \\ \hline
  \multirow{6}{2cm}{\centering $200$} && $\mbox{err}_{\bbeta}^2$ & 0.257(0.267) & 0.262(0.261) & 0.042(0.041) & 0.046(0.048)  \\
  &5\%&fsr & 0.119(0.143)& 0.082(0.092) & 0(0) & 0(0)  \\ 
  && nsr & 0(0) & 0(0) & 0(0) & 0(0) \\  \cline{2-7}
    && $\mbox{err}_{\bbeta}^2$  & 0.903(0.396) & 0.856(0.421) & 0.065(0.087) & 0.046(0.048)  \\ 
  &10\%&fsr & 0.310(0.159) & 0.308(0.178) & 0(0) & 0(0) \\ 
  && nsr & 0(0) & 0(0) & 0(0) & 0(0) \\ \hline
 % \multirow{6}{2cm}{\centering $300$} && $\mbox{err}_{\bbeta}^2$ & 0.297(0.213) & 0.279(0.206) & 0.032(0.022) & 0.022(0.015)  \\ 
 % &5\%&fsr & 0.221(0.221) & 0.252(0.203) &  0(0) & 0(0)  \\ 
 % && nsr & 0(0) & 0(0) & 0(0) & 0(0) \\ \cline{2-7}
 %   && $\mbox{err}_{\bbeta}^2$ & 0.507(0.335) & 0.482(0.267) & 0.055(0.026) & 0.022(0.015)  \\ 
 % &10\%&fsr & 0.260(0.170) & 0.292(0.189) &  0(0) & 0(0)  \\ 
 % && nsr & 0(0) & 0(0) & 0(0) & 0(0) \\ \hline
 % \multirow{6}{2cm}{\centering $400$} && $\mbox{err}_{\bbeta}^2$ & 0.300(0.279) & 0.266(0.276) & 0.024(0.011) & 0.022(0.017)  \\ 
 % &5\%&fsr & 0.311(0.192) & 0.343(0.115) & 0(0) & 0(0) \\ 
 % &&nsr & 0(0) & 0(0) & 0(0) & 0(0)\\ \cline{2-7}
 %   && $\mbox{err}_{\bbeta}^2$ & 0.768(0.500) & 0.847(0.655) & 0.042(0.017) & 0.022(0.017)  \\ 
 % &10\%&fsr & 0.359(0.185) & 0.332(0.240) & 0(0) & 0(0) \\ 
 %  &&nsr & 0.167(0.052) & 0.033(0.070) & 0(0) & 0(0)\\  \hline
  \multirow{6}{2cm}{\centering $500$} && $\mbox{err}_{\bbeta}^2$ & 0.339(0.214) & 0.350(0.206) & 0.029(0.034) & 0.027(0.023)  \\ 
  &5\%&fsr & 0.249(0.225) & 0.266(0.237) & 0(0) & 0(0)   \\ 
  && nsr & 0(0) & 0(0) & 0(0) & 0(0) \\ \cline{2-7}
    && $\mbox{err}_{\bbeta}^2$ & 1.532(1.071) & 1.354(0.895) & 0.044(0.022) & 0.027(0.023)  \\ 
  &10\%&fsr & 0.470(0.265) & 0.420(0.255) & 0(0) & 0(0)  \\ 
  && nsr & 0.033(0.070) & 0.017(0.053) & 0(0) & 0(0) \\ \hline
\end{tabular}
\end{center}
\end{table}

\begin{table}[htbp]
\begin{center}
\caption{Comparison of the ICC algorithm with the median filling and BPCA methods for high-dimensional
 variable selection with dependent covariates. ''True'' denotes the results obtained by the
 MCP method from the complete data.}
\label{RegressionTab2}
\vspace{2mm}
\begin{tabular}{ccccccc} \hline
 $p$ &Missing Rate & & BPCA & Median & ICC & True \\ \hline

  \multirow{6}{2cm}{\centering $200$} && $\mbox{err}_{\bbeta}^2$ & 0.580(0.413) & 0.548(0.140) & 0.118(0.097) & 0.071(0.050)  \\ 
  &5\%&fsr & 0.262(0.204)& 0.263(0.200) & 0(0) & 0(0)  \\ 
  && nsr & 0.017(0.052) & 0.017(0.052) & 0(0) & 0(0) \\ \cline{2-7}
    && $\mbox{err}_{\bbeta}^2$ & 1.604(0.666) & 1.575(0.974) & 0.424(0.461) & 0.071(0.050)  \\ 
  &10\%&fsr & 0.247(0.229) & 0.273(0.238) & 0(0) & 0(0) \\ 
  && nsr & 0.100(0.086) & 0.083(0.088) & 0.033(0.070) & 0(0) \\ \hline
 % \multirow{6}{2cm}{\centering $300$} && $\mbox{err}_{\bbeta}^2$ & 0.971(0.777) & 0.692(0.352) & 0.179(0.125) & 0.114(0.081)  \\ 
 % &5\%&fsr & 0.280(0.259) & 0.254(0.254) & 0(0) & 0.014(0.045)  \\ 
 % && nsr & 0.017(0.053) & 0(0) & 0(0) & 0(0) \\ \cline{2-7}
 %   && $\mbox{err}_{\bbeta}^2$ & 1.338(0.688) & 1.412(0.709) & 0.704(0.569) & 0.114(0.081)  \\ 
 % &10\%&fsr & 0.308(0.247) & 0.324(0.255) & 0(0) & 0.014(0.045)  \\ 
 % && nsr & 0.067(0.086) & 0.067(0.086) & 0.067(0.086) & 0(0) \\ \hline
 % \multirow{6}{2cm}{\centering $400$} && $\mbox{err}_{\bbeta}^2$ & 1.170(0.766) & 1.134(0.726) & 0.394(0.458) & 0.146(0.163)  \\ 
 % &5\%&fsr & 0.135(0.158) & 0.119(0.166) & 0(0) & 0.014(0.045) \\ 
 % && nsr & 0.100(0.086) & 0.100(0.086) & 0(0) & 0(0) \\ \cline{2-7}
 %   && $\mbox{err}_{\bbeta}^2$ & 1.383(1.082) & 1.617(1.225) & 0.417(0.474) & 0.146(0.163)  \\ 
 % &10\%&fsr & 0.362(0.250) & 0.360(0.214) & 0(0) & 0.014(0.045)  \\
 % && nsr & 0.067(0.086) & 0.067(0.086) & 0.033(0.070) & 0(0) \\ \hline
  \multirow{6}{2cm}{\centering $500$} && $\mbox{err}_{\bbeta}^2$ & 0.669(0.366) & 0.717(0.358) & 0.172(0.195) & 0.096(0.083)  \\ 
  &5\%&fsr & 0.262(0.202) & 0.289(0.236) & 0(0) & 0(0)  \\ 
  && nsr & 0.017(0.053) & 0.017(0.053) & 0(0) & 0(0)\\ \cline{2-7}
    && $\mbox{err}_{\bbeta}^2$ & 2.752(2.306) & 2.896(2.601) & 0.578(0.587) & 0.096(0.083)  \\ 
  &10\%&fsr & 0.297(0.230) & 0.327(0.224) & 0(0) & 0(0)  \\ 
  && nsr & 0.133(0.070) & 0.133(0.070) & 0.050(0.081) & 0(0) \\ \hline
\end{tabular}
\end{center}
\end{table}

The comparison indicates that the ICC algorithm works extremely well for this example. 
For the case of independent covariates, its results are almost as good as those 
 obtained from the complete data. In both cases, the ICC algorithm 
 significantly outperforms the one-time imputation methods. 

\subsection{A Real Data Example} 

We analyzed one real gene expression dataset about Bardet-Biedl syndrome (Scheetz et al., 2006).  
The complete dataset contains 120 samples, where the expression level of the gene TRIM32 
works as the response variable and the expression levels of 200 other genes work as the predictors. 
The dataset is available in the R package {\it flare}. 
We generated ten incomplete datasets from the complete one by 
randomly deleting 5\% observations. For each incomplete dataset, we ran the ICC algorithm 
for 30 iterations and averaged the estimates of $\bbeta$ obtained in the last 
10 iterations as the final estimate. For comparison, the median filling and BPCA methods 
 were also applied to this example. Table \ref{eyetab} summarizes the estimation errors 
 of $\hat{\bbeta}$ (with respect to $\bbeta_c$, the estimate of $\bbeta$ from the complete data) 
 produced by the three methods for ten incomplete datasets.     

 \begin{table}[htbp]
\begin{center}
\caption{Estimation errors of $\hat{\bbeta}$ (with respect to $\bbeta_c$)
 produced by ICC, median filling and BPCA for the Bardet-Biedl syndrome example, where 
 err$_{\bbeta}^2$ is calculated by averaging $\|\hat{\bbeta}-\bbeta_c\|^2$ over ten incomplete datasets, 
 and ``s.d.'' represents the standard deviation of err$_{\bbeta}^2$. }
\label{eyetab}
\vspace{2mm}
\begin{tabular}{cccc} \hline
  Method  & BPCA & Median & ICC \\ \hline
  err$_{\bbeta}^2$  &  0.428   & 0.397  &  0.187   \\
  s.d.              &  0.091   & 0.086  &  0.040  \\ \hline
\end{tabular} 
\end{center}
\end{table}

 We have also explored the results of variable selection.  
 The complete data model selects 5 variables: v.153, v.180, v.185, v.87 and v.200.
 For the ICC, median filling and BPCA models, we count the selection frequency of each variable 
 for the ten incomplete datasets. 
 For the ICC models, the top 5 variables in selection frequency 
 are v.153, v.185, v.180, v.87 and v.200, which are the same (ignoring the order) as the complete data model.  
 For the median filling models, the top 5 variables are v.153, v.185, v.62, v.200 and v.54. 
 For the BPCA models, the top 5 variables are v.153, v.87, v.185, v.62 and v.200. 
 Both the results of $\bbeta$ estimation and variable selection indicate the 
 superiority of the ICC algorithm over the one-time imputation methods.

\section{A Random Coefficient Linear Model} 
 
 To further illustrate the use of the ICC algorithm, we consider  
 a random coefficient linear model. Such a model often arises, for instance, in recommendation 
 systems where the customers rate different items, e.g., products or service.  Specifically, we simulate 
 the data from the following model 
 \begin{equation} \label{randomcoefmodel} 
 \begin{split}
 &  y_{ij}=\bx_{ij}^T \bbeta+\bz_i^T \blambda_i +\bw_j^T \bgamma_j +e_{ij}, \\
 & e_{ij}\sim N(0, \sigma^2), \quad \blambda_i \sim N(0,\Lambda), \quad \bgamma_j \sim N(0,\Gamma), \\
 \end{split}
 \end{equation}
 where $y_{ij}$ represents the response for customer $i$ on item $j$. Assuming that there 
 are $I$ customers and each customer responds to $J$ items. Thus,  
  the dataset consists of a total of $n=IJ$ observations.  
 The vector $\bx_{ij}$ represents the covariates that characterize the customers and items, e.g., 
 how and how long the customer has purchased the item;  
 $\bz_i$ represents customer-specific covariates such as gender, education and demographics; and 
 $\bw_j$ represents item-specific covariates, e.g., the manufacturer and category of the item.   
 The vector $\blambda_i$ represents the customer-specific (random) coefficients 
 and $\bgamma_j$ represents the item-specific (random) coefficients.  
 This model can be easily extended to the case where each customer responds to only 
 a subset of items. For this model, we treat  
 the random coefficients $\blambda_i$'s and $\bgamma_j$'s as missing data, 
 and are interested in estimation of $\bbeta$. 
 For simplicity, we assume that $\bbeta$ is low-dimensional, 
 although the whole dataset can be big when $I$ and/or $J$ become large. 
 Under this assumption, the ICC algorithm 
 is essentially reduced to the stochastic EM algorithm for this example. 
 Instead of using the ICC algorithm in this straightforward way, we propose 
 to use it under the Bayesian framework. This extends the applications of the ICC algorithm 
 to Bayesian computation. 
 
%  Following Ansari et al. (2014), 
 To conduct Bayesian analysis for the model, we assume the following semiconjugate priors:
 \begin{equation} \label{prioreq}
 \bbeta \sim N(\bmu_{\bbeta}, \Sigma_{\bbeta}), \quad \sigma^2 \sim IG(a,b), \quad 
 \Lambda \sim IW(\rho_{\Lambda}, \bR_{\Lambda}), \quad \Gamma \sim IW(\rho_{\Gamma}, \bR_{\Gamma}),
 \end{equation}
 where $IG(\cdot,\cdot)$ denotes the inverted Gamma distribution, 
 $IW(\cdot,\cdot)$ denotes the inverted Wishart distribution, and $\bmu_{\bbeta}$, 
 $\Sigma_{\bbeta}$, $a$, $b$, $\rho_{\Lambda}$, $\bR_{\Lambda}$, $\rho_{\Gamma}$, 
 and $\bR_{\Gamma}$ are hyperparameters to be specified by the user.  
 Each of these priors is individually conjugate to the normal likelihood function, 
 given the other parameters, although the joint prior is not conjugate. 
 Given these priors, the full conditional posterior distributions are derived in 
 Section 5 of the Supplementary Material. Since, under the low-dimensional setting, 
 the mode of the full conditional posterior distribution provides 
 a consistent estimator for the corresponding parameter, the ICC algorithm can work as 
 follows: 
 
 \begin{itemize}
 \item {\it (Initialization) Initialize $\lambda_i$'s, $\gamma_j$'s, and all parameters by 
       some random numbers. }
 
 \item {\it (CC-step) Estimate the parameters $\bbeta$,
 $\Lambda$, $\Gamma$, and $\sigma^2$ by the mode of their respective full conditional posterior 
 distributions. }

 \item {\it (I-step) Impute the values of $\blambda_i$'s and $\bgamma_j$'s according to their 
  respective full conditional posterior distributions. }
 \end{itemize}

  Under the Bayesian framework, the ICC algorithm works in a similar way to the Gibbs sampler 
  except that it replaces posterior samples of the parameters by their respective  
  full conditional posterior modes. Also, in this case, it is reduced to 
  a hybrid of data augmentation (Tanner \& Wong, 1987) and
  iterative conditional modes (Besag, 1974) by using imputation for the
  missing data and conditional modes for the parameters.    
  However, the ICC algorithm offers more, whose consistency step 
  allows it to conduct parameter estimation based on sub-samples only and this 
  can create great savings in computation for big data problems. 
  For high-dimensional problems, if the choice of prior distributions ensures 
  posterior consistency, then the above algorithm can still be employed.

%\begin{figure}[htbp]
%\begin{center}
%\begin{tabular}{c}
%\epsfig{figure=mixedpath.ps,height=6.0in,width=2.0in,angle=270} 
%\end{tabular}
%\caption{Comparison of the ICC algorithm and the Gibbs sampler for the simulated random coefficient 
% linear model example: (a) \& (b): Sampling paths of $\beta_0$ and $\beta_1$, 
%  where the solid line is for the Gibbs sampler
%  and the dotted line is for the ICC algorithm; (c)\&(d): autocorrelation plots of $\beta_0$ samples generated 
%  by the ICC and Gibbs sampler. The samples used in the plots have been thinned by a factor of 50. }
%\label{randomcoef} %% label for entire figure
%\end{center}
%\end{figure}

  Figure 3 of the supplementary material compares the
  sampling path and autocorrelations of the ICC and Gibbs samples.
  As expected, the comparison shows that the ICC algorithm can converge faster than the Gibbs sampler and, 
  in addition, the samples generated by the ICC algorithm
  tend to have smaller variations than those by the Gibbs sampler.
 We are aware that the accuracy of the ICC estimates is achieved at the price that we scarify the variance
 information contained in the posterior samples. 
 As pointed out by Nielsen (2000), the variance of the ICC samples reflects 
 the information loss due to the missing data. 
 However, for the random coefficient example, the variance information can be obtained from the full conditional posterior 
 distributions (given in the Supplementary Material) by simply plugging the parameter estimates into 
 their variances. 
 % For example, the posterior variance of $\bbeta$ can be obtained by plugging 
 % the estimate of $\sigma^2$ into $\tilde{\Sigma}_{\bbeta}$ given in equation (\ref{vareq1}). 
 This observation suggests a general strategy to 
 improve simulations of the Gibbs sampler: At each iteration, we only need to draw samples for the components 
 for which the posterior variance is not analytically available and also of interest to us, and   
 the other components can be replaced by the mode of the respective full conditional posterior 
 distributions. As aforementioned, the mode can be found with a subset of samples, which can be much cheaper than 
 sampling from the full data conditional posterior.  We expect that the proposed strategy can significantly   
 facilitate Bayesian computation for big data problems.  
 A further study of this proposed strategy will be reported elsewhere.

\section{Discussion}

 In this paper, we have proposed the imputation-consistency algorithm, or the IC algorithm in short,
 as a general algorithm for dealing with high-dimensional missing data problems.  
 Under quite general conditions, we show that the IC algorithm can lead to a consistent estimate 
 for the parameters. We have also extended the IC algorithm to the case of multiple block 
 parameters, which leads to the imputation-conditional consistency (ICC) algorithm. 
 We illustrate the proposed algorithms using the 
 high-dimensional Gaussian graphical models, high-dimensional variable selection,
 and a random coefficient model.  

 Like the EM algorithm for low-dimensional data, we expect that the IC/ICC algorithm can
 have many applications for high-dimensional data. With the IC/ICC algorithm,
 many problems can be much simplified, e.g., variable selection for high-dimensional mixture regression
 (Khalili and Chen, 2007) and variable selection for high-dimensional mixed
 effect models (Fan and Li, 2012). For the former, the group index of each
 sample can be treated as missing data, and then the IC algorithm can be applied:
 The I-step is to assign the samples into different groups and the C-step is to conduct variable selection for
 each group separately. For the latter, at each iteration of the IC algorithm,
 it is reduced to variable selection for a high-dimensional regression
 with fixed designs given imputed random effects.

  To assess the convergence of IC simulations, 
  we recommend the Gelman-Rubin statistic (Gelman and Rubin, 1992).
  Since the IC algorithm usually converges very fast, we do not recommend a long run.
  Our experience shows that 20 iterations have often been long enough to produce a stable 
  parameter estimate. Due to the MCMC nature of IC simulations, we recommend the averaging method 
  for parameter estimation, and allow a burn-in period before collecting 
  the samples for averaging. 
  In Section 3.2, we reported the results based on the last iteration is just to compare
  with other one-time imputation methods.  
  Theoretically, the variation of the IC estimates collected at different iterations  
  reflects the missing data information. 
  %When a large amount of noises
  %were brought into the system through the missing data, the IC estimates
  %will have a large variation along with iterations. 
  Hence, in addition to the parameter estimates, one might report their variance over iterations. However,
  this information is often not of interest to us, we choose not to report them in the paper. 

  Please be aware that when a large amount of noise was brought into 
  the system through missing data, the IC/ICC algorithm may fail to work 
  as implied by condition (A3). Also, please be aware that 
  the IC/ICC algorithm targets consistency by design. When the sample size 
  is small, the consistent estimators might not be adequately accurate. 
  Our numerical experience shows that in this case, 
  the IC/ICC algorithm might not significantly outperform one-time imputation 
  methods, such as median filling and BPCA, 
  as there is no much information to use for improving imputation during iterations.
  In general, when the sample size increases, the IC/ICC algorithm can 
  significantly outperform one-time imputation methods. 

  Regarding statistical inference for parameters, we note that, theoretically, it can be 
  done in general scenarios, not limited to that the posterior distribution 
  of the parameters has a closed form.
  For example, for high-dimensional regression,                                     
  if the Lasso algorithm is employed as the consistent estimator in the IC algorithm, then the 
  inference for $\btheta$, e.g., constructing                     
  confidence intervals for each component of $\btheta$, can be done using the 
  de-sparsified method (Zhang and Zhang, 2014; van de Geer et al., 2014). 
  % based on the pseudo-complete data and the Lasso estimate $\hat{\btheta}$. 
  Let $V(\bxobs, \tbxmis_t,\btheta_n^{(t)})$ denote an uncertainty assessment statistic
  obtained at iteration $t$. 
  Assume that $V(\bxobs, \bxmis, \btheta)$ is a Lipschitz function with respect to 
  $\btheta$ and it is integrable. Then, by Corollary \ref{corMCMC1} or 
  Corollary \ref{corMCMC2} (depending on IC or ICC being used), we will be 
  able to get  an uncertainty assessment for $\btheta$ by 
  averaging $V(\bxobs, \tbxmis_t,\btheta_n^{(t)})$ along the IC/ICC chain. 
  However, how to get the uncertainty assessment statistic 
  $V(\bxobs, \tbxmis_t,\btheta_n^{(t)})$ for general 
  high-dimensional problems is beyond the scope of this paper. 

 As a variant of the ICC algorithm, we note that the I-step can be replaced by an E-step
 if it is available. In this case, the C-step is to maximize the objective function
 \begin{equation} \label{nsfeq1}
 \max_{\btheta} E_{\btheta^*} Q(\bxobs, \btheta|\btheta_n^{(t)}),
 \end{equation}
 where the Q-function is as defined in the EM algorithm,
 and $E_{\btheta^*}$ denotes
 expectation with respect to the true distribution $\pi(\bxobs|\btheta^*)$.        
 Suppose that $\btheta$ has been partitioned into a few blocks $\btheta=(\theta^{(1)}, \ldots, \theta^{(k)})$.
 To solve (\ref{nsfeq1}), the ICC algorithm is reduced to a blockwise consistency algorithm,
 which is to iteratively find
 consistent estimates for the parameters of each block conditioned on the current
 estimates of the parameters of other blocks. 
 The blockwise consistency algorithm is closely related to, but more flexible than, 
 the coordinate ascent algorithm (Tseng, 2001; Tseng and Yun, 2009).   
 The coordinate ascent algorithm is to iteratively find 
 the exact maximizer for each block conditioned on the current estimates of the parameters 
 of other blocks. Under appropriate conditions, such as contraction as in (A5$'$) and 
 uniform consistency of $\btheta_n^{(t)}$'s as shown in Theorem \ref{them1new}, 
 we will be able to show that the paths of the two algorithms 
 will converge to the same point. This will be explored elsewhere. 

 The ICC algorithm have strong implications for big data computing. 
 Based on the ICC algorithm, we have proposed a general strategy to improve Bayesian computation 
 under big data scenario; that is, we can replace posterior samples by posterior modes
 in Gibbs iterations to accelerate simulations, where the posterior modes can be 
 calculated with a subset of samples. 
 In addition, the IC/ICC algorithm facilitates data integration 
 from multiple sources when missing data are present. 
 With the IC/ICC algorithm, the problem of data integration for incomplete data is 
 converted to a problem for complete data and thus 
 many of the existing meta-analysis methods can be conveniently applied for inference.  
 This is very important for big data analysis.

 \section*{Acknowledgement}
 Liang's research was support in part by the grants
 DMS-1545202, DMS-1612924 and DMS/NIGMS R01-GM117597. 
 The authors thank the editor, associate editor and three referees for their  constructive 
 comments which have led to significant improvement of this paper. 

\appendix

\section*{Appendix}

\paragraph{1. Proof of Consistency of $\btheta_n^{(t+1)}$.} 
 
 Define $\Theta_n$ as the parameter space of $\btheta$, where the subscript $n$ indicates the dependence of 
 the dimension of $\btheta$ on the sample size $n$.
 Without possible confusion, we will refer to $\Theta_n$ as $\Theta$. In addition, we let 
 $\Theta_n^T=\{ \theta_n^{(1)}, \ldots, \theta_n^{(T)}\}$ denote a path of $\theta_n$ in the IC algorithm,
 which can be considered as an arbitrary subset of $\Theta_n$ with $T$ elements (replicates are allowed). 
 Let $\tilde{x}=(\xobs,\txmis)$ and define 
 \begin{equation} \label{Liangeq1}
 \begin{split} 
 G_n(\btheta|\btheta_n^{(t)}) & =E_{\btheta_n^{(t)}}\log f_{\btheta}(\tilde{x}) =\int \log (f_{\btheta}(\tilde{x}) ) 
 f(\xobs|\btheta^*) h(\txmis|\btheta_n^{(t)}) d \tilde{x}, \\
 %%%%%%%
 \hat{G}_n(\btheta|\tbx,\btheta_n^{(t)}) &
 =\frac{1}{n}\sum_{i=1}^n  \log f(\xobs_i,\txmis_i|\btheta), \\ 
 %%%%%%%
 \tilde{G}_n(\btheta|\btheta_n^{(t)}) &=\frac{1}{n}\sum_{i=1}^n 
  \int\log f(\xobs_i,\txmis|\btheta)h(\txmis|\xobs_i,\btheta_n^{(t)})d\txmis :=
  \frac{1}{n} \sum_{i=1}^n q(\xobs_i).  
 \end{split}
\end{equation} 
 Our first goal is to show the following uniform law of large numbers (ULLN) holds for any 
 $T$ such that $\log T =o(n)$:    
 \begin{equation} \label{Liangeq2}
 \sup_{\btheta_n^{(t)}\in \Theta_n^T}\sup_{\btheta\in \Theta_n}|\hat G_n(\btheta|\tbx,\btheta_n^{(t)})
 -G_n(\btheta|\btheta_n^{(t)}) |\rightarrow_p 0,  
 \end{equation}  
 where $\rightarrow_p$ denotes convergence in probability. 
 To achieve this goal, we need the following conditions: 
 
 \begin{itemize} 
 \item[(A1)] $\log f_{\btheta}(\tilde{x})$ is a continuous function of $\btheta$ for each $\tilde{x}\in \mathcal{X}$
 and a measurable function of $\tilde{x}$ for each $\theta$.

\item[(A2)] [Conditions for Glivenko-Cantelli theorem]
 \begin{itemize}
 \item[(a)] There exists a function $m_n(\tilde{x})$ such that
  $\sup_{\btheta\in \Theta_n,\tilde{x}\in \mathcal{X}}|\log f_{\btheta}(\tilde{x})|\leq m_n(\tilde{x})$.

 \item[(b)]  Define $\tilde{m}_n(\xobs,\btheta_n^{(t)})=\int m_n(\tilde{x})$ $h(\txmis|\xobs,\btheta_n^{(t)})d \txmis$.
 Assume that there exists $m^*_n(\xobs)$ such that $0\leq \tilde{m}_n(\xobs,\btheta_n^{(t)}) 
 \leq m^*_n(\xobs)$ for all $\btheta_n^{(t)}$, $E[m^*_n(\xobs)]$ $<\infty$,
 and $\sup_{n\in \mathbb{Z}^+}E[m_n^*(\xobs)$  $1(m_n^*(\xobs)\geq \zeta)]\rightarrow 0$
 as $\zeta\rightarrow \infty$.
In addition,
$\sup_{n\geq 1}\sup_{x\in \mathcal{X},\btheta\in \Theta_n}$
 $ |\int m_n(\tilde{x})1(m_n(\tilde{x})>\zeta)h(\txmis|x,\btheta)
 d\txmis|\rightarrow 0$ as $\zeta\rightarrow \infty$.

\item[(c)] Define $\mathcal{F}_n=\{\int \log f(\xobs,\txmis|\btheta)h(\txmis|\xobs$, $\btheta_n^{(t)})d\txmis|\btheta,
 \btheta_n^{(t)}\in \Theta_n\}$, and
 $\mathcal{G}_{n,M}=\{q1(m^*_n(\xobs)$ $\leq M)|q\in \mathcal{F}_n\}$. Suppose that for every $\epsilon$ and $M>0$, 
 the metric entropy $\log N(\epsilon,\mathcal{G}_{n,M},L_1(\mathbb{P}_n))=o_p^*(n)$, 
 where $\mathbb{P}_n$ is the empirical measure of $\xobs$,
% $L_1(\cdot)$ denotes the $L_1$-norm, 
 and $N(\epsilon,\mathcal{G}_{n,M},L_1(\mathbb{P}_n))$ is the covering number with respect to the 
 $L_1(\mathbb{P})$-norm. 
 % i.e., the minimum number of balls
 % $\{g: \|g-q\| \leq \epsilon \}$ of radius $\epsilon$ needed to
 % cover the set $\mathcal{G}_{n,M}$.
 \end{itemize}

\item[(A3)] [Conditions for imputed data]  
 Define $Z_{t,i}=\log f(\xobs_i,\txmis_i|\btheta)-\int \log f(\xobs_i,\txmis|\btheta) 
  h(\txmis| $  $\xobs_i, \btheta_n^{(t)})$ $d\txmis$. Suppose that  for any $\btheta$ and 
  $\btheta_n^{(t)} \in \Theta_n$,  
  $E|Z_{t,i}|^m \leq m! M_b^{m-2} v_i/2$ for every $m\geq 2$ and some constants $M_b>0$ and $v_i=O(1)$.  
   That is, $Z_{t,i}$'s are sub-exponential random variables. 
 \end{itemize}

 \begin{theorem} \label{them0}
 Assume conditions (A1)--(A3), then (\ref{Liangeq2}) holds for 
 any $T$ such that $\log T =o(n)$. 
 \end{theorem}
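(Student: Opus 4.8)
The plan is to prove the ULLN (\ref{Liangeq2}) by inserting the intermediate quantity $\tilde{G}_n(\btheta|\btheta_n^{(t)})$ from (\ref{Liangeq1}) and splitting the deviation into a sampling part and an imputation part,
\[
\hat{G}_n(\btheta|\tbx,\btheta_n^{(t)})-G_n(\btheta|\btheta_n^{(t)})
=\underbrace{\big(\hat{G}_n-\tilde{G}_n\big)}_{\text{imputation noise}}
+\underbrace{\big(\tilde{G}_n-G_n\big)}_{\text{sampling noise}} .
\]
By the definitions in (\ref{Liangeq1}), the first bracket is exactly $\frac1n\sum_{i=1}^n Z_{t,i}$, the average of the centered imputation residuals of condition (A3), while the second is $\frac1n\sum_{i=1}^n q(\xobs_i)-E[q(\xobs)]$, a centered empirical average of the integrated log-likelihoods $q$ over the observed sample. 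It then suffices to show that the double supremum of each bracket over $\btheta\in\Theta_n$ and $\btheta_n^{(t)}\in\Theta_n^T$ converges to zero in probability, and the conclusion follows by the triangle inequality.

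For the sampling part I would run a Glivenko--Cantelli argument on the class $\mathcal{F}_n$, whose generic element is $q$. First I would truncate at a level $M$ using the envelope $m_n^*(\xobs)$ of (A2)(b), writing $q=q\,1(m_n^*\le M)+q\,1(m_n^*>M)$; the tail contribution is controlled uniformly by the dominating-function and uniform-integrability conditions of (A2)(a)--(b), which force both $\sup_n E[m_n^*\,1(m_n^*\ge\zeta)]$ and the analogous conditional tail of $m_n$ to vanish as $\zeta\to\infty$. On the truncated, uniformly bounded class $\mathcal{G}_{n,M}$, the metric-entropy bound $\log N(\epsilon,\mathcal{G}_{n,M},L_1(\mathbb{P}_n))=o_p^*(n)$ of (A2)(c) lets me apply the standard symmetrization-plus-covering inequality to obtain $\sup_{\btheta,\btheta_n^{(t)}\in\Theta_n}|\tilde{G}_n-G_n|\rightarrow_p 0$. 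Since this supremum already ranges over the full space $\Theta_n\supseteq\Theta_n^T$, the sampling part requires no reference to $T$.

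For the imputation part I would exploit condition (A3). Conditionally on $\bXobs$, the residuals $Z_{t,1},\dots,Z_{t,n}$ are independent across $i$ and centered (since the subtracted integral is precisely $E[\log f(\xobs_i,\txmis_i|\btheta)\mid\xobs_i]$ under $\txmis_i\sim h(\cdot|\xobs_i,\btheta_n^{(t)})$), with the stated Bernstein moment bounds. Hence for each fixed pair $(\btheta,\btheta_n^{(t)})$ Bernstein's inequality gives
\[
\mathbb{P}\Big(\Big|\tfrac1n\sum_{i=1}^n Z_{t,i}\Big|>\epsilon\Big)
\le 2\exp\!\Big(-\frac{n\epsilon^2}{2(\bar v+M_b\epsilon)}\Big),\qquad \bar v=\tfrac1n\sum_i v_i=O(1).
\]
The supremum over $\btheta_n^{(t)}\in\Theta_n^T$ is a maximum over $T$ points, so a union bound multiplies the right-hand side by $T$; the resulting $2T\exp(-cn\epsilon^2)$ vanishes precisely because $\log T=o(n)$, which is where the hypothesis on $T$ enters. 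To upgrade to uniformity over the continuous index $\btheta$, I would combine the sub-exponential increment bound of (A3) with a covering/chaining step, replacing the single-point Bernstein estimate by a maximal inequality over an $\epsilon$-net of $\Theta_n$ whose cardinality is controlled by the same metric-entropy growth assumed in (A2); the oscillation of $Z_{t,i}$ between net points is absorbed using the envelope $m_n$ of (A2)(a) and the continuity in (A1). The net cardinality adds only a further $o(n)$ term to the exponent, so the overall bound still tends to zero.

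The main obstacle is the imputation term. Unlike the sampling term, it cannot be reduced to a Glivenko--Cantelli statement for a fixed sample, because the pseudo-complete data $\txmis_i$ are themselves drawn using $\btheta_n^{(t)}$, so every summand depends on the path point through both the likelihood evaluation and the centering. Handling this cleanly requires conditioning on $\bXobs$ to recover conditional independence and mean-zero structure, and then pairing the exponential Bernstein decay against the $T$-fold union bound so that the $\log T=o(n)$ budget is respected \emph{after} the $\btheta$-covering cost has also been charged to the exponent; keeping both costs simultaneously $o(n)$ is the delicate point of the argument.
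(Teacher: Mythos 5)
Your proposal is correct and follows essentially the same route as the paper: the identical decomposition into $(\hat G_n-\tilde G_n)+(\tilde G_n-G_n)$, a symmetrization--truncation--entropy (Glivenko--Cantelli) argument for the sampling term uniformly over all of $\Theta_n$, and for the imputation term a conditional Bernstein bound combined with a union bound over the $T$ path points and an $\epsilon$-net of the function class (the paper packages this last step as the maximal inequality of Lemma 2.2.10 of van der Vaart and Wellner, with $1+TN(\epsilon,\mathcal{G}_{n,M},L_1(\mathbb{P}_n))$ terms), with $\log T=o(n)$ and the $o_p^*(n)$ entropy jointly keeping the exponent negative. The delicate point you flag --- that the imputation residuals depend on the path point through both the draw and the centering, so one must condition on $\bXobs$ and charge both the $T$-fold and the covering costs to the same $o(n)$ budget --- is exactly how the paper handles it.
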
 
 \begin{proof} 
 By the definitions in (\ref{Liangeq1}), we have the decomposition 
 \begin{equation} \label{decompeq}
 \hat G_n(\btheta|\tbx,\btheta_n^{(t)})-G_n(\btheta|\btheta_n^{(t)})
 =\big\{\hat{G}_n(\btheta|\tbx,\btheta_n^{(t)})-\tilde{G}_n(\btheta|\btheta_n^{(t)})\big\}
 +\big\{\tilde{G}_n(\btheta|\btheta_n^{(t)})-G_n(\btheta|\btheta_n^{(t)})\big\}, 
 \end{equation}
 which consists of two terms, 
 the first term comes from imputation of missing data,
 and the second term comes from the observed data.

 First, we show that the second term of (\ref{decompeq}) converges to 0 uniformly, following  
 the proof of Theorem 2.4.3 of van der Vaart and Wellner (1996).
 By the symmetrization Lemma 2.3.1 of van der Vaart and Wellner (1996), measurability of the class 
 $\mathcal{F}_n$, and Fubini's theorem, 
 \[
 \begin{split}
 & E^*\sup_{\btheta,\btheta_n^{(t)}\in \Theta_n}|\tilde{G}_n(\btheta|\btheta_n^{(t)})-G_n(\btheta|\btheta_n^{(t)})|
\leq 2 E_{\xobs}E_\epsilon \sup_{q(x)\in \mathcal{F}_n}\|\frac{1}{n}\sum_{i=1}^n \epsilon_i q(x_i^{obs})\| \\
 & \leq 2 E_{\xobs}E_\epsilon \sup_{q(x)\in \mathcal{G}_{n,M}}\|\frac{1}{n}\sum_{i=1}^n \epsilon_i q(x_i^{obs})\|
 +2 E^*[m^*_n(x^{obs})1(m^*_n(x^{obs})>M)], \\
\end{split} 
 \]
where $\epsilon_i$ are i.i.d. Rademacher random variables 
with $P(\epsilon_i=+1)=P(\epsilon_i=-1)=1/2$, and $E^*$ denotes the outer expectation.

By condition (A2), $2 E^*[m^*_n(\xobs)1(m^*_n(\xobs)>M)]\rightarrow 0$ for sufficiently large $M$.  
To prove convergence in mean, it suffices to show that the first term converges to zero for fixed $M$. 
Fix $\xobs_1,...,\xobs_n$, and let $\mathcal{H}$ be a $\epsilon$-net in $L_1(\mathbb{P}_n)$ 
 over $\mathcal{G}_{n,M}$, then
\[
E_\epsilon \sup_{q(x)\in \mathcal{G}_{n,M}}\|\frac{1}{n}\sum_{i=1}^n \epsilon_i q(\xobs_i)\|\leq E_\epsilon 
 \sup_{q(x)\in \mathcal{H}}\|\frac{1}{n}\sum_{i=1}^n \epsilon_i q(\xobs_i)\|+\epsilon.
\]
The cardinality of $\mathcal{H}$ can be chosen equal to $N(\epsilon,\mathcal{G}_{n,M},L_1(\mathbb{P}_n)$. 
Bound the $L_1$-norm on the right by the Orlicz-norm $\psi_2$ and use the 
maximal inequality (Lemma 2.2.2 of van der Vaart and Wellner (1996)) and Hoeffding's inequality, 
it can be shown that 
\begin{equation} \label{Liangeq5}
 E_\epsilon \sup_{q(x)\in \mathcal{G}_{n,M}}\|\frac{1}{n}\sum_{i=1}^n \epsilon_i q(\xobs_i)\| 
 \leq K  \sqrt{1+\log N(\epsilon,\mathcal{G}_{n,M},L_1(\mathbb{P}_n))}\sqrt{6/n}M +\epsilon  
  \rightarrow_{P^*}  \epsilon,
\end{equation} 
where $K$ is a constant, and $P^*$ denotes outer probability. 
It has been shown that the left side of (\ref{Liangeq5}) converges to zero in probability. Since it is 
 bounded by $M$, its expectation with respect to $\xobs_1,\ldots,\xobs_n$ converges to zero by the 
 dominated convergence theorem.
 
 This concludes the proof that 
$\sup_{\btheta_n^{(t)}\in \Theta_n}\sup_{\btheta\in \Theta_n}|\tilde{G}_n(\btheta|\btheta_n^{(t)})
 -G_n(\btheta|\btheta_n^{(t)})|\rightarrow_p 0$ in mean. Further, by Markov inequality, we conclude that 
 \begin{equation} \label{Liangeq3}
 \sup_{\btheta_n^{(t)}\in \Theta_n}\sup_{\btheta\in \Theta_n}|\tilde{G}_n(\btheta|\btheta_n^{(t)})
 -G_n(\btheta|\btheta_n^{(t)})|  \to_p 0. 
 \end{equation} 
 
 To establish the uniform convergence of the first term of (\ref{decompeq}), we fix 
 $\xobs_1,...,\xobs_n$. By  
  condition (A3), 
 $n(\hat{G}_n(\btheta|\tilde{x},\btheta_n^{(t)})-\tilde{G}_n(\btheta|\btheta_n^{(t)}))= 
 Z_{t,1}+Z_{t,2}+\cdots+Z_{t,n}$. By Bernstein's inequality,   
 \[
  P(n|\hat{G}_n(\btheta|\tilde{x},\btheta_n^{(t)})-\tilde{G}_n(\btheta|\btheta_n^{(t)})|>z)
  =P(|Z_{t,1}+Z_{t,2}+\cdots+Z_{t,n}|>z)\leq 2 \exp\left\{-\frac{1}{2} \frac{z^2}{v+M_b z}\right\},
 \]
 for $v \geq v_1+\cdots +v_n$. 
 By Lemma 2.2.10 of van der Vaart and Wellner (1996), for Orlicz norm $\psi_1$, we have
 \[
 \begin{split}
 & \|\sup_{\btheta\in \Theta_n, t=1,2,\ldots,T} 
  n \big\{\hat{G}_n(\btheta|\tbx,\btheta_n^{(t)})-\tilde{G}_n(\btheta|\btheta_n^{(t)})  \big\}  \|_{\psi_1}  \\
 & \leq \epsilon +
 K(M_b \log (1+TN(\epsilon,\mathcal{G}_{n,M},L_1(\mathbb{P}_n)))
 +\sqrt{v}\sqrt{\log(1+TN(\epsilon,\mathcal{G}_{n,M},L_1(\mathbb{P}_n)))}),
 \end{split}
\]
 for a constant $K$ and any $\epsilon>0$. By condition (A2)-(c) and the 
 condition $\log(T)=o(n)$,  
 \[
 \begin{split}
 & \|\sup_{\btheta\in \Theta_n, t=1,2,\ldots,T} 
  \{ \hat{G}_n(\btheta|\tbx,\btheta_n^{(t)})-\tilde{G}_n(\btheta|\btheta_n^{(t)}) \}  \|_{\psi_1} \\
 & \leq  \epsilon+
  K(M_b \log (1+TN(\epsilon,\mathcal{G}_{n,M},L_1(\mathbb{P}_n)))/n 
 +\sqrt{v/n}\sqrt{\log(1+TN(\epsilon,\mathcal{G}_{n,M},L_1(\mathbb{P}_n)))/n}) 
  \rightarrow_{P^*} \epsilon. 
 \end{split}
\]
 Therefore, 
\begin{equation} \label{Liangeq4}
 \sup_{\btheta\in \Theta_n,t=1,2,...,T}|\hat G_n(\btheta|\widetilde{x},\btheta_{n}^{(t)})
 -\tilde{G}_n(\btheta|\btheta_{n}^{(t)})|\rightarrow_{p} 0.
\end{equation} 
 The theorem can then be concluded by combining (\ref{Liangeq3}) and (\ref{Liangeq4}). 
 \end{proof}

 \noindent 
 {\bf Remark R1} {\it (On the metric entropy condition)}  
  Assume that all elements in $\cup_{n\geq 1}\mathcal{F}_n$ are uniformly Lipschitz with respect 
  to the $L_1$-norm.
  Then the metric entropy $\log N(\epsilon,\mathcal{G}_{n,M},L_1(\mathbb{P}_n))$ can be 
  measured based on the parameter space $\Theta$.  
  Since the functions in $\mathcal{G}_{n,M}$ are all bounded, 
  the corresponding parameter space $\Theta_{n,M}$ can be contained in a $L_1$-ball by 
  the continuity of $\log f(\tilde{x}|\btheta)$ in $\btheta$. 
  Further, we assume that the diameter of the $L_1$-ball or the space $\Theta_{n,M}$ 
  grows at a rate of $O(n^{\alpha})$ for some $0 \leq \alpha<1/2$,
  then $\log N(\epsilon,\mathcal{G}_{n,M},L_1(\mathbb{P}_n)) = O(n^{2\alpha} \log p)$ holds, 
  which allows $p$ to grow at a polynomial rate of $O(n^{\gamma})$ for some constant  $0<\gamma<\infty$.  
  Note that the increased diameter accounts for the conventional assumption 
  that the size of the true model grows with the sample size $n$.  
  Refer to Vershynin (2015) for more discussions on this issue. 
  Similar conditions on metric entropy have been used in the literature 
  of high-dimensional statistics. For example,  
  Raskutti et al. (2011) studied minimax rates of estimation for 
  high-dimensional linear regression over $L_q$-balls.

\noindent {\bf Remark R2} {\it (On the condition of $T$)}.  Since the imputation step draws random data at each iteration $t$,
  there is no way to show uniform convergence of $\btheta_n^{(t+1)}$
  to $\btheta_*^{(t)}$ over all possible $\btheta_n^{(t)} \in \Theta_n$. However, we are able to prove
that the consistency results hold for any sequence of $\btheta_n^{(1)},...,\btheta_n^{(T)}$
with $T$ being not too large compared to $e^n$. This is enough for Theorems \ref{them1}-\ref{them4}.
To justify this, we may consider the case that the dimension of $\btheta_n$ grows with $n$ at a 
rate of $p=O(n^{\gamma})$ for a constant $\gamma>0$, say, $\gamma=5$. 
 Then it is easy to see that when $n>13$,
 the ratio $T/p$ has an order of
\[
 O(e^n/p)=O(e^{n-\gamma \log(n)}) \succ O(e^{0.1 n}) \succ O(p^{100}),
\]
 which implies that essentially there is no constraint on the setting of $T$.
 Note that for MCMC simulations, the number of iterations is often set to a low-order polynomial
 of $p$ for a given set of observations.

 For any $\btheta_n^{(t)}\in \Theta_n^{T}$, 
 we define ${\btheta}_n^{(t+1)}=\arg\max_{\btheta\in \Theta_n} 
 \hat{G}_n(\btheta|\widetilde{x},\btheta_n^{(t)})$
 and $\btheta_*^{(t)}=\arg\max_{\btheta\in \Theta_n} G_n(\btheta|\btheta_n^{(t)})$.
 We would like to establish the uniform consistency of ${\btheta}_n^{(t+1)}$ 
 with respect to $t$, i.e.,
 \begin{equation} \label{Faeq1} 
  \sup_{t \in\{1,2,\ldots,T\}} \|{\btheta}_n^{(t+1)}-\btheta_*^{(t)}\| \to_p 0, 
  \quad \mbox{as $n\to \infty$}.  
 \end{equation}  
 
 To achieve this goal, we assume the following condition: 
 \begin{itemize}
 \item[(A4)] For each $t=1,2,\ldots, T$, $G_n(\btheta|\btheta_n^{(t)})$ 
             has a unique maximum at $\btheta_*^{(t)}$; for any $\epsilon>0$,
           $\sup_{\btheta \in \Theta_n\setminus B_t(\epsilon)} G_n(\btheta|\btheta_n^{(t)})$            exists, where 
           $B_t(\epsilon)=\{\btheta \in \Theta_n: \|\btheta-\btheta_*^{(t)}\| < \epsilon\}$. 
           Let $\delta_t=G_n(\btheta_*^{(t)}|\btheta_n^{(t)})- 
          \sup_{\btheta \in \Theta_n\setminus B_t(\epsilon)} G_n(\btheta|\btheta_n^{(t)})$, 
           $\delta=\min_{t \in \{1,2,\ldots,T\}} \delta_t>0$. 
 \end{itemize}  

 Note that the existence of  $\sup_{\btheta \in \Theta_n\setminus B_t(\epsilon)} 
 G_n(\btheta|\btheta_n^{(t)})$ can be easily satisfied if $\Theta_n$ is restricted 
 to a compact set, which implies that 
 $\Theta_n\setminus B_t(\epsilon)$ is also a compact set and thus the supremum is achievable. 
 This condition can also be 
 satisfied by assuming that $\Theta_n$ is convex and for each $t$,
  $\btheta_*^{(t)}$ is in the 
 interior of $\Theta_n$ and $G_n(\btheta|\btheta_n^{(t)})$ is concave in $\btheta$.  
  
 \begin{theorem} \label{them1new} Assume conditions (A1)-(A4) hold, then the maximum pseudo-complete data 
 likelihood estimate ${\btheta}_n^{(t+1)}$ is uniformly consistent to $\btheta_*^{(t)}$ 
 over $t=1,2,\ldots,T$,  i.e. (\ref{Faeq1}) holds. 
 \end{theorem}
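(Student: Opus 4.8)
The plan is to run the classical argmax (M-estimation) consistency argument, but carried out uniformly over the iteration index $t$, with the uniform law of large numbers of Theorem \ref{them0} supplying the stochastic control and the well-separation hypothesis (A4) supplying the deterministic identifiability. Fix $\epsilon>0$ and let $\delta=\delta(\epsilon)>0$ be the separation constant furnished by (A4), so that for every $t$,
\[
\sup_{\btheta \in \Theta_n\setminus B_t(\epsilon)} G_n(\btheta|\btheta_n^{(t)}) \leq G_n(\btheta_*^{(t)}|\btheta_n^{(t)})-\delta .
\]

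First I would show that the optimality gap of $\btheta_n^{(t+1)}$, measured in the population objective $G_n$, is controlled by twice the uniform deviation. Write $\Delta_n := \sup_{\btheta_n^{(t)}\in \Theta_n^T}\sup_{\btheta\in \Theta_n}|\hat{G}_n(\btheta|\tbx,\btheta_n^{(t)})-G_n(\btheta|\btheta_n^{(t)})|$. Since $\btheta_n^{(t+1)}$ maximizes $\hat{G}_n(\cdot|\tbx,\btheta_n^{(t)})$ over $\Theta_n$ and $\btheta_*^{(t)}\in\Theta_n$, we have $\hat{G}_n(\btheta_n^{(t+1)}|\tbx,\btheta_n^{(t)})\geq \hat{G}_n(\btheta_*^{(t)}|\tbx,\btheta_n^{(t)})$. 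Inserting and subtracting $\hat{G}_n$ at both $\btheta_*^{(t)}$ and $\btheta_n^{(t+1)}$, the bound
\[
0 \leq G_n(\btheta_*^{(t)}|\btheta_n^{(t)})-G_n(\btheta_n^{(t+1)}|\btheta_n^{(t)}) \leq 2\Delta_n
\]
holds simultaneously for every $t\in\{1,\ldots,T\}$, because the middle term is nonpositive by optimality of $\btheta_n^{(t+1)}$ while the two end terms are each dominated by $\Delta_n$; the left inequality is the optimality of $\btheta_*^{(t)}$ for $G_n$.

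The conclusion then follows by contraposition against (A4). On the event $\{2\Delta_n<\delta\}$ the displayed bound forces $G_n(\btheta_*^{(t)}|\btheta_n^{(t)})-G_n(\btheta_n^{(t+1)}|\btheta_n^{(t)})<\delta$ for every $t$; but if $\btheta_n^{(t+1)}$ lay outside $B_t(\epsilon)$ the separation inequality would give a gap of at least $\delta$, a contradiction. Hence on this event $\btheta_n^{(t+1)}\in B_t(\epsilon)$ for all $t$ at once, i.e. $\sup_t\|\btheta_n^{(t+1)}-\btheta_*^{(t)}\|<\epsilon$. By Theorem \ref{them0}, $\Delta_n\to_p 0$ whenever $\log T=o(n)$, so $P(2\Delta_n<\delta)\to 1$; since $\epsilon>0$ was arbitrary, (\ref{Faeq1}) follows.

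The main obstacle is the uniformity in $t$: both the estimate $\btheta_n^{(t+1)}$ and its target $\btheta_*^{(t)}$ move with $t$, and the imputed data $\tbx$ is redrawn at every iteration, so a fixed-$t$ consistency statement would not aggregate into a supremum over the whole path. This is precisely why Theorem \ref{them0} is stated with the outer supremum over $\btheta_n^{(t)}\in\Theta_n^T$: it produces a single deviation $\Delta_n$ dominating all $t$ simultaneously. The remaining point of care is that the separation constant $\delta=\min_t\delta_t$ of (A4) must stay bounded away from zero as $n\to\infty$, so that the comparison $2\Delta_n<\delta$ is nonvacuous in the limit; under the compactness or the convexity-and-concavity sufficient conditions noted immediately after (A4), this is automatic.
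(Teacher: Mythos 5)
Your proposal is correct and follows essentially the same route as the paper: the paper reduces Theorem \ref{them1new} to Lemma \ref{lem00} with $P_{\lambda_n}\equiv 0$, and that lemma's proof is exactly your argument (uniform deviation from Theorem \ref{them0} plus the well-separation in (A4)), merely phrased as two $\delta/2$-events rather than your single basic inequality $0\leq G_n(\btheta_*^{(t)}|\btheta_n^{(t)})-G_n(\btheta_n^{(t+1)}|\btheta_n^{(t)})\leq 2\Delta_n$. Your closing remark that $\delta=\min_t\delta_t$ must remain bounded away from zero is a fair observation about (A4) that the paper leaves implicit.
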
  
  
 \begin{proof} Since both $\hat{G}_n(\btheta|\widetilde{x},\btheta_n^{(t)})$ and 
  $G_n(\btheta|\btheta_n^{(t)})$ are continuous in $\theta$ as implied by the continuity of 
 $\log f_{\btheta}(\tbx)$, the remaining part of the proof follows from 
 Lemma \ref{lem00} by setting the penalty function $P_{\lambda_n}(\btheta)=0$ for all 
 $\btheta \in \Theta_n$. \end{proof} 
 
 \begin{lemma}\label{lem00} Consider a sequence of 
 functions $Q_t(\btheta, \bX_n)$ for $t=1,2,\ldots,T$. Suppose that the following 
  conditions are satisfied: (B1) For each $t$, 
  $Q_t(\btheta,\bX_n)$ is continuous in $\btheta$ and there exists a function 
  $Q_t^*(\btheta)$, which is continuous in $\btheta$ and 
  uniquely maximized at $\btheta_*^{(t)}$.
  (B2) For any $\epsilon>0$,  $\sup_{\btheta \in \Theta_n\setminus B_t(\epsilon)} 
  Q_t^*(\btheta)$ exists, where 
   $B_t(\epsilon)=\{\btheta: \|\btheta-\btheta_*^{(t)}\| < \epsilon\}$; 
   Let $\delta_t=Q_t^*(\btheta_*^{(t)})-
    \sup_{\btheta \in \Theta_n\setminus B_t(\epsilon)} Q_t^*(\btheta)$,
    $\delta=\min_{t \in \{1,2,\ldots,T\}} \delta_t>0$.
  (B3) $\sup_{t\in\{1,2,\ldots,T\}} \sup_{\btheta \in \Theta_n} 
       |Q_t(\btheta, \bX_n)-Q_t^*(\btheta)| \to_p 0$ as $n\to \infty$. 
  (B4) The penalty function $P_{\lambda_n}(\btheta)$ is non-negative and 
       converges to 0 uniformly over the set $\{\btheta_*^{(t)}: t=1,2,\ldots,T\}$ 
       as $n\to \infty$, where $\lambda_n$ is a regularization parameter and its 
       value can depend on the sample size $n$. 
  Let $\btheta_n^{(t)}=\arg\max_{\btheta\in \Theta_n} 
 \{ Q_t(\btheta, \bX_n)-P_{\lambda_n}(\btheta)\}$. Then the uniform convergence holds, i.e., 
  $\sup_{t \in \{1,2,\ldots,T\}} \|\btheta_n^{(t)}- \btheta_*^{(t)}\|\to_p 0$. 
 \end{lemma}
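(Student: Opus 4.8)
The plan is to treat this as a uniform-in-$t$ version of the classical argmax (M-estimator) consistency argument, adapted to accommodate the non-negative, vanishing penalty $P_{\lambda_n}$. The two scalar quantities that drive the whole proof are
\[
M_n = \sup_{t\in\{1,\ldots,T\}}\sup_{\btheta\in\Theta_n}|Q_t(\btheta,\bX_n)-Q_t^*(\btheta)|,
\qquad
\eta_n = \sup_{t\in\{1,\ldots,T\}} P_{\lambda_n}(\btheta_*^{(t)}).
\]
By (B3) we have $M_n\to_p 0$, and by (B4) we have $\eta_n\to 0$. The argument reduces the event of interest to the event $\{2M_n+\eta_n\geq\delta\}$, whose probability vanishes because $\delta>0$ is a fixed positive constant supplied by (B2).

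First I would fix $\epsilon>0$ and work, for an individual index $t$, on the event $\{\|\btheta_n^{(t)}-\btheta_*^{(t)}\|\geq\epsilon\}$. On this event $\btheta_n^{(t)}\in\Theta_n\setminus B_t(\epsilon)$, so the well-separatedness in (B2) gives the upper bound
\[
Q_t^*(\btheta_n^{(t)})\leq \sup_{\btheta\in\Theta_n\setminus B_t(\epsilon)}Q_t^*(\btheta)
= Q_t^*(\btheta_*^{(t)})-\delta_t \leq Q_t^*(\btheta_*^{(t)})-\delta.
\]
Next I would produce a matching lower bound for $Q_t^*(\btheta_n^{(t)})$ from the definition of $\btheta_n^{(t)}$ as the maximizer of the penalized criterion. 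Using $P_{\lambda_n}\geq 0$ to discard the penalty evaluated at $\btheta_n^{(t)}$, then the penalized optimality $Q_t(\btheta_n^{(t)})-P_{\lambda_n}(\btheta_n^{(t)})\geq Q_t(\btheta_*^{(t)})-P_{\lambda_n}(\btheta_*^{(t)})$, together with two applications of the uniform bound $|Q_t-Q_t^*|\leq M_n$, yields
\[
Q_t^*(\btheta_n^{(t)}) \geq Q_t(\btheta_n^{(t)})-M_n
\geq Q_t(\btheta_*^{(t)})-P_{\lambda_n}(\btheta_*^{(t)})-M_n
\geq Q_t^*(\btheta_*^{(t)})-2M_n-\eta_n.
\]
Combining the two displays forces $\delta\leq 2M_n+\eta_n$.

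The key point is that the resulting inequality $\delta\leq 2M_n+\eta_n$ holds with the same right-hand side for every $t$, precisely because $M_n$ and $\eta_n$ were defined as suprema over $t$ and $\delta=\min_t\delta_t$. Hence
\[
\Big\{\sup_{t\in\{1,\ldots,T\}}\|\btheta_n^{(t)}-\btheta_*^{(t)}\|\geq\epsilon\Big\}
\subseteq \{2M_n+\eta_n\geq\delta\},
\]
and taking probabilities gives $P\big(\sup_t\|\btheta_n^{(t)}-\btheta_*^{(t)}\|\geq\epsilon\big)\leq P(2M_n+\eta_n\geq\delta)\to 0$, since $2M_n+\eta_n\to_p 0$ while $\delta$ is a fixed positive constant; as $\epsilon>0$ was arbitrary, this establishes the desired uniform consistency $\sup_t\|\btheta_n^{(t)}-\btheta_*^{(t)}\|\to_p 0$. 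There is no deep obstacle here; the thing to watch is simply that every bound stays uniform in $t$ — which is exactly why (B2)--(B4) are stated with suprema and minima over $t$ — and that the non-negativity of the penalty is used in the correct direction when sandwiching $Q_t^*(\btheta_n^{(t)})$. Because $T$ is finite for each $n$, the outer supremum over $t$ is a maximum over finitely many measurable terms, so no measurable-selection or separability subtleties arise beyond those already absorbed into the uniform convergence hypothesis (B3).
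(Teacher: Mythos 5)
Your proof is correct and follows essentially the same route as the paper's: the classical well-separation argmax argument, combining the uniform deviation bound from (B3), the separation constant $\delta$ from (B2), non-negativity of the penalty at $\btheta_n^{(t)}$, and its uniform vanishing at the $\btheta_*^{(t)}$'s. The only difference is packaging — you derive the single quantitative inequality $\delta\leq 2M_n+\eta_n$ on the bad event, whereas the paper phrases the same sandwich as two high-probability events with threshold $\delta/2$; your version is, if anything, slightly tidier in how it tracks the $o(1)$ penalty term.
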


 \begin{proof} 
 Consider two events (i) $\sup_{t \in \{1,2,\ldots,T\} }
  \sup_{\btheta \in \Theta_n\setminus B_t(\epsilon)} 
  |Q_t(\btheta,\bX_n)-Q_t^*(\btheta)| < \delta/2$, and (ii) 
 $\sup_{t \in \{1,2,\ldots,T\} }$ 
  $\sup_{\btheta \in B_t(\epsilon)}
  |Q_t(\btheta,\bX_n)-Q_t^*(\btheta)| < \delta/2$. 
 From event (i), we can deduce that for any $t \in \{1,2,\ldots, T\}$ and any 
 $\btheta \in \Theta_n\setminus B_t(\epsilon)$, 
 $Q_t(\btheta, \bX_n) < Q_t^*(\btheta)+\delta/2 \leq Q_t^*(\btheta_*^{(t)}) -\delta_t 
  +\delta/2 \leq Q_t^*(\btheta_*^{(t)}) -\delta/2$. Therefore, 
  $Q_t(\btheta, \bX_n) -P_{\lambda_n}(\btheta) <  Q_t^*(\btheta_*^{(t)}) -\delta/2
   -o(1)$ by condition (B4).

 From event (ii), we can deduce that for any 
 $t \in \{1,2,\ldots, T\}$ and any $\btheta \in B_t(\epsilon)$, 
  $Q_t(\btheta, \bX_n)> Q_t^*(\btheta) -\delta/2$ and 
  $Q_t(\btheta_*^{(t)}, \bX_n)> Q_t^*(\btheta_*^{(t)}) -\delta/2$. 
  Therefore,  $Q_t(\btheta_*^{(t)}, \bX_n)-P_{\lambda_n}(\btheta_*^{(t)}) 
   > Q_t^*(\btheta_*^{(t)}) -\delta/2- o(1)$ by condition (B4).  
  
 If both events hold simultaneously, 
 then we must have ${\btheta}_n^{(t)} \in B_t(\epsilon)$ for all $t \in \{1,2,\ldots, T\}$ 
  as $n \to \infty$.  
 By condition (B3), the probability that both events hold tends to 1. 
 Therefore,
\[
P(\mbox{$\btheta_n^{(t)} \in B_t(\epsilon)$ for all $t=1,2,\ldots,T$}) \to 1,
\] 
which concludes the lemma. 
\end{proof}

 Theorem \ref{them1new} establishes the consistency of  
 ${\btheta}_n^{(t+1)}$ with respect to $\btheta_*^{(t)}$ for each $t=1,2,\ldots,T$. 
 However, in the small-$n$-large-$p$ scenario, 
 ${\btheta}_n^{(t+1)}$ is not well defined. 
  For this reason, a sparsity constraint needs to be imposed on $\btheta$. For example, we 
  can apply a regularization method to get an estimate of $\btheta_*^{(t)}$; 
 that is, we can define 
   \begin{equation} \label{mingeq5}
   \btheta_{n,p}^{(t+1)}
   =\arg\max_{\btheta\in \Theta_n}\left\{ \hat{G}_n(\btheta|\tbx,\btheta_n^{(t)})
    -P_{\lambda_n}(\btheta)\right\},
   \end{equation}  
   where the penalty function $P_{\lambda_n}(\btheta)$ constrains the sparsity 
   of the solution.  Assume that 
  \begin{itemize} 
  %\item[(A5)] For each $t \in \{1,2,\ldots,T\}$, 
  %  $\sum_{i=1}^p I(\theta_{*,i}^{(t)}\ne 0)=O(n^{1/2-\alpha})$ 
  %  for some constant $0<\alpha \leq 1/2$, where $\theta_{*,i}^{(t)}$ denotes the 
  %  $i$th element of $\theta_*^{(t)}$ and $I(\cdot)$ is the indicator function.  
  \item[(C1)] The penalty function $P_{\lambda_n}(\btheta)$ is non-negative,
   ensures the existence of $\btheta_{n,p}^{(t+1)}$ for all $n \in \mN$ 
   and $t=1,2,\ldots,T$,  
   and converges to 0 uniformly over the set  
   $\{\btheta_*^{(t)}: t=1,2\ldots,T\}$ as $n\to\infty$. 
  \end{itemize} 
  
 \begin{corollary} \label{cor0A} If the conditions (A1)-(A4) and (C1) hold, 
 then the regularization estimator $\btheta_{n,p}^{(t+1)}$ in (\ref{mingeq5}) 
 is uniformly consistent to $\btheta_*^{(t)}$ over $t=1,2,\ldots,T$, i.e.,  
 $\sup_{t \in\{1,2,\ldots,T\}} \|\btheta_{n,p}^{(t+1)}-\btheta_*^{(t)}\| \to_p 0$
 as $n\to \infty$.
 \end{corollary}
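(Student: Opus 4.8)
The plan is to derive Corollary \ref{cor0A} as an immediate consequence of Lemma \ref{lem00}, in exact parallel with the proof of Theorem \ref{them1new}, the only difference being that the penalty $P_{\lambda_n}(\btheta)$ is now retained rather than set identically to zero. Concretely, I would make the identifications
\[
Q_t(\btheta,\bX_n)=\hat{G}_n(\btheta|\tbx,\btheta_n^{(t)}),\qquad
Q_t^*(\btheta)=G_n(\btheta|\btheta_n^{(t)}),
\]
with the same penalty $P_{\lambda_n}$, so that the penalized maximizer in Lemma \ref{lem00} coincides with $\btheta_{n,p}^{(t+1)}$ defined in (\ref{mingeq5}). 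It then suffices to check that the hypotheses (A1)--(A4) and (C1) imply conditions (B1)--(B4) of the lemma; once this is done, the lemma yields $\sup_{t}\|\btheta_{n,p}^{(t+1)}-\btheta_*^{(t)}\|\to_p 0$, which is exactly the assertion.

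The verification runs condition by condition. For (B1), continuity of $Q_t(\btheta,\bX_n)$ in $\btheta$ follows from (A1), since a finite average of functions continuous in $\btheta$ is continuous; the existence of a unique maximizer $\btheta_*^{(t)}$ of $Q_t^*=G_n(\btheta|\btheta_n^{(t)})$ is the content of (A4), modulo the continuity of $Q_t^*$ addressed below. Condition (B2) is precisely (A4), which supplies both the existence of $\sup_{\btheta\in\Theta_n\setminus B_t(\epsilon)}Q_t^*(\btheta)$ and the positivity of $\delta=\min_t\delta_t$. Condition (B3) is nothing other than the uniform law of large numbers (\ref{Liangeq2}) proved in Theorem \ref{them0}, which under (A1)--(A3) gives $\sup_t\sup_{\btheta\in\Theta_n}|Q_t(\btheta,\bX_n)-Q_t^*(\btheta)|\to_p 0$. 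Finally, (B4) is verbatim condition (C1), which additionally guarantees the existence of $\btheta_{n,p}^{(t+1)}$ for every $n\in\mN$ and every $t$.

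The one step requiring genuine work, rather than pure bookkeeping, is establishing that $Q_t^*(\btheta)=\int\log f_{\btheta}(\tilde x)\,f(\xobs|\btheta^*)h(\txmis|\btheta_n^{(t)})\,d\tilde x$ is continuous in $\btheta$, as (B1) demands. I would handle this by dominated convergence: for any sequence $\btheta_k\to\btheta$, the integrands $\log f_{\btheta_k}(\tilde x)$ converge pointwise to $\log f_{\btheta}(\tilde x)$ by (A1), and they are uniformly dominated by the envelope $m_n(\tilde x)$ of (A2)(a), whose integrability against the measure $f(\xobs|\btheta^*)h(\txmis|\btheta_n^{(t)})$ is secured by (A2)(b), since $\int m_n(\tilde x)f(\xobs|\btheta^*)h(\txmis|\btheta_n^{(t)})\,d\tilde x=\int\tilde m_n(\xobs,\btheta_n^{(t)})f(\xobs|\btheta^*)\,d\xobs\le E[m_n^*(\xobs)]<\infty$. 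Interchanging limit and integral then gives the required continuity. With all four conditions of Lemma \ref{lem00} in force, the corollary follows at once.
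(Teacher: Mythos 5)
Your proposal is correct and follows essentially the same route as the paper, whose entire proof of Corollary \ref{cor0A} is the one-line remark that it follows directly from Lemma \ref{lem00}; you simply spell out the identification of $Q_t$, $Q_t^*$, and the penalty, and verify (B1)--(B4) from (A1)--(A4) and (C1). Your dominated-convergence argument for the continuity of $G_n(\btheta|\btheta_n^{(t)})$ is a useful piece of bookkeeping that the paper leaves implicit, but it does not change the nature of the argument.
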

 \begin{proof} It follows the proof of Lemma \ref{lem00} directly. \end{proof}

   Take the high-dimensional regression as example. 
  If we allow $p$ to grow with $n$  at the rate 
   $p=O(n^\gamma)$ for some constant $\gamma>0$, allow the size of 
   $\bbeta_*^{(t)}$ for all $t$ to grow with $n$  at the rate  
    $O(n^{\alpha})$ for some constant $0<\alpha<1/2$, 
   choose $\lambda_n=O(\sqrt{\log(p)/n})$, 
   and set $P_{\lambda_n}(\btheta)=\lambda_n \sum_{i=1}^p c_{\lambda_n}(|\theta_i|)$,  
   where $c_{\lambda_n}(\cdot)$ is set in the form of SCAD (Fan and Li, 2001) 
   or MCP (Zhang, 2010) penalties, 
   then the condition (C1) is satisfied. For both the SCAD and MCP penalties, 
   $c_{\lambda_n}(|\theta_i|)=0$ if $\theta_i=0$ and bounded by a constant otherwise.  
   Similarly, if the beta-min assumption holds, i.e., there exists a 
   constant $\beta_{\min}>0$ such that $\min_{j \in S^*} |\beta_{*j}| \geq \beta_{\min}$, 
   where $S^*=\{j: \beta_{*j} \ne 0\}$ denotes the index set of non-zero   
   regression coefficients,  then the 
   reciprocal Lasso penalty (Song and Liang, 2015b) also satisfies (C1). 
   Note that, if $\Theta=\mR^p$, the Lasso penalty does not satisfy (C1) 
   as which is unbounded.  
   This explains why the Lasso estimate is unbiased even as $n\to\infty$.  
   However, if $\Theta_n$ is restricted to a bounded space, 
   then the Lasso penalty also satisfies (C1).

   Alternative to regularization methods, one may first restrict the space 
   of $\btheta_*^{(t)}$ to some low-dimensional subspace through 
   sure screening, and then find a consistent estimate in the subspace using 
   a conventional statistical methods, such as maximum likelihood, moment 
   estimation, or even regularization. 
   Both the $\psi$-learning  (Liang, Song and Qiu, 2015)
   sure independence screening (SIS) (Fan and Lv, 2008; Fan and Song, 2010) methods 
   belong to this class. 
   For $\psi$-learning, after correlation screening (based on the
   pseudo-complete data), the remaining network structure estimation procedure is essentially
   the same with the covariance selection method (Dempster, 1972) which,
   by nature, is a maximum likelihood estimation method.
   It is interesting to point out that the sure screening-based methods can 
   be viewed as a special subclass of regularization methods, for which the solutions
   in the low-dimensional subspace receives a zero penalty, and those outside
   the subspace receives a penalty of $\infty$. It is easy to see
   that such a binary-type penalty function satisfies condition (C1).

   Both the regularization and sure screening-based methods are constructive. 
   In what follows, we give a proof for the use of general 
   consistent estimation procedures in the IC algorithm. 
   Let $\btheta_{n,g}^{(t+1)}$ denote the estimate of $\btheta_*^{(t)}$ 
   produced by such a general consistent estimation procedure at iteration $t+1$.  
   Corollary \ref{cor0} shows that if $\btheta_{n,g}^{(t+1)}$ is 
   accurate enough for each $t$ (pointwisely) and the log-likelihood 
   function of the pseudo-complete data satisfies some moment conditions,  
   then the estimation procedure can be used in the IC algorithm. 
   Therefore, by its MLE nature in the subspace, the use of the 
   $\psi$-learning algorithm in the IC algorithm can also be justified 
   by Corollary \ref{cor0}. 

   %\item[(A5)] For each $t \in \{1,2,\ldots,T\}$, 
  %  $\sum_{i=1}^p I(\theta_{*,i}^{(t)}\ne 0)=O(n^{1/2-\alpha})$ 
  %  for some constant $0<\alpha \leq 1/2$, where $\theta_{*,i}^{(t)}$ denotes the 
  %  $i$th element of $\theta_*^{(t)}$ and $I(\cdot)$ is the indicator function.  

\begin{itemize}
\item[(C2)] [Conditions for general consistent estimate $\btheta_{n,g}^{(t)}$] 
 Assume that for each $t=1,2,\ldots, T$, 
  $\btheta_{n,g}^{(t+1)}-\btheta_*^{(t)}=O_p(1/\sqrt{n})$ (pointwisely) and  
 the Hessian matrix $\partial^2 G_n(\btheta|\tbx,\btheta_n^{(t)})/\partial \btheta \partial \btheta'$ is bounded 
 in a neighborhood of $\btheta_*^{(t)}$; let 
  $$Z_{t,i}'=\log f(\xobs_i,\txmis_i|\btheta_{n,g}^{(t+1)})-\int \log f(\xobs,\txmis|\btheta_{n,g}^{(t+1)})
  f(\xobs|\btheta^*) h(\txmis|\xobs_i, \btheta_n^{(t)})d\txmis d\xobs,$$ then  
  $E|Z_{t,i}'|^m \leq m! \tilde{M}_b^{m-2} \tilde{v}_i/2$ for every $m\geq 2$ and some constants 
  $\tilde{M}_b>0$ and $\tilde{v}_i=O(1)$.
 % That is, each $Z_{t,i}'$ is  a sub-exponential random variable. 
 \end{itemize}

 \begin{corollary} \label{cor0} Assume (A1)-(A4) and (C2). Then  $\btheta_{n,g}^{(t+1)}$ is 
  uniformly consistent to $\btheta_*^{(t)}$ over $t=1,2,\ldots,T$, i.e., 
 $\sup_{t \in\{1,2,\ldots,T\}} \|\btheta_{n,g}^{(t+1)}-\btheta_*^{(t)} \| \to_p 0$
 as $n\to \infty$.
 %  $\sup_{t\in\{1,2,\ldots,T\}} P(|\btheta_{n,g}^{(t+1)} -\btheta_*^{(t)}|>\epsilon) \to 0$ as $n\to \infty$.  
  \end{corollary}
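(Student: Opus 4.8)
The plan is to upgrade the pointwise convergence granted by (C2) to uniform convergence over the exponentially large index set $\{1,\dots,T\}$ (recall $\log T=o(n)$), following the event-based near-maximization argument used in Lemma \ref{lem00}. First I would reduce the claim to a statement about the population criterion: by condition (A4), whenever $\btheta_{n,g}^{(t+1)}\notin B_t(\epsilon)$ one has $G_n(\btheta_*^{(t)}|\btheta_n^{(t)})-G_n(\btheta_{n,g}^{(t+1)}|\btheta_n^{(t)})\ge\delta>0$. Hence it suffices to prove the uniform near-optimality
\[
 \sup_{t\in\{1,\dots,T\}}\big\{G_n(\btheta_*^{(t)}|\btheta_n^{(t)})-G_n(\btheta_{n,g}^{(t+1)}|\btheta_n^{(t)})\big\}\to_p 0,
\]
since once this quantity lies below $\delta$ with probability tending to one, every $\btheta_{n,g}^{(t+1)}$ is forced into $B_t(\epsilon)$, which is exactly (\ref{Faeq1}) with $\btheta_{n,g}$ in place of $\btheta_n$.

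The core is a uniform empirical-to-population bridge evaluated at the plug-in estimates. Writing the gap as
\[
 G_n(\btheta_*^{(t)}|\btheta_n^{(t)})-G_n(\btheta_{n,g}^{(t+1)}|\btheta_n^{(t)})
 =\big[\hat{G}_n(\btheta_*^{(t)}|\tbx,\btheta_n^{(t)})-\hat{G}_n(\btheta_{n,g}^{(t+1)}|\tbx,\btheta_n^{(t)})\big]+R_t,
\]
where $R_t$ collects the two deviations between $\hat{G}_n$ and $G_n$ at $\btheta_*^{(t)}$ and at $\btheta_{n,g}^{(t+1)}$. The deviation at $\btheta_*^{(t)}$ is controlled uniformly in $t$ by the ULLN of Theorem \ref{them0}. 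For the deviation at the plug-in $\btheta_{n,g}^{(t+1)}$, I would note that $\frac1n\sum_{i=1}^n Z_{t,i}'$ is precisely $\hat{G}_n(\btheta_{n,g}^{(t+1)}|\tbx,\btheta_n^{(t)})$ minus its conditional mean, so the sub-exponential moment bound in (C2) lets me invoke Bernstein's inequality for each fixed $t$ and then a union bound over $t=1,\dots,T$; because $\log T=o(n)$, this yields $\sup_t|\frac1n\sum_i Z_{t,i}'|\to_p 0$, exactly as the $Z_{t,i}$ step does in the proof of Theorem \ref{them0}. Consequently $\sup_t|R_t|\to_p0$.

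It remains to bound the bracketed empirical gap uniformly. Here I would use the remaining part of (C2): the pointwise rate $\btheta_{n,g}^{(t+1)}-\btheta_*^{(t)}=O_p(1/\sqrt n)$ together with the boundedness of the Hessian $\partial^2\hat{G}_n/\partial\btheta\partial\btheta'$ near $\btheta_*^{(t)}$. A second-order Taylor expansion of $\hat{G}_n(\cdot|\tbx,\btheta_n^{(t)})$ about $\btheta_*^{(t)}$, in which the empirical score at $\btheta_*^{(t)}$ is a centered average and thus $O_p(1/\sqrt n)$ pointwise, gives $\hat{G}_n(\btheta_*^{(t)}|\tbx,\btheta_n^{(t)})-\hat{G}_n(\btheta_{n,g}^{(t+1)}|\tbx,\btheta_n^{(t)})=O_p(1/n)$ for each $t$. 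Combining the pointwise curvature bound with the uniform control of $R_t$ from the previous step yields the displayed uniform near-optimality, and then the separation $\delta>0$ from (A4) closes the argument as in Lemma \ref{lem00}.

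The main obstacle is exactly the passage from pointwise to uniform over the nearly-$e^n$ many indices $t$. The engine for this is the sub-exponential condition in (C2): it is what allows Bernstein's inequality plus a union bound to survive the factor $T$ under $\log T=o(n)$, mirroring the role of (A3) in Theorem \ref{them0}. The delicate point is that (C2) supplies only pointwise-in-$\btheta$ information about the estimator (the $O_p(1/\sqrt n)$ rate and the Hessian bound hold near each $\btheta_*^{(t)}$), so care is needed to route all genuinely uniform-in-$t$ control through the empirical-process deviations $R_t$ and the averages $\frac1n\sum_i Z_{t,i}'$, which do carry explicit exponential tails, rather than through the curvature step, which does not.
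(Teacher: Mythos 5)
Your proposal is correct and follows essentially the same route as the paper's proof: a Taylor expansion powered by the $O_p(1/\sqrt{n})$ rate and Hessian bound in (C2), Bernstein's inequality plus a maximal/union bound on the sums $\sum_i Z_{t,i}'$ to survive $\log T=o(n)$, the ULLN of Theorem \ref{them0}, and the separation $\delta>0$ from (A4) to force $\btheta_{n,g}^{(t+1)}$ into $B_t(\epsilon)$. The only variation is that you Taylor-expand the empirical criterion $\hat{G}_n$ (which additionally requires its score at $\btheta_*^{(t)}$ to be $O_p(1/\sqrt{n})$, a fact not stated in (C2)), whereas the paper expands the population criterion $G_n$, whose gradient vanishes exactly at $\btheta_*^{(t)}$ by (A4); the paper's version is slightly cleaner, and both arguments share the same delicate point you correctly flag, namely that the $O_p(1/n)$ curvature term is controlled only pointwise in $t$.
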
  
 \begin{proof}  
 Applying Taylor expansion to $G_n(\btheta|\btheta_n^{(t)})$ at $\btheta_*^{(t)}$, we get 
 $G_n(\btheta_{n,g}^{(t+1)}|\btheta_n^{(t)}) - G_n(\btheta_{*}^{(t)}|\btheta_n^{(t)}) 
 =O_p(1/n)$, following from condition (C2) and condition (A4) that $G_n(\btheta|\btheta_n^{(t)})$ 
 is maximized at $\btheta_*^{(t)}$. Therefore, 
\[
\begin{split} 
 n[\hat{G}_n(\btheta_{n,g}^{(t+1)}|\tbx,\btheta_n^{(t)}) - G_n(\btheta_{*}^{(t)}|\btheta_n^{(t)})]
  &=Z_{t,1}'+\cdots+Z_{t,n}'+n[G_n(\btheta_{n,g}^{(t+1)}|\btheta_n^{(t)})- 
  G_n(\btheta_{*}^{(t)}|\btheta_n^{(t)})]\\
 & = Z_{t,1}'+\cdots+Z_{t,n}'+ \epsilon_n, \\
\end{split}
 \] 
 where $\epsilon_n=O_p(1)$, and  
 \begin{equation} \label{Goodeq3}
P(n|\hat{G}_n(\btheta_{n,g}^{(t+1)}|\tbx,\btheta_n^{(t)}) - G_n(\btheta_{*}^{(t)}|\btheta_n^{(t)})|>nz) 
 \leq P(|Z_{t,1}'+\cdots+Z_{t,n}'|>nz-|\epsilon_n|). 
\end{equation} 
 By Bernstein's inequality, 
 \begin{equation} \label{Goodeq1}
 P(|Z_{t,1}'+\cdots+Z_{t,n}|>nz-|\epsilon_n|) \leq 2 \exp\left\{-\frac{1}{2} \frac{(z-|\epsilon_n|/n)^2}{
   \tilde{v}'+\tilde{M}_b'(z-|\epsilon_n|/n)} \right\}, 
 \end{equation} 
 for $\tilde{v}' \geq (\tilde{v}_1+\cdots+\tilde{v}_n)/n^2$ and $\tilde{M}_b'=\tilde{M}_b/n$. 
 Applying Taylor expansion to the 
 right of (\ref{Goodeq1}) at $z$ and combining with (\ref{Goodeq3}) leads to 
 \begin{equation} \label{Goodeq4}
 P(|\hat{G}_n(\btheta_{n,g}^{(t+1)}|\tbx,\btheta_n^{(t)}) - G_n(\btheta_{*}^{(t)}|\btheta_n^{(t)})|>z)
  \leq  K \exp\left\{-\frac{1}{2} \frac{z^2}{\tilde{v}'+\tilde{M}_b' z} \right\}, 
%  := K \exp\left\{-\frac{1}{2} \frac{z^2}{\tilde{v}+\tilde{M}_b z} \right\},  
 \end{equation}
 where $K=2+\frac{3}{\tilde{M}_b'}O_p(1/n)=2+\frac{3}{\tilde{M}_b}O_p(1)$, since 
 the derivative $|d[z^2/(\tilde{v}'+\tilde{M}_b' z)]/dz| \leq 3/\tilde{M}_b'$.
% Since $K$ will converge to 2 as $n\to \infty$,
% the above proof implies that Bernstein's inequality also holds for asymptotically 
% centered variables. 

 As in the proof of Theorem \ref{them0}, by applying Lemma 2.2.10 of van der Vaart and Wellner (1996), 
 we can prove    
\begin{equation} \label{penneq0} 
 \sup_{\btheta_n^{(t)} \in \Theta_n, t \in \{1,2,\ldots,T\}} \left|\hat{G}_n(\btheta_{n,g}^{(t+1)}|\tbx, 
  \btheta_n^{(t)})- G_n(\btheta_*^{(t)} |\btheta_n^{(t)}) \right| \to_p 0.
\end{equation} 
 Note that, as implied by the proof of Lemma 2.2.10 of van der Vaart and Wellner (1996), 
 (\ref{penneq0}) holds for a general constant $K$ in (\ref{Goodeq4}). 
 Then, by condition (A4), we must have the uniform convergence that 
 $\btheta_{n,g}^{(t+1)} \in B_t(\epsilon)$ for all $t$ as $n\to \infty$, 
 where $B_t(\epsilon)$ is as defined in (A4).  
 This statement can be proved by contradiction as follows:
 
 Assume $\btheta_{n,g}^{(i+1)} \notin B_{i}(\epsilon)$ for some $i\in\{1,2,\ldots,T\}$. By  
 the uniform convergence established in Theorem \ref{them0}, 
  $\left|\hat{G}_n(\btheta_{n,g}^{(i+1)}|\tbx, 
  \btheta_n^{(i)})- G_n( \btheta_{n,g}^{(i+1)} |\btheta_n^{(i)}) \right| =o_p(1)$.
 Further, by condition (A4) and the assumption $\btheta_{n,g}^{(i+1)} \notin B_{i}(\epsilon)$, 
 \[
\begin{split}
 \left| \hat{G}_n(\btheta_{n,g}^{(i+1)}|\tbx, 
  \btheta_n^{(i)})- G_n(\btheta_*^{(i)} |\btheta_n^{(i)}) \right| & \geq  
  \left|G_n( \btheta_{n,g}^{(i+1)} |\btheta_n^{(i)}) - G_n(\btheta_*^{(i)} |\btheta_n^{(i)}) \right| - 
  \left|\hat{G}_n(\btheta_{n,g}^{(i+1)}|\tbx, 
  \btheta_n^{(i)})- G_n( \btheta_{n,g}^{(i+1)} |\btheta_n^{(i)}) \right| \\
  &  \geq \delta-o_p(1), 
 \end{split}
 \]
 which contradicts with the uniform convergence established in (\ref{penneq0}).  
 This concludes the proof. 
 \end{proof}  

\noindent {\bf Remark R3} {\it (On the accuracy of $\btheta_{n,g}^{(t)}$'s)}
   Condition (C2) restricts the consistent estimates to those having a distance
   to the true parameter point of the order  $O_p(1/\sqrt{n})$. Such
   condition can be satisfied by some estimation procedures   
   in the low-dimensional subspace, e.g., maximum likelihood, 
   for which both the variance and bias are often of the order $O(1/n)$ (Firth, 1993) 
   and therefore the root mean squared error is of the order $O(1/\sqrt{n})$. 
   To make the result of Corollary \ref{cor0} more general to include more 
   estimation procedures, we can relax this order 
   to $\btheta_{n,g}^{(t+1)}-\btheta_*^{(t)}=O_p(n^{-1/4})$, if we would like to relax 
   the order of $T$ to $\log(T)=o(\sqrt{n})$ and the order of metric entropy to 
   $\log N(\epsilon,\mathcal{G}_{n,M},L_1(\mathbb{P}_n))=o_p^*(\sqrt{n})$. 
   As mentioned in remarks (R1) and (R2), both the order of $T$ and the order 
   of metric entropy are technical conditions and relaxing them to the order 
   of $O(\sqrt{n})$ will not restrict much the applications of the IC algorithm.   
   The proof for this relaxation is straightforward, following the proof of 
   Corollary \ref{cor0}.

\paragraph{2. Proof of ergodicity of the Markov chain $\{\btheta_n^{(t)} \}$ }

 Although the IC algorithm is different from the stochastic EM algorithm in 
 the $\btheta_n^{(t)}$-updating step, the Markov chains $\{\btheta_n^{(t)}\}$  
 induced by the two algorithms share some similar properties as well as 
 similar proofs. 
 The following two lemmas,  Lemma \ref{lem1} and Lemma \ref{lem2},  
 can be proved in the same way as in Nielsen (2000), and thus the proofs 
 are omitted. 

\begin{lemma} \label{lem1}
 The Markov chain $\{\btheta_n^{(t)}\}$ is irreducible and aperiodic.
\end{lemma}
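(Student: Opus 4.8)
The plan is to analyze the one-step transition kernel of $\{\btheta_n^{(t)}\}$, exhibit a nontrivial measure with respect to which the chain is irreducible, and then rule out a cyclic decomposition. The only randomness entering a single transition comes through the I-step: given $\btheta_n^{(t)}$ we draw $\tbXmis\sim h(\bxmis|\bXobs,\btheta_n^{(t)})$, and the C-step returns $\btheta_n^{(t+1)}=g(\bXobs,\tbXmis)$, where $g$ denotes the measurable consistent-estimation map applied to the pseudo-complete data. Thus the kernel $P(\btheta,\cdot)$ is precisely the pushforward of the predictive density $h(\cdot|\bXobs,\btheta)$ under the map $\tbXmis\mapsto g(\bXobs,\tbXmis)$, and it is the structure of this pushforward that I would exploit.

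First I would establish irreducibility. For the leading examples (e.g., the Gaussian predictive density in (\ref{imputeq1})), $h(\bxmis|\bXobs,\btheta)$ is strictly positive on an open set of imputed values for every admissible $\btheta$. If the estimation map $g(\bXobs,\cdot)$ is continuous and locally non-degenerate, so that a change-of-variables argument applies, then its pushforward carries a component that is absolutely continuous with respect to Lebesgue measure on a common open set $U\subset\Theta_n$, with strictly positive density there. Consequently, for every starting value $\btheta$ and every Borel set $A$ with $\mathrm{Leb}(A\cap U)>0$ we have $P(\btheta,A)>0$, which gives $\varphi$-irreducibility with reference measure $\varphi=\mathrm{Leb}|_{U}$. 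When a sparse estimator is used in the C-step (e.g., SCAD or MCP), $g$ maps onto a union of coordinate subspaces, and I would instead take $\varphi$ to be Lebesgue measure on the selected-model subspace that is reached with positive probability and repeat the argument on that stratum.

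For aperiodicity I would use the fact that the minorizing set $U$ is the same for every initial state: there is $\epsilon>0$ with $P(\btheta,\cdot)\ge \epsilon\,\mathrm{Leb}|_{U}(\cdot)$ for all $\btheta\in\Theta_n$, so $U$ is a small set whose minorizing measure charges $U$ itself. Iterating this bound gives $P^2(\btheta,\cdot)\ge \epsilon^2\,\mathrm{Leb}(U)\,\mathrm{Leb}|_{U}(\cdot)$, so both $n=1$ and $n=2$ admit minorizations by positive multiples of the same measure. The greatest common divisor of the corresponding return times is therefore $1$, and by the standard small-set characterization of aperiodicity for $\varphi$-irreducible chains (Meyn and Tweedie, 1993; see also Nielsen, 2000) the chain $\{\btheta_n^{(t)}\}$ is aperiodic.

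The hardest part will be the non-degeneracy underpinning irreducibility, namely showing that the pushforward of the full-support predictive density under the estimation map has an absolutely continuous component on a common open set (or common stratum). For smooth estimators such as maximum likelihood or covariance selection this reduces to a routine Jacobian rank condition; but for constrained or thresholded estimators the image can collapse onto lower-dimensional sets whose geometry varies with $\btheta$, and identifying a single reference measure that works uniformly over all starting states is delicate. In the spirit of Nielsen (2000), I would resolve this by restricting attention to the recurrent part of the parameter space on which the selected support pattern stabilizes, and establishing both irreducibility and aperiodicity on that effective state space.
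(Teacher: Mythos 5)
The paper does not actually print a proof of this lemma: it states that Lemmas \ref{lem1} and \ref{lem2} ``can be proved in the same way as in Nielsen (2000)'' and omits the details. Your proposal reconstructs exactly the argument that reference uses for the stochastic EM chain --- the one-step kernel is the pushforward of the full-support predictive density $h(\cdot\mid\bXobs,\btheta)$ under the (measurable) estimation map, irreducibility follows because every starting state charges a common reference set, and aperiodicity follows from a one-step minorization whose measure charges the minorizing set --- so in approach you are aligned with what the authors intend, and your sketch is in fact more explicit than the source.

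Two caveats are worth recording. First, your aperiodicity step asserts $P(\btheta,\cdot)\ge\epsilon\,\mathrm{Leb}|_{U}(\cdot)$ uniformly over \emph{all} $\btheta\in\Theta_n$. On an unbounded parameter space this is generally false (as $\btheta$ moves off to infinity the predictive density, and hence its pushforward, need not keep charging a fixed $U$ with density bounded below), and it is also more than you need: it suffices to obtain the minorization on a compact set $C$ with the minorizing measure giving $C$ positive mass --- compactness together with the continuity underlying the weak Feller property of Lemma \ref{lem2} yields the uniform $\epsilon$ on $C$ --- after which the greatest-common-divisor argument goes through verbatim. Second, the non-degeneracy of the pushforward that you flag as the hard point is genuinely delicate for the estimators the paper actually uses in its C-step (MCP, SCAD, sure screening, $\psi$-learning), whose images concentrate on unions of coordinate strata that vary with the imputed data; neither the paper nor Nielsen (2000), which treats the low-dimensional MLE case, resolves this, so your plan to restrict to a stratum or to the recurrent part where the support pattern stabilizes is the right instinct but remains, as written, a plan rather than a completed step.
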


\begin{lemma} \label{lem2}
If (A1) holds, then the Markov chain $\{\btheta_n^{(t)} \}$ has the weak Feller property,
 and any compact subsets of $\Theta$ are small.
\end{lemma}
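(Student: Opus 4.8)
The plan is to read off the one–step transition kernel of the chain and verify the two claims separately. Recall that one IC iteration maps the current value $\btheta$ to $\btheta'=\mM(\bXobs,\tbXmis)$, where $\tbXmis$ is drawn from the imputation density $h(\tbxmis|\bXobs,\btheta)$ and $\mM$ denotes the deterministic C-step map returning the (penalized) maximizer of $\hat{G}_n(\cdot|\tbx,\btheta)$. Thus the kernel is
\begin{equation} \label{kerneq}
 P(\btheta,A)=\int \mathbf{1}\{\mM(\bXobs,\tbxmis)\in A\}\, h(\tbxmis|\bXobs,\btheta)\, d\tbxmis,
\end{equation}
and the only randomness enters through the imputation step. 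Under (A1) the density $h(\tbxmis|\bXobs,\btheta)$ is constructed from $f_{\btheta}$ and is therefore continuous in $\btheta$, while the uniqueness in (A4) makes $\mM$ a measurable function of the completed data; these two facts drive both parts.

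For the weak Feller property I would fix $g\in C_b(\Theta)$ and write, from (\ref{kerneq}),
\[
 Pg(\btheta)=\int g\big(\mM(\bXobs,\tbxmis)\big)\, h(\tbxmis|\bXobs,\btheta)\, d\tbxmis .
\]
Given $\btheta_k\to\btheta$, continuity of $h$ yields pointwise convergence $h(\tbxmis|\bXobs,\btheta_k)\to h(\tbxmis|\bXobs,\btheta)$; since these are probability densities, Scheff\'e's lemma upgrades this to $L_1$ convergence. As $g\circ\mM$ is bounded by $\|g\|_{\infty}$, it follows that $|Pg(\btheta_k)-Pg(\btheta)|\le \|g\|_{\infty}\int|h(\tbxmis|\bXobs,\btheta_k)-h(\tbxmis|\bXobs,\btheta)|\,d\tbxmis\to 0$, so $P$ maps $C_b(\Theta)$ into itself and the chain is weak Feller.

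For the small-set claim I would, rather than appealing to the general $T$-chain characterization of petite sets, produce a one-step minorization directly from (\ref{kerneq}). Fix a compact $C\subset\Theta$. Assuming $h(\tbxmis|\bXobs,\btheta)$ is jointly continuous and strictly positive, choose a compact $D$ of positive Lebesgue measure on which $\underline h:=\inf_{\btheta\in C,\,\tbxmis\in D} h(\tbxmis|\bXobs,\btheta)>0$; set $\delta=\underline h\,\mathrm{Leb}(D)>0$ and let $\nu$ be the law of $\mM(\bXobs,\tbxmis)$ when $\tbxmis$ has the normalized density $\delta^{-1}\underline h\,\mathbf{1}_D$. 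Then for every $\btheta\in C$ and every measurable $A$,
\[
 P(\btheta,A)\ \ge\ \int \mathbf{1}\{\mM(\bXobs,\tbxmis)\in A\}\,\underline h\,\mathbf{1}_D(\tbxmis)\, d\tbxmis\ =\ \delta\,\nu(A),
\]
with $\nu$ independent of $\btheta\in C$. This is exactly the minorization defining a (one-step) small set; combined with the $\psi$-irreducibility and aperiodicity of Lemma \ref{lem1}, it shows that every compact set is small. Note that the sparsity-inducing nature of $\mM$ may make $\nu$ singular (supported on sparse configurations), but this does not affect the argument, since $\nu$ need only be a probability measure.

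The main obstacle is the uniform lower bound $\underline h>0$: it hinges on the imputation density being jointly continuous and bounded away from zero on a set of positive measure, uniformly over the compact set $C$. For the Gaussian imputation schemes used in the examples (e.g.\ (\ref{imputeq1})) this is immediate, since $h$ is a Gaussian density whose mean and variance depend continuously on $\btheta$; in general it requires only a mild regularity-and-support condition on $h$. A secondary point is the measurability of $\mM$ and of the indicator in (\ref{kerneq}), which follows from the continuity of $\log f_{\btheta}$ in (A1) together with the uniqueness guaranteed by (A4).
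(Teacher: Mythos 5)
Your argument is sound, but it is worth knowing that the paper does not actually write out a proof of this lemma: it states that Lemma \ref{lem1} and Lemma \ref{lem2} ``can be proved in the same way as in Nielsen (2000)'' and omits the details. Your weak Feller argument (pointwise continuity of $h(\cdot|\bXobs,\btheta)$ in $\btheta$ from (A1), upgraded to $L_1$ convergence by Scheff\'e, then dominated by $\|g\|_\infty$) is essentially the standard one used there; note that since you never use continuity of $g$, you are in fact proving the strong Feller property, which is more than the lemma claims but harmless. Where you genuinely diverge is the small-set claim: the route implicit in Nielsen/Meyn--Tweedie is weak Feller $+$ the irreducibility of Lemma \ref{lem1} (with the support of the irreducibility measure having nonempty interior) $\Rightarrow$ $T$-chain $\Rightarrow$ compact sets petite $\Rightarrow$ petite sets small under aperiodicity, whereas you build an explicit one-step minorization $P(\btheta,\cdot)\ge\delta\,\nu(\cdot)$ on each compact $C$. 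Your construction is correct as written (the possible singularity of $\nu$ is indeed irrelevant, since a small set only requires a non-trivial minorizing measure), and it is more elementary and self-contained -- it does not even need Lemma \ref{lem1}. What it costs you is the extra hypothesis you yourself flag: a uniform positive lower bound $\underline h>0$ for the imputation density over $C\times D$, which is \emph{not} a consequence of (A1) and is stronger than what the $T$-chain route requires; similarly, your appeal to (A4) for measurability of $\mM$ goes beyond the lemma's stated hypothesis (a measurable-selection argument under (A1) alone would be cleaner). If you state the joint continuity and positivity of $h$ as an explicit additional regularity condition -- automatic for the Gaussian imputation schemes of Sections 3 and 4 -- your proof stands as a valid, and arguably more transparent, alternative to the cited one.
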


%\begin{lemma} \label{DCT} ({\it Dominated Convergence Theorem}) If for some random variable $Z$, 
% $|X_n| \leq |Z|$ for all $n$ and $E|Z|<\infty$, then $X_n \stackrel{d}{\to} X$ implies that 
% $E X_n \to E X$. 
%\end{lemma} 
%The proof of this lemma is based on the Skorohod Representation Theorem and can be 
%found in many textbooks. 
 
% \begin{proof} Fatou's lemma states that 
% \[
%  E \lim_n \inf |X_n| \leq \lim_n \inf E|X_n|.
%  \]
% A second application of Fatou's lemma to the nonnegative random variable $|Z|-|X_n|$ implies 
% \[
%  E|Z|-E \lim_n\sup |X_n| \leq E |Z|- \lim_n\sup E|X_n|.
%  \]
%  Because $E|Z|<\infty$, subtracting $E|Z|$ preserves this inequality, so we obtain 
%  \[
%  \lim_n\sup E|X_n| \leq E \lim_n\sup |X_n|.
%  \]
%  Therefore, 
% \[
%   E \lim_n \inf |X_n| \leq \lim_n \inf E|X_n| \leq  \lim_n\sup E|X_n| \leq E \lim_n\sup |X_n|.
% \]
% Hence, the proof would be complete if $|X_n| \stackrel{a.s.}{\to} |X|$. This is where we invoke the 
% Skorohod Representation Theorem: Because there exists a sequence $Y_n$ that does converge almost 
% surely to $Y$, having the same distributions and expectations as $X_n$ and $X$, the above 
%  argument shows that $E Y_n \to E Y$, hence $E X_n \to E X$, completing the proof.
% \end{proof}  

If (A1) holds and $\Theta_n$ is restricted to a compact set,
 then the Markov chain $\{\btheta_n^{(t)}\}$ is ergodic.
Here we would like to establish the ergodicity of 
the Markov chain $\{\btheta_n^{(t)}\}$ under a more general scenario $\Theta_n=\mR^p$. 
This can be done by verifying a drift condition. 
Similar to Nielsen (2000), we choose the negative log-likelihood function 
 of the observed data as the drift function, motivated by the drift in the EM 
 algorithm towards high-density areas. 

\begin{theorem} \label{them1}
If (A1)--(A3) hold, then $\{\btheta_n^{(t)} \}$ is almost surely
 ergodic for sufficiently large $n$.
\end{theorem}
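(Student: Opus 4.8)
The plan is to invoke the standard Foster--Lyapunov criterion for positive Harris recurrence (and hence ergodicity) of Markov chains. Lemma \ref{lem1} and Lemma \ref{lem2} already furnish the structural prerequisites: the chain $\{\btheta_n^{(t)}\}$ is $\phi$-irreducible and aperiodic, enjoys the weak Feller property, and every compact subset of $\Theta_n = \mR^p$ is small. It therefore remains only to exhibit a drift function $V$ and a compact set $C$ off which the one-step expected drift is strictly negative. Following Nielsen (2000), I would take the (suitably shifted, nonnegative) negative observed-data log-likelihood
\[
 V(\btheta) = c - \frac{1}{n}\log f(\bXobs\,|\,\btheta) = c - \frac{1}{n}\sum_{i=1}^n \log f(\xobs_i\,|\,\btheta),
\]
the rationale being that, just as in the EM recursion, each IC sweep tends to push the parameter toward regions of higher observed likelihood and hence lower $V$.

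The analytic heart of the argument is an EM-type monotonicity inequality. Decomposing the observed log-likelihood as the expected pseudo-complete log-likelihood minus an entropy term and applying Gibbs' inequality to the latter gives, for any $\btheta$ and $\vartheta$,
\[
 \frac{1}{n}\log f(\bXobs\,|\,\vartheta) - \frac{1}{n}\log f(\bXobs\,|\,\btheta) \;\ge\; \tilde{G}_n(\vartheta\,|\,\btheta) - \tilde{G}_n(\btheta\,|\,\btheta),
\]
with $\tilde{G}_n$ as in (\ref{Liangeq1}). Since the C-step returns $\btheta_n^{(t+1)}$ that approximately maximizes $\hat{G}_n(\cdot\,|\,\btheta_n^{(t)})$, the uniform law of large numbers of Theorem \ref{them0} lets me replace $\hat{G}_n$ by its population counterpart $G_n$ up to a $o_p(1)$ error that is uniform along the chain (the same device underlying the consistency results of Theorem \ref{them1new} and Corollaries \ref{cor0A}--\ref{cor0}). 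Consequently $V(\btheta_n^{(t+1)}) \le V(\btheta_n^{(t)})$ up to vanishing error, and a strict drift is extracted from the Kullback--Leibler gap: for $\btheta$ outside a compact neighborhood of the observed-data maximizer, the gain $G_n(\btheta_*^{(t)}\,|\,\btheta) - G_n(\btheta\,|\,\btheta)$ is bounded below by a positive constant, yielding
\[
 E\!\left[\, V(\btheta_n^{(t+1)}) \,\middle|\, \btheta_n^{(t)} = \btheta \,\right] \;\le\; V(\btheta) - \beta
\]
for some $\beta > 0$. The randomness of the I-step is absorbed by this conditional expectation, its fluctuations being controlled through the sub-exponential moment bound of condition (A3).

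The hard part will be making the drift uniform in the tails, that is, verifying that the expected decrease stays bounded away from zero as $\|\btheta\| \to \infty$ rather than merely that $V$ does not increase. This calls for a coercivity property of $V$ (namely that $-\log f(\bXobs\,|\,\btheta)$ diverges as $\btheta$ leaves compacts, which I would deduce from identifiability of the model together with the Glivenko--Cantelli control supplied by condition (A2)), and for the assurance that even when the chain sits far from the truth the C-step still returns an estimate that raises the likelihood. It is precisely here that the qualifiers ``almost surely'' and ``for sufficiently large $n$'' enter: the uniform convergence of Theorem \ref{them0} and the attendant consistency hold only on a sequence of events whose probability tends to one as $n$ grows, so the drift inequality, and with it the conclusion, is asserted on such events. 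Once the drift is established there, positive Harris recurrence and hence ergodicity of $\{\btheta_n^{(t)}\}$ follow from the standard drift theorem (Meyn and Tweedie, 1993), completing the proof.
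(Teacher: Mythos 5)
Your proposal follows essentially the same route as the paper's proof: the same Nielsen-style drift function $\upsilon(\btheta)=C-\tfrac{1}{n}\log f(\bXobs|\btheta)$, the same decomposition of the one-step drift into a pseudo-complete-likelihood part (controlled via the ULLN of Theorem \ref{them0}, the consistency of the C-step, and a SLLN giving a strictly positive gain $\delta$) plus an entropy part killed by Jensen's inequality, and the same final assembly of Lemmas \ref{lem1}--\ref{lem2} with the Meyn--Tweedie drift criteria to get positive Harris recurrence and ergodicity. The only notable difference is that you explicitly flag the tail uniformity of the negative drift (i.e., compactness of the exceptional set) as the remaining hard step, whereas the paper handles it by simply taking $D$ to be a compact set containing $\{\btheta:\Delta\upsilon(\btheta)\in[-b,c)\}$.
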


\begin{proof}
Let $\upsilon(\btheta)=C-\frac{1}{n}\log f(\xobs_{1},\dots,\xobs_{n}|\btheta)$, where $C$ denotes a constant such that 
 $\upsilon(\btheta) \geq 0$ for all $\btheta \in \Theta_n$.  
Since $\upsilon(\btheta)$ is 
nonnegative, it can be used to build the drift condition. Define 
\[
\begin{split}
\Delta\upsilon(\btheta)&=E_h[\upsilon(\btheta_{n}^{(t+1)})-\upsilon(\btheta_{n}^{(t)})]
=E_h[\frac{1}{n}\log f(\bxobs|\btheta_{n}^{(t)})-\frac{1}{n}\log f(\bxobs|\btheta_n^{(t+1)})] \\ 
&=E_h[\frac{1}{n}\log f(\tilde{\bx}|\btheta_n^{(t)})-\frac{1}{n}\log f(\tilde{\bx}|\btheta_{n}^{(t+1)})]
-E_h[\frac{1}{n}\log h(\tbxmis|\bxobs,\btheta_{n}^{(t)})-\frac{1}{n}\log h(\tbxmis|\bxobs,\btheta_{n}^{(t+1)})] \\
&= (I) +(II),
\end{split}
\]
where $E_h$ refers to the expectation with respect to the predictive distribution $h(\tbxmis|\bxobs,\btheta_n^{(t)})$. 

First, we consider the negative of part (I), which can be decomposed as 
\[
\begin{split}
-(I) &= E_h[ \frac{1}{n}\log f(\tilde{\bx}|\btheta_{n}^{(t+1)}) - \frac{1}{n}\log f(\tilde{\bx}|\btheta_n^{(t)})] \\
     &= E_h[ \frac{1}{n}\log f(\tilde{\bx}|\btheta_{n}^{(t+1)}) -
         \frac{1}{n}\log f(\tilde{\bx}| \tM(\btheta_{n}^{(t)})) ]  
     + E_h[  \frac{1}{n}\log f(\tilde{\bx}| \tM(\btheta_{n}^{(t)})) 
       - \frac{1}{n}\log f(\tilde{\bx}|\btheta_n^{(t)}) ], \\ 
\end{split}
\]
where the function $M(\btheta)$ is defined by 
\begin{equation} \label{mapeq}
\tM(\btheta)=\arg\max_{\btheta'} E_{\btheta} \log f(\tilde{\bx}|\btheta')=\arg\max_{\btheta'} 
 \int f(\xobs,\txmis|\btheta') f(\xobs|\btheta^*) h(\txmis|\xobs,\btheta)d\txmis d\xobs.
\end{equation}
 By the ULLN established in Theorem \ref{them0}, we have 
 \[
 \frac{1}{n}\log f(\tilde{\bx}|\btheta_{n}^{(t+1)}) -
   \frac{1}{n}\log f(\tilde{\bx}| \tM(\btheta_{n}^{(t)})) \to_p  
  G_n(\btheta_n^{(t+1)}|\btheta_n^{(t)}) - G_n(M(\btheta_n^{(t)})|\btheta_n^{(t)}). 
 \] 
 From Theorem \ref{them1new}, we have 
  $\btheta_{n}^{(t+1)} - \tM(\btheta_{n}^{(t)})  \to_p 0$. Further, by the continuity of 
 $G_n(\btheta|\btheta')$ with respect to $\btheta$, we have 
  $G_n(\btheta_n^{(t+1)}|\btheta_n^{(t)}) - G_n(M(\btheta_n^{(t)})|\btheta_n^{(t)}) \to_p 0$ and 
 thus  $\frac{1}{n}\log f(\tilde{\bx}|\btheta_{n}^{(t+1)}) -
  \frac{1}{n}\log f(\tilde{\bx}| \tM(\btheta_{n}^{(t)})) \to_p  0$. 
 Then, by the boundedness of $\log f(\tilde{\bx}|\btheta)$ (condition (A2)) 
  and the dominated convergence theorem, 
 \begin{equation} \label{proofeq11}
 E \left[ \frac{1}{n}\log f(\tilde{\bx}|\btheta_{n}^{(t+1)}) -
   \frac{1}{n}\log f(\tilde{\bx}| \tM(\btheta_{n}^{(t)})) \right] \to 0,
 \end{equation} 
 where the expectation is with respect to the joint density function of $\tilde{\bx}=(\bxobs,\tbxmis)$.  
 Note that for any $\btheta \in \Theta_n$, we have 
 \begin{equation} \label{proofeq14}  
  E_h [ \frac{1}{n} \log f(\tilde{\bx}|\btheta) ] =\frac{1}{n} \sum_{i=1}^n E_h \log f(\tilde{x_i} |\btheta) 
  \stackrel{\Delta}{=} \frac{1}{n} \sum_{i=1}^n g(\xobs_i),
 \end{equation}
 where $g(\xobs_i)$'s are mutually independent, but not necessarily identically distributed 
 due to the presence of missing data. 
 Then, by (\ref{proofeq11}), (A2) and Kolmogorov's SLLN, we have 
 \begin{equation} \label{aseq0}
  E_h \left[ \frac{1}{n}\log f(\tilde{\bx}|\btheta_{n}^{(t+1)}) -
         \frac{1}{n}\log f(\tilde{\bx}| \tM(\btheta_{n}^{(t)})) \right ] \to 0, \quad \mbox{a.s.},
 \end{equation}
  as $n \to \infty$. 
 Therefore, there exists a constant $c>0$ and a large number $N$ such that 
 \begin{equation} \label{proofeq12}
  -c< E_h\left[ \frac{1}{n}\log f(\tilde{\bx}|\btheta_{n}^{(t+1)}) -
   \frac{1}{n}\log f(\tilde{\bx}| \tM(\btheta_{n}^{(t)})) \right] < c, \quad \mbox{a.s.},
 \end{equation}
 for any $n>N$ and any $t>0$.   
 With a similar argument to (\ref{proofeq14}), by invoking Kolmogorov's SLLN, 
 it can be shown that there exists a constant $\delta>0$ such that 
\begin{equation}\label{proofeq13}
E_h[\frac{1}{n}\log f(\tilde{\bx}| \tM(\btheta_{n}^{(t)})) - \frac{1}{n}\log f(\tilde{\bx}|\btheta_n^{(t)})]  \to  \delta, \quad \mbox{a.s.},
\end{equation}
 for any $t>0$ as $n \to \infty$. 
 Combining (\ref{proofeq12}) and (\ref{proofeq13}), we have  $-c-\delta <  (I) < c$ holds almost surely
 for sufficiently large $n$.  

 Next, by Jensen's inequality, we have 
 \[
 \begin{split} 
  (II) &= E_h \left[\frac{1}{n}\log h(\tbxmis|\bxobs,\btheta_{n}^{(t+1)})-\frac{1}{n}
           \log h(\tbxmis|\bxobs,\btheta_{n}^{(t)}) \right ] 
        \leq \frac{1}{n} \log E_h\left( \frac{ h(\tbxmis|\bxobs,\btheta_{n}^{(t+1)})}{
           h(\tbxmis|\bxobs,\btheta_{n}^{(t)}) } \right)  \\
       & = \frac{1}{n} \log \int h(\tbxmis|\bxobs,\btheta_{n}^{(t+1)}) d \tbxmis =0. \\
 \end{split} 
 \]
 Combining the results of (I) and (II), we have that 
 $\Delta\upsilon(\btheta) < c$ almost surely for all $\btheta \in \Theta_n$. 
 Choose $b$ as a positive number less than $c+\delta$ and $D$ as a compact set including 
 $\{\btheta \in \Theta_{n}: \Delta \upsilon(\boldsymbol{\theta}) \in [-b,c)\}$. 
 In summary, we have
\[
\Delta \upsilon(\boldsymbol{\theta})\leq
\begin{cases}
  c,  & \btheta\in D, \\
  -b,    & \btheta\in \Theta_n\setminus D, \\
\end{cases}
\]
almost surely. Hence, the strict drift condition $V_2$ (Meyn and Tweedie, 2009, p263)
 is almost surely satisfied. 

Since $(\btheta_{n}^{(t)})_{t \in \mathbb{N}_{0}}$ also has weak Feller property (see Lemma \ref{lem2}), 
 we can further conclude that an invariant probability measure $\pi$ almost surely exists for this Markov chain 
 (Meyn and Tweedie, 2009, Theorem 12.3.4).
Since $(\btheta_{n}^{(t)})_{t \in \mathbb{N}_{0}}$ is irreducible (shown in Lemma \ref{lem1}), $D$ is
a compact set and thus a small set (shown in Lemma \ref{lem2}), and the drift condition $V_2$ is stronger than 
 the drift condition $V_1$ (Meyn and Tweedie, 2009, p189), 
 we can show that $(\btheta_{n}^{(t)})_{t \in \mathbb{N}_{0}}$ is Harris recurrent 
 (Meyn and Tweedie, 2009, Theorem 9.1.8). 
Since $(\btheta_{n}^{(t)})_{t \in \mathbb{N}_{0}}$ is irreducible and has an invariant probability measure $\pi$, 
 it is also a positive chain (Meyn and Tweedie, 2009, p 230). Therefore, it is a positive Harris recurrent 
chain (Meyn and Tweedie, 2009, p 231).
Finally, since $(\btheta_{n}^{(t)})_{t \in \mathbb{N}_{0}}$ is aperiodic (shown in Lemma \ref{lem1}) and 
 positive Harris recurrent, we can conclude that it is almost surely ergodic (Meyn and Tweedie, 2009, Theorem 13.3.3). 
\end{proof}

\paragraph{3. Proof of consistency of the IC estimator} 
 
To prove the consistency of the IC estimator, we consider the mapping defined in (\ref{mapeq}). 
 For the C-step, we have $\btheta_*^{(t)}=M(\btheta_n^{(t)})$. 
 Also, $\btheta^*$, the true value of $\btheta_n$, is a fixed point of the mapping. 
 Further, to show that the mean of the stationary distribution of the Markov chain
 forms a consistent estimate of $\btheta^*$, we make the following assumption.

\begin{itemize}
 \item[(A5)] The mapping $M(\btheta)$ is
                differentiable.  Let $\lambda_n(\btheta)$ be the largest singular value of
                $\partial M(\btheta)/\partial \btheta$.
                There exists a number $\lambda^* <1$
                such that $\lambda_n(\btheta) \leq \lambda^*$ for all $\btheta \in \Theta_n$
                for sufficiently large $n$ and almost every $\bxobs$-sequence.
 \end{itemize}

\noindent {\bf Remark R4} {\it (On contraction mapping)}  
% Let $\|\cdot\|$ denote the Euclidean-norm.  
 The condition (A5) directly implies
 \begin{equation} \label{mapeq2}
 \|M(\btheta_n^{(t)})-\btheta^*\| = \|M(\btheta_n^{(t)})-M(\btheta^*)\| \leq \lambda^* \|\btheta_n^{(t)}-\btheta^*\|,
 \end{equation}
 that is, the mapping is a contraction. We note that a continuous application
 of the mapping, i.e., setting $\btheta_n^{(t+1)}=\btheta_*^{(t)}=M(\btheta_n^{(t)})$ for all $t$, leads to
 a monotone increase of the expectation $E_{\btheta_n^{(t)}} \log f_{\btheta}(\tilde{\bx})$.
 Since $E_{\btheta_n^{(t)}} \log f_{\btheta}(\tilde{\bx})$ attains its maximum at
 $E_{\btheta^*} \log f_{\btheta^*}(\tilde{\bx})$, it is reasonable to assume that
 $M(\btheta_n^{(t)})$ is closer to $\btheta^*$ than $\btheta_n^{(t)}$. This condition should hold
 for sufficiently large $n$, at which $\btheta_*^{(t)}$'s and $\btheta^*$ are all unique
 as assumed in condition (A4).
 We note that a similar contraction condition has been used in analysis of the SEM algorithm (Proposition 3, Nielsen, 2000).
 Some other conditions can potentially be specified based on the fixed-point theory (see e.g., Khamsi and Kirk, 2001).

\begin{theorem} \label{them2}
Assume (A1)-(A5) and $\sup_{n,t} E\|\btheta_n^{(t)}\| <\infty$. 
 Then for sufficiently large $n$, sufficiently large $t$,
 and almost every  $\bxobs$-sequence,
$ \| \btheta_n^{(t)} - \btheta^* \|=o_p(1)$.
 Furthermore, the sample average of the Markov chain forms a consistent estimate of $\btheta^*$, i.e.,
 $\| \frac{1}{T} \sum_{t=1}^T \btheta_n^{(t)} - \btheta^* \|=o_p(1)$,
 as $n \to \infty$ and $T\to \infty$.
\end{theorem}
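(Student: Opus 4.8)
The plan is to reduce both assertions to a single geometric recursion that couples the contraction property~(A5) of the limiting map $M$ with the per-iteration estimation error controlled by the uniform consistency of Theorem~\ref{them1new}. Write $a_t=\|\btheta_n^{(t)}-\btheta^*\|$, recall from the C-step that $\btheta_*^{(t)}=M(\btheta_n^{(t)})$, and use that $\btheta^*$ is a fixed point of $M$, i.e.\ $M(\btheta^*)=\btheta^*$. First I would split
\[
a_{t+1}\le \|\btheta_n^{(t+1)}-M(\btheta_n^{(t)})\|+\|M(\btheta_n^{(t)})-M(\btheta^*)\|,
\]
bound the second term by $\lambda^*a_t$ via the contraction inequality~(\ref{mapeq2}), and absorb the first into $\epsilon_n:=\sup_{t}\|\btheta_n^{(t+1)}-\btheta_*^{(t)}\|$, which satisfies $\epsilon_n\to_p 0$ by Theorem~\ref{them1new}. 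The key enabling point is that Theorem~\ref{them1new} is uniform in $t$ over arbitrary paths in $\Theta_n^T$, so it applies verbatim to the realized (random) IC path; this is precisely why the uniform, rather than merely pointwise, version was established. The result is the recursion $a_{t+1}\le \epsilon_n+\lambda^* a_t$.

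Iterating this recursion and summing the geometric series $\sum_{j\ge 0}(\lambda^*)^j=1/(1-\lambda^*)$ gives the closed-form bound
\[
a_t\le (\lambda^*)^t a_0+\frac{\epsilon_n}{1-\lambda^*}.
\]
For the first assertion I would, given any tolerance, choose $t$ large so that $(\lambda^*)^t a_0$ is negligible---the hypothesis $\sup_{n,t}E\|\btheta_n^{(t)}\|<\infty$ together with Markov's inequality keeps the initial discrepancy $a_0$ finite (bounded in probability), so that $(\lambda^*)^t a_0\to_p 0$ as $t\to\infty$---and then send $n\to\infty$ so that $\epsilon_n/(1-\lambda^*)\to_p 0$. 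This yields $a_t=o_p(1)$ for sufficiently large $n$ and $t$.

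For the averaged estimator I would apply the triangle inequality $\|\frac1T\sum_{t=1}^T\btheta_n^{(t)}-\btheta^*\|\le \frac1T\sum_{t=1}^T a_t$ and insert the same closed-form bound to obtain
\[
\frac1T\sum_{t=1}^T a_t\le \frac{a_0}{T}\sum_{t=1}^T(\lambda^*)^t+\frac{\epsilon_n}{1-\lambda^*}\le \frac{a_0\lambda^*}{T(1-\lambda^*)}+\frac{\epsilon_n}{1-\lambda^*}.
\]
Letting $T\to\infty$ annihilates the first term and $n\to\infty$ drives the second to zero in probability, which is the claim. Ergodicity (Theorem~\ref{them1}) provides the complementary interpretation, guaranteeing that this time average converges to the mean of the stationary distribution, so the same bound shows that stationary mean to be consistent for $\btheta^*$.

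The step demanding the most care---and the main obstacle---is reconciling the different modes of convergence and the order of the three limits $n$, $t$, and $T$. Condition~(A5) is an almost-sure statement conditional on the $\bxobs$-sequence, whereas the control of $\epsilon_n$ from Theorem~\ref{them1new} is in probability. I would handle this by conditioning on the observed data, so that the contraction in~(\ref{mapeq2}) is in force almost surely, and then treating $\epsilon_n$ as a single random quantity tending to zero in probability; given the data and the value of $\epsilon_n$ the recursion and its iteration are entirely deterministic, which decouples the geometric-decay argument in $t$ (or $T$) from the vanishing of $\epsilon_n$ in $n$ and lets the two limits be taken in sequence.
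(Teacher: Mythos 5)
Your proof is correct, and it uses the same basic decomposition as the paper --- the triangle inequality through the intermediate point $M(\btheta_n^{(t)})$, with the second term controlled by the contraction (\ref{mapeq2}) and the first by the uniform consistency of Theorem \ref{them1new} --- but it resolves the resulting recursion differently. The paper suppresses $t$, takes expectations, and invokes stationarity of the chain to equate $E\|\btheta_n'-\btheta^*\|$ with $E\|\btheta_n-\btheta^*\|$, which collapses the recursion into the one-line bound $E\|\btheta_n-\btheta^*\|\le \frac{1}{1-\lambda^*}E\|\btheta_n'-M(\btheta_n)\|=o(1)$; it then gets the time-average claim from the ergodicity established in Theorem \ref{them1}. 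You instead iterate the recursion pathwise from an arbitrary initial point and sum the geometric series, obtaining the explicit non-asymptotic bound $a_t\le(\lambda^*)^ta_0+\epsilon_n/(1-\lambda^*)$ and its Ces\`aro version. This buys you two things: the argument is more elementary, in that neither stationarity nor ergodicity is needed for either assertion (you correctly relegate Theorem \ref{them1} to an interpretive role), and it makes the burn-in transparent --- the $(\lambda^*)^t a_0$ term quantifies exactly why ``sufficiently large $t$'' appears in the statement, something the paper's stationary-distribution argument leaves implicit. The paper's route is shorter but tacitly assumes the chain has reached stationarity before the bound applies. One point worth making explicit in your write-up: your $\epsilon_n=\sup_t\|\btheta_n^{(t+1)}-\btheta_*^{(t)}\|$ is a supremum over $t\in\{1,\ldots,T\}$ and Theorem \ref{them1new} controls it only under $\log T=o(n)$, so the two limits $T\to\infty$ and $n\to\infty$ are not fully free but must respect that coupling (as discussed in Remark R2); this does not affect the conclusion but should be stated.
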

\begin{proof}
 By Theorem \ref{them1}, the Markov chain $\{\btheta_n^{(t)}\}$ converges to a stationary distribution.
 For simplicity, we suppress the supscript $t$, let $\btheta_n$ denote the current 
 sample, and let $\btheta_n'$ denote the next iteration sample. 
 Therefore, $\|\btheta_n' - \btheta^*\| \leq \|\btheta_n' - \tM(\btheta_n) \| 
  + \| \tM(\btheta_n) - \btheta^* \|  
   \leq \|\btheta_n' - \tM(\btheta_n) \| +  \lambda^* \|\btheta_n-\btheta^*\|$, 
 where the last inequality follows from (\ref{mapeq2}). 
 Taking expectation on both sides leads to 
 \begin{equation} \label{proofeq1}
 E \|\btheta_n' - \btheta^*\|  \leq E \|\btheta_n' - \tM(\btheta_n) \| +  \lambda^* E \|\btheta_n-\btheta^*\| 
      \leq \frac{1}{1-\lambda^*} E \|\btheta_n' - \tM(\btheta_n) \|  
      = \frac{1}{1-\lambda^*} o(1)=o(1),
 \end{equation} 
 where the second inequality follows from the stationarity of the Markov chain, 
 and the first equality follows from Theorem \ref{them1new} and the existence of 
 $E \|\btheta_n\|$. 
 Finally, by Markov's inequality, we conclude the consistency of $\btheta_n^{(t)}$ as 
 an estimator of $\btheta^*$. 
 
 By (\ref{proofeq1}), we have $\|E (\btheta_n) -\btheta^* \| \leq E \|\btheta_n-\btheta^*\| =o(1)$,
 which implies that the mean of the stationary distribution of $\{\btheta_n^{(t)} \}$ converges to 
 $\btheta^*$ for sufficiently large $n$. Further, by the ergodicity of the 
 Markov chain  $\{\btheta_n^{(t)} \}$, we conclude the proof. 
\end{proof} 

\begin{corollary} \label{corMCMC1} 
 Assume (A1)-(A5), $\sup_{n,t} E\|\btheta_n^{(t)}\| <\infty$, 
 $h(\btheta)$ is a Lipschitz function on $\Theta_n$, and $\sup_{n,t}$ 
 $E\|h(\btheta_n^{(t)})\|<\infty$. 
 Then for sufficiently large $n$, sufficiently large $t$,
 and almost every  $\bxobs$-sequence,
$ \| h(\btheta_n^{(t)}) - h(\btheta^*) \|=o_p(1)$.
 Furthermore,  $\| \frac{1}{T} \sum_{t=1}^T h(\btheta_n^{(t)}) - h(\btheta^*) \|=o_p(1)$,
 as $n \to \infty$ and $T\to \infty$. 
\end{corollary}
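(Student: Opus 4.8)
The plan is to reduce the statement to Theorem \ref{them2} by exploiting the Lipschitz property of $h$, treating the pointwise claim and the ergodic-average claim separately, exactly as was done for $\btheta_n^{(t)}$ itself. Throughout I let $L$ denote the Lipschitz constant of $h$ on $\Theta_n$, so that $\|h(\btheta_n^{(t)})-h(\btheta^*)\| \le L\,\|\btheta_n^{(t)}-\btheta^*\|$.

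First, for the pointwise consistency, I would invoke the conclusion of Theorem \ref{them2}, namely $\|\btheta_n^{(t)}-\btheta^*\|=o_p(1)$ for sufficiently large $n$ and $t$ and almost every $\bxobs$-sequence. Multiplying this by the fixed constant $L$ immediately yields $\|h(\btheta_n^{(t)})-h(\btheta^*)\|=o_p(1)$, which disposes of the first assertion with essentially no new work.

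For the averaged statement, I would split via the triangle inequality into a stochastic (ergodic) piece and a deterministic bias piece,
\[
\Big\|\frac{1}{T}\sum_{t=1}^{T} h(\btheta_n^{(t)})-h(\btheta^*)\Big\|
 \le \Big\|\frac{1}{T}\sum_{t=1}^{T} h(\btheta_n^{(t)})-E_\pi[h(\btheta_n)]\Big\|
 + \big\|E_\pi[h(\btheta_n)]-h(\btheta^*)\big\|,
\]
where $\pi$ is the stationary distribution whose existence, positive Harris recurrence, and ergodicity were established in Theorem \ref{them1}. For the first term, the hypothesis $\sup_{n,t}E\|h(\btheta_n^{(t)})\|<\infty$ guarantees $h(\btheta_n)\in L^1(\pi)$ under stationarity, so the ergodic theorem for positive Harris chains makes this term vanish as $T\to\infty$ for each sufficiently large $n$. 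For the bias term, the Lipschitz bound combined with the key inequality (\ref{proofeq1}) from the proof of Theorem \ref{them2} gives
\[
\big\|E_\pi[h(\btheta_n)]-h(\btheta^*)\big\|
 \le E_\pi\|h(\btheta_n)-h(\btheta^*)\|
 \le L\,E_\pi\|\btheta_n-\btheta^*\| = o(1),
\]
so the bias vanishes as $n\to\infty$.

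The hard part will be the coordination of the two limits $n\to\infty$ and $T\to\infty$: the ergodic term is controlled only after fixing $n$ and letting $T\to\infty$, whereas the bias term requires $n\to\infty$. I would handle this by noting that for each fixed large $n$ the stochastic term is $o_p(1)$ in $T$ while the bias is a deterministic $o(1)$ in $n$, so that along any joint divergence of $(n,T)$ the sum is $o_p(1)$. Care is needed to verify that the integrability supplied by $\sup_{n,t}E\|h(\btheta_n^{(t)})\|<\infty$ is uniform enough to keep the ergodic averaging valid as $n$ grows; this is precisely the role that hypothesis is designed to play, and I do not anticipate further technical difficulties beyond making that uniformity explicit.
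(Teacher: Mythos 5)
Your proposal is correct and follows essentially the same route as the paper: the paper's own proof of Corollary \ref{corMCMC1} is a one-line reduction ("follows from the definition of Lipschitz function and the proof of Theorem \ref{them2}"), and your two steps --- transferring the pointwise consistency through the Lipschitz constant, then combining the ergodic theorem for the positive Harris chain with the bias bound $E_\pi\|h(\btheta_n)-h(\btheta^*)\|\le L\,E_\pi\|\btheta_n-\btheta^*\|=o(1)$ from (\ref{proofeq1}) --- are exactly the content of that reduction, spelled out. No gaps; if anything you are more explicit than the paper about the coordination of the limits in $n$ and $T$.
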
 
\begin{proof} The proof follows from the definition of Lipschitz function and 
   the proof of Theorem \ref{them2}. 
\end{proof}

\paragraph{4. Proof of ergodicity of the Markov chain for the ICC algorithm }

\begin{theorem} \label{them3}
 If (A1)-(A3) hold, the Markov chain $\{(\btheta_n^{(t,1)}, \ldots, \btheta_n^{(t,k)})\}$ is almost surely
 ergodic for sufficiently large $n$.
\end{theorem}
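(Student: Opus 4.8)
The plan is to follow the same three-part strategy used in the proof of Theorem \ref{them1}, adapting each ingredient to the blockwise structure of the CC-step. First I would observe that Lemma \ref{lem1} and Lemma \ref{lem2} transfer essentially verbatim to the ICC chain: irreducibility and aperiodicity follow from the fact that the I-step injects fresh randomness at every iteration, while the weak Feller property and the smallness of compact sets rely only on condition (A1) and the continuity of the consistent-estimation map. Since the CC-step is a finite composition of continuous conditional estimation maps, the composite update inherits continuity, so the same measure-theoretic properties hold for the interleaved chain $\{(\btheta_n^{(t,1)}, \ldots, \btheta_n^{(t,k)})\}$.

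The substantive work is verifying the drift condition. I would again take the drift function $\upsilon(\btheta) = C - \frac{1}{n}\log f(\bxobs|\btheta)$ and decompose $\Delta\upsilon(\btheta) = (I) + (II)$ exactly as in Theorem \ref{them1}, where $(II)$ is the predictive-density difference controlled by Jensen's inequality so that $(II) \le 0$, and $(I)$ is the complete-data log-likelihood difference. The one genuinely new element appears in the treatment of $(I)$: in the IC algorithm the updated estimate converged to the single map $\tM(\btheta_n^{(t)})$, whereas here it converges to the composite map obtained by chaining the $k$ conditional maximizations. The crucial fact is the ECM ascent property of Meng and Rubin (1993): each conditional maximization cannot decrease $E_{\btheta_n^{(t)}}\log f(\tilde{\bx}|\cdot)$, so the composite map $\tM$ satisfies $G_n(\tM(\btheta_n^{(t)})|\btheta_n^{(t)}) \ge G_n(\btheta_n^{(t)}|\btheta_n^{(t)})$, with a strict gap bounded below by a constant $\delta>0$ away from the fixed point under a blockwise analogue of condition (A4). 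Combining this with the ULLN of Theorem \ref{them0} and the uniform consistency of the conditional estimates (the ICC counterpart of Theorem \ref{them1new}), I would reproduce the bounds $-c-\delta < (I) < c$ almost surely for sufficiently large $n$, yielding the drift inequality $\Delta\upsilon(\btheta) \le c$ on a compact set $D$ and $\Delta\upsilon(\btheta) \le -b$ outside it.

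With the strict drift condition $V_2$ in hand, the conclusion follows from the Meyn and Tweedie (2009) machinery exactly as in Theorem \ref{them1}: existence of an invariant probability measure from the weak Feller property together with the drift, Harris recurrence from irreducibility combined with the smallness of $D$, positivity from the invariant measure, and finally almost sure ergodicity from aperiodicity and positive Harris recurrence.

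I expect the main obstacle to be establishing the uniform consistency of the \emph{interleaved} conditional estimates together with the strict ascent gap $\delta$. Because block $j$ is updated conditional on the already-updated blocks $1,\ldots,j-1$ and the stale blocks $j+1,\ldots,k$, the relevant uniform law of large numbers must hold uniformly over these partially-updated configurations, and one must verify that no single conditional maximization leaves the objective flat. This is precisely where a blockwise strengthening of (A4), in concert with the ECM monotonicity, carries the argument and prevents the composite map from degenerating into a non-contracting update.
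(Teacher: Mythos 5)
Your proposal follows essentially the same route as the paper, which states only that Theorem \ref{them3} ``can be proved in a similar way to Theorem \ref{them1}'' (with details deferred to the Supplementary Material): you reuse the drift function $\upsilon(\btheta)=C-\frac{1}{n}\log f(\bxobs|\btheta)$, the $(I)+(II)$ decomposition with Jensen's inequality, the ULLN and uniform-consistency ingredients, and the Meyn--Tweedie machinery, with the only new element being the composite map $M=M_k\circ\cdots\circ M_1$ and its ECM-style monotone ascent, exactly as the paper itself indicates in the remark following condition (A5$'$). Your identification of the interleaved uniform consistency and the strict ascent gap as the delicate points is a reasonable and accurate reading of where the adaptation requires care.
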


This theorem can be proved in a similar way to Theorem \ref{them1} with the detail 
given in the Supplementary Material.

\paragraph{5. Proof of consistency of the ICC estimator}

\begin{itemize}
\item[(A5$'$)] Let $M_i$ denote the mapping of the $i$th part of the CC-step, i.e.,
     $\btheta_*^{(t,i)}=M_i(\btheta_n^{(t+1,1)}, \ldots, \btheta_n^{(t+1,i-1)},$
     $\btheta_n^{(t,i)}, \ldots, \btheta_n^{(t,k)})$.
     Let $M=M_k\circ M_{k-1}\circ \ldots\circ M_1$ denote the joint mapping of $M_1,\ldots, M_k$.
     Let $\lambda_n(\btheta)$ denote the largest singular value of $\partial M(\btheta)/\partial \btheta$.
     There exists a number $\lambda^*<1$ such that
     $\lambda_n(\btheta) \leq \lambda^*$ for all $\btheta \in \Theta_n$,
     all sufficiently large $n$, and almost every $\bxobs$-sequence.
\end{itemize}

 This condition is reasonable: It is easy to see that a continuous application
 of the mapping $M$, i.e., applying $M_i$'s in a circular manner, leads to
 a monotone increase of the function $E_{\btheta_n^{(t)}} \log f_{\btheta}(\tilde{\bx})$.
 Similar to Theorem \ref{them2}, we can
 prove the following theorem with the detail given in 
 the Supplementary Material.

\begin{theorem} \label{them4}
Assume (A1)-(A4), (A5$'$) and $\sup_{n,t} E|\btheta_n^{(t)}| <\infty$.
Then for sufficiently large $n$, sufficiently large $t$,
 and almost every $\bxobs$-sequence, $\| \btheta_n^{(t)} - \btheta^* \|=o_p(1)$.
 Furthermore, the sample average of the Markov chain also forms a consistent estimate of $\btheta^*$, i.e.,
$ \| \frac{1}{T} \sum_{t=1}^T \btheta_n^{(t)} - \btheta^* \|=o_p(1)$,
 as $n\to \infty$ and $T\to \infty$.
 \end{theorem}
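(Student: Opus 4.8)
The plan is to mirror the proof of Theorem \ref{them2}, replacing the single map $M$ of (\ref{mapeq}) by the composite map $M=M_k\circ\cdots\circ M_1$ of condition (A5$'$) and invoking Theorem \ref{them3} in place of Theorem \ref{them1}. First I would use Theorem \ref{them3} to guarantee that the chain $\{\btheta_n^{(t)}\}=\{(\btheta_n^{(t,1)},\ldots,\btheta_n^{(t,k)})\}$ is ergodic for sufficiently large $n$ and hence admits a stationary distribution. Next I would record that $\btheta^*$ is a fixed point of the composite map, $M(\btheta^*)=\btheta^*$: at the truth the imputation and data-generating densities coincide, so each conditional Kullback--Leibler minimization $M_i$ returns the corresponding true block, and composing the $k$ block maps leaves $\btheta^*$ unchanged.

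The core new ingredient is a block-wise uniform-consistency statement replacing Theorem \ref{them1new}. For each block $i$, applying the uniform-consistency argument of Theorem \ref{them1new} to the profiled conditional objective gives $\sup_t\|\btheta_n^{(t+1,i)}-\btheta_*^{(t,i)}\|\to_p 0$, where $\btheta_*^{(t,i)}=M_i(\btheta_n^{(t+1,1)},\ldots,\btheta_n^{(t+1,i-1)},\btheta_n^{(t,i)},\ldots,\btheta_n^{(t,k)})$ as in (A5$'$). Since each $M_i$ is differentiable, hence continuous, I would propagate these errors through the composition by induction on $i$: the continuity of $M_i$ in its first $i-1$ arguments together with the already-established consistency of $\btheta_n^{(t+1,1)},\ldots,\btheta_n^{(t+1,i-1)}$ shows that replacing the updated earlier blocks by their targets perturbs $\btheta_*^{(t,i)}$ by $o_p(1)$ uniformly in $t$. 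Collecting all $k$ blocks then yields the uniform bound $\sup_t\|\btheta_n^{(t+1)}-M(\btheta_n^{(t)})\|\to_p 0$, which is exactly the analogue of the one-step consistency used in Theorem \ref{them2}.

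With this in hand the remaining argument is verbatim that of Theorem \ref{them2}. Writing $\btheta_n$ for the current stationary iterate and $\btheta_n'$ for the next, the triangle inequality and the contraction supplied by (A5$'$) give $\|\btheta_n'-\btheta^*\|\le\|\btheta_n'-M(\btheta_n)\|+\|M(\btheta_n)-M(\btheta^*)\|\le\|\btheta_n'-M(\btheta_n)\|+\lambda^*\|\btheta_n-\btheta^*\|$. Taking expectations, using stationarity $E\|\btheta_n'-\btheta^*\|=E\|\btheta_n-\btheta^*\|$, and solving yields $E\|\btheta_n-\btheta^*\|\le(1-\lambda^*)^{-1}E\|\btheta_n'-M(\btheta_n)\|=o(1)$, where the moment condition $\sup_{n,t}E\|\btheta_n^{(t)}\|<\infty$ upgrades the $o_p(1)$ of the previous paragraph to convergence in mean. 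Markov's inequality then delivers $\|\btheta_n^{(t)}-\btheta^*\|=o_p(1)$, while $\|E(\btheta_n)-\btheta^*\|\le E\|\btheta_n-\btheta^*\|=o(1)$ shows the stationary mean converges to $\btheta^*$; ergodicity from Theorem \ref{them3} finally identifies the time-average $\frac1T\sum_{t=1}^T\btheta_n^{(t)}$ with this mean, completing the proof.

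The main obstacle I anticipate is the inductive propagation of block-wise consistency through the composite map: one must ensure that the continuity moduli of the $M_i$'s do not degrade as the dimension $p$ grows, so that the accumulated block errors remain $o_p(1)$ uniformly over the (nearly exponentially many) iterates $t$. This is precisely where the differentiability and bounded-Jacobian content of (A5$'$), together with the metric-entropy control of condition (A2), must be deployed with care.
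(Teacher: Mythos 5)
Your proposal is correct and follows essentially the same route the paper takes: the paper explicitly states that Theorem \ref{them4} is proved "similar to Theorem \ref{them2}" (with details deferred to the Supplementary Material), i.e., by replacing the single map $M$ of (\ref{mapeq}) with the composite map of (A5$'$), invoking Theorem \ref{them3} for ergodicity, and repeating the contraction-plus-stationarity argument. Your added care in propagating the block-wise one-step consistency through the composition $M_k\circ\cdots\circ M_1$ via the Lipschitz control supplied by the bounded Jacobian in (A5$'$) is exactly the extra ingredient the block structure requires.
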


\begin{corollary} \label{corMCMC2}
 Assume (A1)-(A4), (A5$'$), $\sup_{n,t} E\|\btheta_n^{(t)}\| <\infty$,
 $h(\btheta)$ is a Lipschitz function on $\Theta_n$, and $\sup_{n,t}$
 $E\|h(\btheta_n^{(t)})\|<\infty$.
 Then for sufficiently large $n$, sufficiently large $t$,
 and almost every  $\bxobs$-sequence,
$ \| h(\btheta_n^{(t)}) - h(\btheta^*) \|=o_p(1)$.
 Furthermore, $\| \frac{1}{T} \sum_{t=1}^T h(\btheta_n^{(t)}) - h(\btheta^*) \|=o_p(1)$,
 as $n \to \infty$ and $T\to \infty$.
\end{corollary}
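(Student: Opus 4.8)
The plan is to follow the same route as Corollary \ref{corMCMC1}, substituting Theorem \ref{them4} for Theorem \ref{them2}, since the former already supplies the parameter-level convergence for the ICC chain. Let $L$ be the Lipschitz constant of $h$, so that $\|h(\btheta_1)-h(\btheta_2)\| \leq L\|\btheta_1-\btheta_2\|$ for all $\btheta_1,\btheta_2 \in \Theta_n$. The two assertions of the corollary---pointwise convergence of $h(\btheta_n^{(t)})$ and convergence of its running average---will be treated separately, and both will reduce to transferring, through the Lipschitz bound, results that are already available for $\btheta_n^{(t)}$ itself.

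For the pointwise claim I would simply invoke Theorem \ref{them4}, which under (A1)-(A4), (A5$'$) and $\sup_{n,t} E\|\btheta_n^{(t)}\|<\infty$ gives $\|\btheta_n^{(t)}-\btheta^*\|=o_p(1)$ for sufficiently large $n$ and $t$. The Lipschitz bound then yields
\[
\|h(\btheta_n^{(t)})-h(\btheta^*)\| \leq L\,\|\btheta_n^{(t)}-\btheta^*\| = o_p(1),
\]
which is immediate and needs no further estimate.

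For the averaged claim I would split the error through the stationary expectation. Let $\pi$ denote the invariant distribution of the ICC chain, whose existence and associated ergodicity are guaranteed by Theorem \ref{them3}, and write $E_\pi$ for expectation under $\pi$. The triangle inequality gives
\[
\left\| \frac{1}{T} \sum_{t=1}^T h(\btheta_n^{(t)}) - h(\btheta^*) \right\|
\leq \left\| \frac{1}{T} \sum_{t=1}^T h(\btheta_n^{(t)}) - E_\pi[h(\btheta_n)] \right\|
+ \left\| E_\pi[h(\btheta_n)] - h(\btheta^*) \right\|.
\]
For the first term, the ergodicity from Theorem \ref{them3} together with the integrability hypothesis $\sup_{n,t} E\|h(\btheta_n^{(t)})\|<\infty$ permits application of the ergodic theorem, so for each fixed large $n$ this term tends to $0$ almost surely as $T\to\infty$. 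For the second term, Jensen's inequality and the Lipschitz property give
\[
\|E_\pi[h(\btheta_n)]-h(\btheta^*)\| \leq E_\pi\|h(\btheta_n)-h(\btheta^*)\| \leq L\, E_\pi\|\btheta_n-\btheta^*\|,
\]
and the bound $E_\pi\|\btheta_n-\btheta^*\|=o(1)$ established inside the proof of Theorem \ref{them4}---the ICC analogue of (\ref{proofeq1})---forces this to vanish as $n\to\infty$. Combining the two pieces delivers the stated conclusion.

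The one technical point to handle with care is the order of the two limits: the ergodic-average term is controlled only for fixed $n$ as $T\to\infty$, whereas the stationary-mean term is controlled only as $n\to\infty$. I would therefore phrase the result as a genuine double limit, taking $T\to\infty$ first for fixed large $n$ and then letting $n\to\infty$, and I would note that the hypothesis $\sup_{n,t} E\|h(\btheta_n^{(t)})\|<\infty$ is precisely what is needed both to invoke the ergodic theorem and to keep $E_\pi\|h(\btheta_n)\|$ finite uniformly in $n$. Beyond this, everything rests on the contraction structure already exploited in Theorems \ref{them2} and \ref{them4}, so no new estimates beyond the Lipschitz transfer are required.
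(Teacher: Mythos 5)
Your proposal is correct and follows essentially the same route as the paper, which proves this corollary by combining the Lipschitz property of $h$ with the argument of Theorem \ref{them4} (itself the ICC analogue of Theorem \ref{them2}, using the contraction bound for the stationary mean and ergodicity for the time average). Your more explicit decomposition of the averaged claim into an ergodic-average term and a stationary-mean term is exactly the structure used in the proof of Theorem \ref{them2}, so no new ideas are needed beyond what you have written.
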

\begin{proof} The proof follows from the definition of Lipschitz function and
   the proof of Theorem \ref{them4}.
\end{proof}

% \begin{proof}
% Recall that $M$ has been defined as the joint mapping of $M_1, \ldots, M_k$.  
% Then $M(\btheta_n^{(t)})=M_k\circ M_{k-1} \circ \ldots \circ M_1(\btheta_n^{(t)})
% = (\btheta_*^{(t,1)}, \dots, \btheta_*^{(t,k)})$.  Note that $\btheta^*$ is still 
% a fixed-point of $M$, i.e., $\btheta^*=M(\btheta^*)$. 
% From part (ii) of Theorem \ref{them0}, $\btheta_n^{(t+1,i)}$ forms a consistent estimator of 
% $\btheta_*^{(t,i)}$ for each $i=1,\ldots, k$. Since $k$ is finite, we have 
% \[
%  \btheta_n^{(t+1)} -M(\btheta_n^{(t)}) \to 0, \quad \mbox{in probability},
% \]
% for sufficiently large $n$.  With the same arguments as in the proof of Theorem \ref{them2}, we 
% can show that 
% \[
%  E \| \btheta_n^{(t+1)} -M(\btheta_n^{(t)}) \| \to 0. 
% \]
% Since an attraction property has been assumed for the joint mapping $M$, the following results hold for 
% the ICC algorithm as $t\to \infty$ and $n\to \infty$: 
% \[
% \| \btheta_n^{(t)} - \btheta^* \|=o_p(1),
% \]
%  and 
% \[
% \| \frac{1}{T} \sum_{t=1}^T \btheta_n^{(t)} - \btheta^* \|=o_p(1),
% \]
% following the same arguments as in Theorem \ref{them2}.
% \end{proof}

 \section*{References}
 
 \begin{description}

% \item[] Barab\'asi, A. and Albert, R. (1999). Emergence of scaling in random networks.
%   {\it Science}, {\bf 286}, 509.

\item[] Besag, J. (1974). Spatial Interaction and the Statistical Analysis of Lattice Systems (with discussion). 
    \JRSSB, {\bf 36}(2), 192-225. 

% \item[] Billingsley, P. (1986). {\it Probability and Measure} (2nd Edition). John Wiley \& Sons: New York.

 \item[] Bo, T.H., Dysvik, B. and Jonassen, I. (2004). LSimpute: accurate estimation of missing values in 
         microarray data with least square methods. {\it Nucleic Acids Research}, {\bf 32}:e34. 

\item[]  Buuren, S. and Groothuis-Oudshoorn, K. (2011). mice: Multivariate imputation by chained equations in R. 
         {\it Journal of statistical software}, {\bf 45}(3).

% \item[] Bowden, R. (1973). The theory of  parametric identification. {\it Econometrica}, {\bf 41}, 1069-1074.

 \item[] Cai, J.-F., Cand\`es, E. and Shen, Z. (2010). A singular value thresholding algorithm for matrix 
         completion. {\it SIAM Journal on Optimization}, {\bf 20}, 1956-1982. 

\item[] Castillo, I., Schmidt-Hieber, J. and van der Vaart, A.W. (2015). Bayesian linear regression with 
       sparse priors. \ANNALS, {\bf 43}, 1986-2018.

\item[] Celeux, G. and Diebolt, J. (1985). The SEM algorithm: a probabilistic teacher algorithm derived from the 
        EM algorithm for the mixture problem. {\it Computational Statistics Quarterly}, {\bf 2}, 73-82. 

% \item[] Chen, X. and Xie, M. (2014). A split-and-conquer approach for analysis of extraordinarily large data.
%         {\it Statistica Sinica}, {\bf 24}, 1655-1684.

 \item[] Dempster, A.P. (1972). Covariance selection. {\it Biometrics}, 28, 157-175.

 \item[] Dempster, A.P., Laird, N., and Rubin, D.B. (1977). Maximum likelihood from incomplete data 
         via the EM algorithm. \JRSSB, {\bf 39}, 1-38. 

\item[] Dobra, A., Hans, C., Jones, B., Nevins, J.R., Yao, G., and West, M. (2004).
         Sparse graphical models for exploring gene expression data. {\it Journal of Multivariate Analysis},
         {\bf 90}, 196-212.

\item[] Efron, B. (2004). Large-scale  simultaneous  hypothesis  testing: The choice of a null hypothesis. 
 \JASA, {\bf 99}, 96-104. 

% \item[] Fan, J., Feng, Y., Saldana, D.F., Samworth, R., and Wu, Y. (2016). Sure Independence Screening: 
%         R Package {\it SIS}. Downloadable at https://cran.r-project.org/web/packages/SIS. 

\item[] Fan, J. and Li, R. (2001). Variable selection via nonconcave penalized likelihood and its oracle properties.
   \JASA, {\bf 96}, 1348-1360.

\item[] Fan, J. and Lv, J. (2008). Sure Independence Screening for Ultrahigh Dimensional Feature Space.
        \JRSSB, {\bf 70}, 849-911.

\item[] Fan, J. and Song, R. (2010). Sure independence screening in generalized linear model with NP-dimensionality.
        \ANNALS, {\bf 38}, 3567-3604. 

\item[] Fan, Y. and Li, R. (2012). Variable selection in linear mixed effects models. 
        {\it The Annals of Statistics}, {\bf 40}, 2043-2068.

\item[] Firth, D. (1993). Bias reduction of maximum likelihood estimates. {\it Biometrika}, {\bf 80}(1), 27-38. 

\item[] Friedman, J., Hastie, T. and Tibshirani, R. (2008). Sparse inverse covariance estimation
 with the graphical lasso. {\it Biostatistics}, {\bf 9}, 432-441.

\item[] Garcia, R.I., Ibrahim, J.G. and Zhu, H. (2010). Variable selection for regression models 
        with missing data. {\it Statistica Sinica}, {\bf 20}, 149-165. 

 \item[] Gasch, A.P., Spellman, P.T., Kao, C.M., Carmel-Harel, O., Eisen, M.B., Storz, G., Botstein, D., 
         and Brown, P.O. (2000). Genomic expression programs in the response of yeast cells to environmental changes. 
         {\it Molecular Biology of the Cell}, {\bf 11}, 4241-4257. 

\item[] Gelman, A. and Rubin, D. B. (1992). Inference from iterative simulation using multiple sequences. 
        {\it Statistical science}, {\bf 7}(4), 457-472.

% \item[] Geyer,  C.J. (1992).  Practical  Markov  chain  Monte  Carlo  (with  discussion).
%      {\it Statistical Science}, {\bf 7}, 473-511.

\item[] Hastie, T., Tibshirani, R. and Friedman, J. (2009). {\it The Elements of Statistical Learning} (2nd edition). 
        Springer. 

\item[] He, S. (2011). Extension of SPACE: R Package `SpaceExt'. Downloadable at \\
     http://cran.r-project.org/web/packages/spaceExt. 

\item[] He, Y. and Liu, C. (2012). The dynamic 'expectation-conditional maximization either' algorithm. 
  \JRSSB, {\bf 74}, 313-336. 

% \item[] Jennrich, R.I. (1969). Asymptotic properties of non-linear least squares estimators. {\it The Annals of Mathematical
%         Statistics}, {\bf 40}, 633-643.

\item[] Johnson, V.E. and Rossell, D. (2012). Bayesian model selection in high-dimensional settings. 
        \JASA, {\bf 107}, 649-660.

\item[] Khalili, A. and Chen, J. (2007).  Variable selection in finite mixture of regression models. \JASA, 
  {\bf 102}, 1025-1038.

\item[] Khamsi, M.A. and Kirk, W.A. (2000). {\it An Introduction to Metric Spaces and Fixed Point Theory}. 
        Wiley. 

% \item[] Kolaczyk, E.D. (2009). {\it Statistical Analysis of Network Data: Methods and Models}.
%        Springer.

% \item[] Lauritzen, S. (1996). {\it Graphical Models}. Oxford: Oxford University Press.

\item[] Liang, F., Song, Q. and Qiu, P. (2015). An equivalent measure of partial correlation coefficients 
  for high-dimensional Gaussian graphical models. \JASA, {\bf 110}, 1248-1265. 

% \item[] Liang, F., Song, Q. and Yu, K. (2013). Bayesian subset modeling for high dimensional GLMs.
%        \JASA, {\bf 108}, 589-606.

\item[] Liang, F. and Zhang, J. (2008). Estimating the false discovery rate using the
  stochastic approximation algorithm.  {\it Biometrika}, {\bf 95}, 961-977.

\item[] Liu, C. and Rubin, D.B. (1994). The ECME algorithm: a simple extension of EM and ECM with faster monotone 
        convergence. {\it Biometrika}, {\bf 81}, 633-648. 

\item[] Liu, C., Rubin, D.B., and Wu, Y.N. (1998). Parameter expansion to accelerate EM: the PX-EM algorithm. 
        {\it Biometrika}, {\bf 85}, 755-770. 

\item[] Long, Q. and Johnson, B.A. (2015). Variable selection in the presence of missing data: resampling 
        and imputation. {\it Biostatistics}, {\bf 16}, 596-610.

\item[] Mazumder, R. and Hastie, T. (2012). The graphical lasso: New insights and alternatives.
  {\it Electronic Journal of Statistics}, {\bf 6}, 2125-2149.

\item[] Mazumder, R., Hastie, T. and Tibshirani, R. (2010). Spectral regularization algorithms for learning 
        large incomplete matrices. \JMLR, {\bf 99}, 2287-2322. 

\item[] McLachlan, G.J. and Krishnan, T. (2008). {\it The EM Algorithm and Extensions} (2nd edition),
        Wiley.

\item[] Meinshausen, N. and B\"uhlmann, P. (2006). High-dimensional graphs and variable selection
     with the Lasso. {\it Annals of Statistics}, {\bf 34}, 1436-1462.

\item[] Meinshausen, N. and B\"uhlmann, P. (2010). Stability selection. \JRSSB, {\bf 72}, 417-473.

 \item[] Meng, X.-L. and Rubin, D.B. (1993). Maximum likelihood estimation via the ECM algorithm: a general framework.
         {\it Biometrika}, {\bf 80}, 267-278.

\item[] Meyn, S. and Tweedie, R.L. (2009). {\it Markov Chains and Stochastic Stability} (2nd Edition). 
          Cambridge University Press. 

%\item[] Newey, W.K. (1991). Uniform convergence in probability and stochastic equicontinuity. 
%       {\it Econometrica}, {\bf 59}, 1161-1167. 

\item[] Nielsen, S.F. (2000). The stochastic EM algorithm: Estimation and asymptotic results. 
        {\it Bernoulli}, {\bf 6}, 457-489. 

\item[] Oba, S., Sato, M., Takemasa, I., Monden, M., Matsubara, K.-I., and Ishii, S. (2003). 
        A Bayesian missing value estimation method for gene expression profile data. 
        {\it Bioinformatics}, {\bf 19}, 2088-2096. 

\item[] Ouyang, M., Welsh, W.J., Georgopoulos, P. (2004). Gaussian mixture clustering and imputation of microarray data.
        {\it Bioinformatics}, {\bf 20}, 917-923.

%\item[] P\"otscher, B.M. and Prucha, I.R. (1994). Generic uniform convergence and equicontinuity concepts for random 
%        functions. {\it Journal of Econometrics}, {\bf 60}, 23-63. 

\item[] Raskutti, G., Wainwright, M.J. and Yu, B. (2011). Minimax rates of estimation for high-dimensional 
        linear regression over $l_q$-balls. {\it IEEE Transactions on Information Theory}, {\bf 57}(10), 6976-6994.

\item[] Scheetz, T.E. Kim, K.-Y., Swiderski, R.E., Philp, A.R., et al. (2006). 
        Regulation of gene expression in the mammalian eye and its relevance to eye disease. \PNAS, {\bf 103}, 14429-14434.

\item[] Song, Q. and Liang, F. (2015a). A Split-and-Merge Bayesian Variable Selection Approach for Ultra-high
 dimensional Regression. \JRSSB, 77(5), 947-972.

\item[] Song, Q. and Liang, F. (2015b). High Dimensional Variable Selection with Reciprocal $L_1$-Regularization.
           \JASA, {\bf 110}, 1607-1620.

\item[] St\"adler, N. and B\"uhlmann, P. (2012). Missing values: sparse inverse covariance estimation and an 
    extension to sparse regression. {\it Statistics and Computing}, {\bf 22}, 219-235. 

\item[] St\"adler, N., Stekhoven, D.J. and  B\"uhlmann, P. (2014). Pattern alternating maximization 
      algorithm for missing data in high-dimensional problems. 
      {\it Journal of Machine Learning Research}, {\bf 15}, 1903-1928. 

\item[] Storey, J.D. (2002). A direct approach to false discovery rates. \JRSSB, {\bf 64}, 479-498.

\item[] Tanner, M.A. and Wong, W.H. (1987). The calculation of posterior distributions by 
        data augmentation (with discussion). \JASA, {\bf 82}, 528-540. 

\item[] Tibshirani, R. (1996). Regression shrinkage and selection via the LASSO. \JRSSB, {\bf 58}, 267-288.

\item[] Troyamskaya, O., Cantor, M., Sherlock, G., Brown, P., Hastie, T., Tibshirani, R., Botstein, D.,
        and Altman, R. (2001). Missing value estimation methods for DNA microarrays.
        {\it Bioinformatics}, {\bf 17}, 520-525. 

\item[] Tseng, P. (2001). Convergence of a block coordinate descent method for nondifferentiable minimization. 
        {\it Journal of Optimization Theory and Application}, {\bf 109}(3), 475-494. 

\item[] Tseng, P. and Yun, S. (2009). A coordinate gradient descent method for nonsmooth separable minimization. 
        {\bf 117}(1-2), 387-423. 

\item[] van de Geer, S., B\"uhlmann, P., Ritov, Y., and Dezeure, R. (2014). On asymptotically 
        optimal confidence regions and tests for high-dimensional models. \ANNALS, {\bf 42}, 1166-1202. 

\item[] van der Vaart, A.W. and Wellner, J.A. (1996). {\it Weak Convergence and Empirical Processes}. 
        Springer. 

\item[] Vershynin, R. (2015). Estimation in high dimensions: A geometric perspective.
        In: Pfander G. (eds) {\it Sampling Theory, a Renaissance}, Birkhäuser, Cham, pp.3-66.

%  \item[] White, H. (1980). Nonlinear regression on cross-section data. {\it Econometrica}, {\bf 48}, 721-726.
  
% \item[] White, H. (1981). Consequences and detection of misspecified nonlinear regression models. 
%         \JASA, {\bf 76}, 419-433. 
 
% \item[] White, H. (1982). Maximum likelihood estimation of misspecified models. {\it Econometrica}, {\bf 50}, 1-25. 

\item[] Wei, G.C.G. and Tanner, M.A. (1990). A Monte Carlo implementation of the EM algorithm and the poor
        man's data augmentation algorithms. \JASA, {\bf 85}, 699-704. 

 \item[] Wu, C.F.J. (1983). On the convergence properties of the EM algorithm. \ANNALS, {\bf 11}, 95-103. 

\item[] Yu, G. and Liu, Y. (2016). Sparse regression incorporating graphical structure among predictors. 
        \JASA, {\bf 111}, 707-720.

\item[] Yuan, M. and Lin, Y. (2007). Model selection and estimation in the Gaussian graphical model.
       {\it Biometrika}, {\bf 94}, 19-35.

\item[] Zhang, C.-H. (2010). Nearly unbiased variable selection under minimax concave penalty. 
   \ANNALS, {\bf 38}, 894-942. 

\item[] Zhang, C.-H. and Zhang, S.S. (2014). Confidence intervals for low dimensional parameters 
        in high dimensional linear models. \JRSSB, {\bf 76}, 217-242. 

\item[] Zhao, Y. and Long, Q. (2013). Multiple imputation in the presence of high-dimensional data. 
        {\it Statistical Methods in Medical Research}, 1-15. 

%\item[] Zou, H. and Hastie, T. (2005). Regularization and variable selection via the elastic net. 
%        \JRSSB, {\bf 67}, 301-320.
 
\end{description}

 \end{document}